\documentclass[letterpaper, twocolumn, accepted=2024-06-24]{quantumarticle}

\pdfoutput=1

\usepackage[numbers,sort&compress]{natbib}
\usepackage{amsmath, amssymb, amsfonts, amsthm, bm, bbm, graphicx, listings, mathtools, enumerate, stmaryrd, tikz-cd, tikz, enumitem}
\usepackage{booktabs,stackrel,eczoo}
\usepackage[colorlinks=true, linkcolor=blue, citecolor=blue, urlcolor=blue]{hyperref}
\usetikzlibrary{calc, positioning}

\DeclareMathOperator{\spn}{Span}

\newlength{\nd}
\setlength{\nd}{32pt}

\usepackage{suffix}

\newcommand{\set}[1]{\left\{ #1 \right\}}
\newcommand{\abs}[1]{\left| #1 \right|}
\newcommand{\parenth}[1]{\left( #1 \right)}
\newcommand{\generate}[1]{\left\langle #1 \right\rangle}


\newcommand{\om}[0]{\omega}

\newcommand{\De}[0]{\Delta}


\newcommand{\ket}[1]{\left| #1 \right\rangle}


\newcommand{\dsC}[0]{\mathbb{C}}

\newcommand{\dsF}[0]{\mathbb{F}}
\newcommand{\dsG}[0]{\mathbb{G}}
\newcommand{\dsH}[0]{\mathbb{H}}

\newcommand{\dsL}[0]{\mathbb{L}}

\newcommand{\dsN}[0]{\mathbb{N}}

\newcommand{\dsZ}[0]{\mathbb{Z}}

\newcommand{\scC}[0]{\mathcal{C}}

\newcommand{\scG}[0]{\mathcal{G}}

\newcommand{\scP}[0]{\mathcal{P}}

\newcommand{\scS}[0]{\mathcal{S}}
\newcommand{\scT}[0]{\mathcal{T}}
\newcommand{\scU}[0]{\mathcal{U}}
\newcommand{\scV}[0]{\mathcal{V}}
\newcommand{\scW}[0]{\mathcal{W}}

\newcommand{\goursatCorrespondence}[7]{
\begin{array}{ccccc}
 & & #1 \times #1 & &  \\
 #1\times #1 & & | & & \\
 | & \longleftrightarrow & #3 \times #4 &,& #7 \\
 #2 & & | & & \\
 & & #5 \times #6 & & 
\end{array}
}

\newtheorem{proposition}{Proposition}
\newtheorem{lemma}{Lemma}
\newtheorem{theorem}{Theorem}
\newtheorem{corollary}{Corollary}

\theoremstyle{definition}
\newtheorem{definition}{Definition}
\newtheorem{example}{Example}

\DeclareMathOperator{\wt}{wt}
\DeclareMathOperator{\swt}{swt}
\DeclareMathOperator{\End}{End}
\DeclareMathOperator{\Syn}{Syn}
\DeclareMathOperator{\Par}{Par}
\DeclareMathOperator{\Alg}{Alg}

\newcommand{\vva}[1]{ { \color{red} (VVA: {#1}) }}
\WithSuffix\newcommand\vva*[1]{{\color{red} #1}}

\begin{document}

\title{Subsystem CSS codes, a tighter stabilizer-to-CSS mapping, and Goursat's Lemma}

\author{Michael Liaofan Liu}
\email{mliu24@amherst.edu}
\affiliation{Joint Center for Quantum Information and Computer Science, NIST, and University of Maryland, College Park, Maryland 20740, USA}
\affiliation{Department of Mathematics, Amherst College, Amherst, Massachusetts 01002, USA}

\author{Nathanan Tantivasadakarn}
\affiliation{Walter Burke Institute for Theoretical Physics and Department of Physics, California Institute of Technology, Pasadena, California 91125, USA}

\author{Victor V. Albert}
\affiliation{Joint Center for Quantum Information and Computer Science, NIST, and University of Maryland, College Park, Maryland 20740, USA}

\begin{abstract}

The CSS code construction is a powerful framework used to express features of a quantum code in terms of a pair of underlying classical codes.
Its subsystem extension allows for similar expressions, but the general case has not been fully explored.
Extending previous work of Aly, Klappenecker, and Sarvepalli~[\href{https://arxiv.org/abs/quant-ph/0610153}{quant-ph/0610153}], we determine subsystem CSS code parameters, express codewords, and develop a Steane-type decoder using only data from the two underlying classical codes.
Generalizing a result of Kovalev and Pryadko~[\href{https://doi.org/10.1103/PhysRevA.88.012311}{Phys.~Rev.~A \textbf{88} 012311 (2013)}], we show that any subsystem stabilizer code can be ``doubled" to yield a subsystem CSS code with twice the number of physical, logical, and gauge qudits and up to twice the code distance.
This mapping preserves locality and is tighter than the Majorana-based mapping of Bravyi, Terhal, and Leemhuis~[\href{https://doi.org/10.1088/1367-2630/12/8/083039}{New J.~Phys.~\textbf{12} 083039 (2010)}].
Using Goursat's Lemma, we show that every subsystem stabilizer code can be constructed from two nested subsystem CSS codes satisfying certain constraints, and we characterize subsystem stabilizer codes based on the nested codes' properties.

\end{abstract}

%

\maketitle


\section{Introduction and summary of results} \label{sec:Introduction}

An extensive theory of quantum error-correcting codes has been developed to protect quantum computers from noise \cite{KL97, AB08, S95}
\eczoo{qecc}. Qualitatively, a quantum error-correcting code describes how to ``hide" quantum information within a protected subsystem of a quantum system, such that noise within the quantum system can be detected and removed from the protected subsystem. The class of quantum error-correcting codes called \emph{stabilizer codes} \cite{G97, CRSS97} \eczoo{stabilizer} remains the most promising route to a working and robust quantum computer, in part due to extra structure that is useful for promptly detecting and correcting errors. 

We are interested in the special class of \emph{subsystem stabilizer codes} \cite{P05, KS08} \eczoo{galois_subsystem_stabilizer}, which correspond to the nontrivial normal subgroups of a Pauli group.
While subsystem stabilizer codes can be derived from subspace stabilizer codes by using only a subset of the original logical qudits to store quantum information, such codes can exhibit advantageous and/or inherently different properties from subspace stabilizer codes such as lower weights of check operators~\cite{BDPS13}, new fault-tolerant protocols~\cite{Anderson14}, the association with more general types of anyon theories in two dimensions~\cite{Ellison23}, as well as single-shot error correction in three dimensions~\cite{Bombin15,KV22}.
The extra data used to define subsystem stabilizer codes yields a richer code structure, warranting its own investigation.

A large part of our work studies \emph{subsystem CSS codes} \cite{AKS06, AK08} \eczoo{galois_subsystem_css}, where the corresponding normal subgroup admits a generating set consisting of \(X\)-type and \(Z\)-type Pauli operators.
Subsystem CSS codes generalize subspace CSS codes \cite{CS96, S96, S96B} \eczoo{qubit_css}, of which there are many useful examples \cite{K03, BM06}, recently culminating in the first asymptotically good quantum low-density parity-check codes \cite{FH21, P23, DHLV23}. Subspace CSS codes have rich connections to classical coding theory \cite{CRSS98, H18, S99} and homology theory \cite{BH14, BM07, B18}, and they admit a recovery procedure that utilizes classical linear codes to independently correct $X$-type and $Z$-type Pauli errors \cite{CS96, NC10}. Thus, it is interesting to investigate the generalization of subspace CSS codes to the subsystem setting. Indeed, many subsystem stabilizer codes are CSS \cite{B06, BC06, CZYHC20}, so a general theory of subsystem CSS codes would apply to a broad class of examples developed to date.

To study subsystem CSS codes, we adopt a linear-algebraic perspective that streamlines previous proofs and distills the essential mathematical ideas \cite{Haah13}. 
This places subsystem CSS codes on the same footing as subspace CSS codes, allowing us to generalize many important features of subspace CSS codes to the subsystem setting. 

Our main tool is the symplectic representation, which allows us to view a subsystem stabilizer code as a subspace of a vector space by considering Pauli groups modulo phase factors. In this representation, the structure and parameters of a subsystem stabilizer code are encoded in a ``tower'' of subspaces, which can then be manipulated using tools from linear algebra \cite{TA86,MG22,AC09} to yield several general results.

Our results include a mapping from subsystem stabilizer codes to subsystem CSS codes, a fleshing out of the connection between subsystem CSS codes and classical coding theory, a general recovery procedure for subsystem CSS codes, and a characterization of subsystem stabilizer codes based on Goursat's Lemma from group theory.

In Section \ref{sec:Prelim}, we review the theory of Pauli groups and subsystem stabilizer codes, and we explain how to translate these objects to the linear algebraic framework.

In Section \ref{sec:subsystemCSS}, we review the subspace CSS construction and its generalization to the subsystem case.
While the subspace CSS construction utilizes two classical linear codes \cite{HKS21} \eczoo{q-ary_linear} under certain constraints, the subsystem CSS construction utilizes two classical linear codes with no constraints.
We express the logical and gauge codewords in terms of the two underlying classical linear codes.

In Section \ref{sec:Mapping}, we show that the performance of subsystem CSS codes is comparable to that of subsystem stabilizer codes. More precisely, we show that every modular-qudit subsystem stabilizer code \eczoo{qudit_subsystem_stabilizer} can be ``doubled" to yield a subsystem CSS code with twice the number of physical, logical, and gauge qudits and up to twice the code distance. This generalizes the mapping for subspace stabilizer codes obtained in \cite[Theorem 1]{KP13}.

In Section \ref{sec:CSSRecovery}, we present an error-correction procedure for subsystem CSS codes that generalizes the well-known Steane procedure for subspace CSS codes (see, e.g., \cite{CS96,NC10,G10}). 
While specific instances of this procedure have been presented previously \cite{BC15,GS18,KV22,BNB16,BDPS13,HB21,ABKM12}, we are unaware of a presentation of the general case that uses only information from the underlying classical linear codes. 
Our error-correction procedure for subsystem CSS codes decodes up to arbitrary actions on the gauge qudits, and the corresponding classical step in our error-correction procedure decodes up to prescribed ``redundant" subcodes obtained from the underlying classical codes.
We explore an alternative decoder that generalizes the recovery procedure for the Bacon--Shor code \cite{B06} in Appendix \ref{app:Par}. 


In Section \ref{sec:SubsystemStabilizerCode}, we show that every subsystem stabilizer code can be constructed from two subsystem CSS codes satisfying certain constraints, much as every subsystem CSS code can be constructed from two classical codes. 
We investigate the structure of a given subsystem stabilizer code in terms of its two underlying subsystem CSS codes, and we define two code families that can be viewed as generalizations of subsystem CSS codes from this perspective (see Fig.~\ref{fig:Figure2}).

Finally, we provide concluding remarks in Section \ref{sec:Conclusion}. Most proofs have been deferred to Appendix \ref{app:Proofs} for readability.

\section{Preliminaries} \label{sec:Prelim}

In Section \ref{sec:PrelimPauli}, we review the construction of Pauli groups. In Section \ref{sec:PrelimPauliVS}, we explain how to interpret Pauli groups as vector spaces, and we review the related theory of bilinear forms on vector spaces. In Section \ref{sec:PrelimSSC}, we review the construction of subsystem stabilizer codes. Finally, in Section \ref{sec:PrelimSSCVS}, we explain how to interpret subsystem stabilizer codes as subspaces of vector spaces, and we define the properties of subsystem stabilizer codes in the vector space setting.

\subsection{Pauli groups}\label{sec:PrelimPauli}

Consider a quantum system with $n$ prime qudits, each of dimension $p$. The Hilbert space of this system is given by the quantization of \(G\coloneqq \dsF_p^n\), i.e.,
\begin{equation}
\dsH\coloneqq \left\langle \{\ket{g} \mid g\in G\}\right\rangle \cong {\parenth{\dsC^p}}^{\otimes n}.
\end{equation}
Acting on this space is a special class of linear operators called \emph{Pauli operators}, denoted by \(X^a,Z^b\) for any \(a,b\in G\). For any basis vector \(\ket{g} \in \dsH\), the Pauli operators act as
\begin{align}
    X^a\ket{g} &\coloneqq \ket{a+g} \text{ and} \\
    Z^b\ket{g} &\coloneqq e^{\frac{2\pi i}{p}b \cdot g}\ket{g},
\end{align}
where $\cdot$ is the usual dot product on $G$. One verifies the relations
\begin{align}
 \label{eq:PauliPropertiesIso}X^aZ^bX^cZ^d &=e^{\frac{2\pi i}{p}b \cdot c}X^{a+c}Z^{b+d} \text{ and}
 \\ X^aZ^bX^cZ^d&=e^{\frac{2\pi
 i}{p}(b \cdot c - a \cdot d)}X^cZ^dX^aZ^b
 \label{eq:PauliPropertiesCommutation}
\end{align}
among the Pauli operators. We define the \emph{weight} of a Pauli operator to be the number of qudits on which it acts nontrivially. The Pauli operators generate the \emph{Pauli group}, which is given explicitly by
\begin{equation}\label{eq:PauliGroupDefinition}
\begin{aligned}
 \scP &\coloneqq \left\langle \left\{X^a,Z^b \, \middle| \, a,b\in G\right\}\right\rangle \\ &= \left\{e^{\frac{2\pi i}{p}\kappa} X^aZ^b \, \middle| \, \kappa \in \dsF_p,a,b\in G\right\}.
\end{aligned}
\end{equation}
We remark that in other works, the phase factors included in the Pauli group may differ depending on whether $p$ is odd or even \cite{M19, Ellison23, DKV24}. This will not matter much for our work, since we will henceforth consider Pauli groups modulo phase factors.

\subsection{Pauli groups as vector spaces}\label{sec:PrelimPauliVS}

In light of Eqs.~(\ref{eq:PauliPropertiesIso}) and (\ref{eq:PauliGroupDefinition}), one sees that the Pauli group modulo phases is essentially equivalent to the vector space $G \times G$. Formally, letting
\begin{equation}
\Phi \coloneqq \left\{e^{\frac{2\pi i}{p}\kappa} \, \middle| \, \kappa \in \dsF_p\right\}
\end{equation}
denote the collection of phases in the Pauli group, one verifies that the map
\begin{equation}
\begin{aligned}\label{eq:gtimesgiso}
    \scP / \Phi &\to G \times G \\
    \Phi X^aZ^b &\mapsto (a,b)
\end{aligned}
\end{equation}
is a vector space isomorphism. For most of this work, it will be convenient to view the Pauli group modulo phases as the vector space $G \times G$. We review the necessary theory below \cite{Haah13,TA86}.

A function \(\xi: G\times G \to \dsF_p\) is called \emph{bilinear} if
\begin{equation}
\begin{aligned}
 \xi (a+\lambda \, b,c) &=\xi (a,c)+\lambda \, \xi (b,c) \text{ and} \\
 \xi (a,b+\lambda \, c) &=\xi (a,b)+\lambda \, \xi (a,c),
\end{aligned}
\end{equation}
\emph{symmetric} if
 \begin{equation}\xi (a,b)=\xi (b,a),\end{equation}
\emph{antisymmetric} if
 \begin{equation}\xi (a,b)=-\xi (b,a),\end{equation}
and \emph{nondegenerate} if
 \begin{equation}\xi (a,G)=0\implies a=0.\end{equation}
We call \(\xi\) a \emph{form} if it is bilinear and nondegenerate. Henceforth, $\xi$ will denote a symmetric or antisymmetric form, $\theta$ will denote a symmetric form, and $\om$ will denote an antisymmetric form. Every subspace \(H\leq G\) has a \emph{complement with respect to} $\xi$
\begin{equation}H^{\xi} \coloneqq \left\{a\in G \, \middle| \, \xi (a,H)=0\right\}\leq G.\end{equation}
One verifies that
\begin{align}\dim H+\dim H^{\xi} &=\dim G \text{ and} \label{eq:xicomplementdim}\\
\left(H^\xi\right)^\xi&=H.\end{align}
Moreover, if $K \leq G$ is another subspace, then
\begin{align}(H+K)^{\xi} &=H^{\xi} \cap K^{\xi} \text{ and} \\(H\cap K)^{\xi} &=H^{\xi} +K^{\xi}.\end{align}
A symmetric form $\theta$ on $G$ passes naturally to a symmetric form
\begin{equation}((a,b),(c,d)) \mapsto \theta (a,c)+\theta (b,d)\end{equation}
on $G \times G$, which by abuse of notation we also call $\theta$ (note that for the Pauli group, $\theta$ is the dot product). In addition, the original symmetric form $\theta$ on $G$ induces the antisymmetric form 
\begin{equation}
\om((a,b),(c,d)) \coloneqq \theta (b,c)-\theta (a,d)
\end{equation}
on $G \times G$ (note that for the Pauli group, $\om$ encodes the commutation of the Pauli operators as in Eq.~(\ref{eq:PauliPropertiesCommutation})). Direct products behave well with respect to these induced forms. Indeed, if \(H_X\times H_Z\leq G\times G\), then
\begin{align}
\left(H_X\times H_Z\right)^{\theta} &= H_X^{\theta
}\times H_Z^{\theta} \text{ and} \\ \label{eq:DirectProductOmega}\left(H_X\times H_Z\right)^{\om} &= H_Z^{\theta
}\times H_X^{\theta}.\end{align}

For any $a = (a_1,\ldots,a_n),b = (b_1,\ldots,b_n) \in G$, we call
\begin{equation}\wt(a)\coloneqq \left| \left\{j \in \set{1,\ldots,n} \mid a_j \neq 0\right\}\right| \end{equation}
the \emph{weight} of $a$, and we call
\begin{equation}\swt(a,b)\coloneqq \left| \left\{j\in \set{1,\ldots,n} \mid \left(a_j,b_j\right)\neq (0,0)\right\}\right| \end{equation}
the \emph{symplectic weight} of \((a,b)\). For any nontrivial \(S\subseteq G\) and \(T\subseteq G\times G\), we write
\begin{align}
\min \wt(S)&\coloneqq \underset{a\in S\setminus \{0\}}{\min}\wt(a) \text{ and} \\ \min \swt(T)&\coloneqq \underset{(a,b)\in T\setminus \{0\}}{\min} \swt(a,b).
\end{align}
Note that the weight of a Pauli operator $X^a Z^b$ is equal to the symplectic weight of the corresponding vector $(a,b)$.

Finally, for any algebraic objects $G,H$, we write
\begin{equation}\begin{tikzpicture}
    [
    node distance = \nd and \nd,
    on grid,
    baseline = (current bounding box.center)
    ]

    \node (1) {\(G\)};
    \node (2) [below=of 1] {\(H\)};
    
    \draw
        (1) -- node [right] {$l$}
        (2);
\end{tikzpicture}\end{equation}
to indicate that \(H\leq G\) and, if $G,H$ are vector spaces, that \(\dim G/H=l\).

\subsection{Subsystem stabilizer codes} \label{sec:PrelimSSC}

A subsystem stabilizer code can be interpreted as a subspace stabilizer code with some of its logical qudits relegated to gauge qudits.
This necessitates the use of a ``gauge group'' to determine the gauge qudits.

In the usual formalism for subsystem stabilizer codes \cite{P05,DKV24}, one begins with an abelian ``stabilizer group" $\scS \leq \scP$ whose intersection with $\Phi$ is trivial. Then, one considers the centralizer of $\scS$, denoted $\scC(\scS)$, which is the subgroup of all Pauli operators that commute with everything in $\scS$. Finally, one selects a ``gauge group" $\scG \leq \scC(\scS)$ whose center is $\generate{\Phi, \scS}$. These relationships can be summarized in the tower of normal subgroups
\begin{equation}\label{eq:PauliTower}
\begin{tikzpicture}
    [
    node distance = \nd and \nd,
    on grid,
    baseline = (current bounding box.center)
    ]

    \node (1) {\(\scP\)};
    \node (2) [below=of 1] {\( \scC(\scS)\)};
    \node (3) [below=of 2] {\(\scG\)};
    \node (4) [below=of 3] {\( \generate{\Phi,\scS}\)};
    \node (5) [below=of 4] {\(\Phi\)};
    
    \draw
        (1) -- 
        (2) -- 
        (3) -- 
        (4) -- 
        (5);
\end{tikzpicture}
\end{equation}
which encodes the parameters of the subsystem stabilizer code. Indeed, the number of logical (resp. gauge) qudits in $\scG$ is given by half the size of any minimal generating set for $\scC(\scS) / \scG$ (resp. $\scG / {\generate{\Phi,\scS}}$). Moreover, the distance of $\scG$ is defined to be the minimum weight of a Pauli operator in $\scC(\scS) \setminus \scG$.

Observe that every tower as in Eq.~(\ref{eq:PauliTower}) determines a gauge group $\Phi \leq \scG \leq \scP$. Conversely, every gauge group $\Phi \leq \scG \leq \scP$ determines a tower as in Eq.~(\ref{eq:PauliTower}). Thus, in the usual formalism, a \emph{subsystem stabilizer code} is defined to be a subgroup $\Phi \leq \scG \leq \scP$. Moreover, one calls \(\scG\) a \emph{subsystem CSS code} if \(\scG\) admits a generating set that consists only of \(X\)-type or \(Z\)-type Pauli operators.

Any subsystem stabilizer code $\scG$ generates a von Neumann algebra \(\Alg \scG\) of linear operators on the physical Hilbert space $\dsH$. This algebra induces an orthogonal decomposition of $\dsH$ as \cite{P05, DKV24, H17}
\begin{equation}\label{eq:StabilizerSpaceDecomposition}
\dsH\cong \underset{\tau \scC(\scS)\in \scP/\scC(\scS)}{\bigoplus} \tau (\dsL\otimes \dsG),
\end{equation}
where \(\dsL\otimes \dsG\) is the subspace of $\dsH$ fixed by the stabilizer group $\scS$. With respect to this decomposition, the algebra $\Alg \scG$ takes the form
\begin{equation}\label{eq:StabilizerAlgebraDecomposition}
\Alg \scG \cong \underset{\tau \scC(\scS)\in \scP/\scC(\scS)}{\bigoplus} \tau {\parenth{\mathbbm{1}_{\dsL} \otimes \End(\dsG)}} \tau^{-1},
\end{equation}
where \(\End(\dsG)\) denotes the collection of linear operators on \(\dsG\). We call $\scG$ a \emph{subspace stabilizer code} if the gauge space $\dsG$ is one-dimensional.

\subsection{Subsystem stabilizer codes as subspaces of vector spaces}\label{sec:PrelimSSCVS}

In this work, we find it convenient to consider the tower in Eq.~(\ref{eq:PauliTower}) modulo phase factors, since this allows us to apply tools from linear algebra to analyze subsystem stabilizer codes. To begin, recall that $\scP / \Phi \cong G \times G$, so there exists some subspace $H \leq G \times G$ corresponding to the gauge group modulo phases, $\scG / \Phi \cong H$. Now, let $\scC(\scG)$ denote the centralizer of $\scG$. Then by Eq.~(\ref{eq:PauliPropertiesCommutation}), we have $\scC(\scG) / \Phi \cong H^\om$. Since $\generate{\Phi, \scS}$ is the center of $\scG$ and $\scC(\scS)$ is the centralizer of $\generate{\Phi, \scS}$, we have
\begin{equation}\label{eq:VSTower}
\begin{tikzpicture}
    [
    node distance = \nd and \nd,
    on grid,
    baseline = (current bounding box.center)
    ]

    \node (1) {\(\scP / \Phi\)};
    \node (2) [below=of 1] {\( \scC(\scS)/ \Phi\)};
    \node (3) [below=of 2] {\(\scG/ \Phi\)};
    \node (4) [below=of 3] {\( {\generate{\Phi,\scS}}/ \Phi\)};
    \node (5) [below=of 4] {\(\Phi/ \Phi\)};

    \node (eq1) [right=of 1] {\(\cong\)};
    \node (eq2) [right=of 2] {\(\cong\)};
    \node (eq3) [right=of 3] {\(\cong\)};
    \node (eq4) [right=of 4] {\(\cong\)};
    \node (eq5) [right=of 5] {\(\cong\)};

    \node (6) [right=of eq1]{\(G \times G\)};
    \node (7) [right=of eq2] {\(H + H^\omega\)};
    \node (8) [right=of eq3] {\(H\)};
    \node (9) [right=of eq4] {\(H \cap H^\omega\)};
    \node (10) [right=of eq5] {\(0\)};
    
    \draw
        (1) --
        (2) --
        (3) --
        (4) --
        (5);

    \path
        (1) -- (eq1) -- (6)
        (2) -- (eq2) -- (7)
        (3) -- (eq3) -- (8)
        (4) -- (eq4) -- (9)
        (5) -- (eq5) -- (10);

    \draw
        (6) --
        (7) --
        (8) --
        (9) --
        (10);
\end{tikzpicture}.
\end{equation}

We claim that this tower still encodes the parameters of the original subsystem stabilizer code. To see this, note that by the third and fourth isomorphism theorems \cite{Dummit_Foote_2004}, we have
\begin{align}
    \scC(\scS) / \scG &\cong {\parenth{H+H^
    \om}} / H \text{ and} \label{eq:logicalpauli}\\
    \scG / {\generate{\Phi,\scS}} &\cong H / {\parenth{H \cap H^\om}}, \label{eq:gaugepauli}
\end{align}
so the number of logical and gauge qudits can still be detected in Eq.~(\ref{eq:VSTower}). Moreover, note that the weight of a Pauli operator is phase invariant, so the code distance can still be detected in Eq.~(\ref{eq:VSTower}). This motivates the following proposition, which describes the parameters of a subsystem stabilizer code in the vector space setting (see also \cite[Theorem 5]{KS08}).

\begin{proposition}\label{prop:VSTowerParameters}
Let \(\om\) be an antisymmetric form on \(G\times G\), let \(H\leq G\times G\), and let $n \coloneqq \dim G$. Then
\begin{equation}\label{eq:HTower}
\begin{tikzpicture}
    [
    node distance = \nd and \nd,
    on grid,
    baseline = (current bounding box.center)
    ]

    \node (1) {\(G \times G\)};
    \node (2) [below=of 1] {\(H + H^\omega\)};
    \node (3) [below=of 2] {\(H\)};
    \node (4) [below=of 3] {\(H \cap H^\omega\)};
    \node (5) [below=of 4] {\(0\)};
    
    \draw
        (1) -- node [right] {\(n-k-r\)}
        (2) -- node [right] {\(2k\)}
        (3) -- node [right] {\(2r\)}
        (4) -- node [right] {\(n-k-r\)}
        (5);
\end{tikzpicture}
\end{equation}
for some \(k,r\in \dsZ\). Moreover, we can write
\begin{equation}
d = \min \swt\left({\parenth{H+H^{\om}}}\setminus H\right)
\end{equation}
for some $d \in \dsZ$.
\end{proposition}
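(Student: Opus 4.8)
The plan is to extract all four gaps in Eq.~(\ref{eq:HTower}) from the single integer $m \coloneqq \dim(H \cap H^\om)$, leaning on two structural facts about $\om$. First, as a nondegenerate form it satisfies the complement identities recorded above: $\dim K + \dim K^\om = 2n$ and $(K^\om)^\om = K$ for every subspace $K \leq G \times G$. Second, and crucially, $\om$ is not merely antisymmetric but \emph{alternating}, i.e.\ $\om(v,v) = 0$ for all $v$; this holds for the induced commutation form $\om((a,b),(c,d)) = \theta(b,c) - \theta(a,d)$ because $\theta$ is symmetric. This alternating property is essential: for $p = 2$ a nondegenerate antisymmetric form need not be alternating, and the evenness claims below can then fail.

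First I would equate the top and bottom gaps. Using $(H + H^\om)^\om = H^\om \cap (H^\om)^\om = H^\om \cap H$ together with the dimension identity gives $\dim(H + H^\om) = 2n - m$, so that
\begin{equation}
\dim\bigl((G \times G)/(H + H^\om)\bigr) = 2n - (2n - m) = m = \dim(H \cap H^\om).
\end{equation}
Thus the topmost and bottommost gaps coincide, and it remains only to understand the two middle gaps.

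The crux is the evenness of those middle gaps, and here the alternating property does the work. The subspace $H \cap H^\om$ is precisely the radical of the restricted form $\om|_H$, since $\{h \in H : \om(h, H) = 0\} = H \cap H^\om$; hence $\om$ descends to a nondegenerate alternating form on the quotient $H/(H \cap H^\om)$. By the symplectic structure theorem \cite{TA86}, any space carrying a nondegenerate alternating form is even-dimensional, so I may write $\dim\bigl(H/(H \cap H^\om)\bigr) = 2r$. Repeating the argument verbatim with $H^\om$ in place of $H$ (its radical is again $H^\om \cap H = H \cap H^\om$) shows $\dim\bigl(H^\om/(H \cap H^\om)\bigr)$ is even, and the second isomorphism theorem identifies $(H + H^\om)/H \cong H^\om/(H \cap H^\om)$, giving $\dim\bigl((H + H^\om)/H\bigr) = 2k$.

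Finally I would check consistency and record $d$. The four gaps sum to $\dim(G \times G) = 2n$, so $2m + 2k + 2r = 2n$, i.e.\ $m = n - k - r$; this justifies labeling both the top and bottom gaps by $n - k - r$ with $k, r \in \dsZ$. For the distance, $\swt$ is integer-valued, so whenever $H^\om \not\leq H$ the set $(H + H^\om) \setminus H$ is nonempty and $d \coloneqq \min \swt\bigl((H + H^\om) \setminus H\bigr)$ is a well-defined integer (if $H^\om \leq H$ there are no nontrivial logical operators and one sets $d = \infty$ by convention). The sole genuinely nontrivial ingredient is the symplectic structure theorem underlying the evenness of $2k$ and $2r$; the remaining steps are bookkeeping with the complement identities.
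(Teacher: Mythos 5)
Your proof is correct and follows essentially the same route as the paper's: both obtain $2r$ by observing that $\om$ descends to a nondegenerate antisymmetric form on $H/(H\cap H^\om)$ and invoking the even-dimensionality of symplectic spaces, and both get the remaining gaps from the complement dimension identities (you derive the evenness of $2k$ via $(H+H^\om)/H\cong H^\om/(H\cap H^\om)$, the paper by a parity count; these are interchangeable). Your added remark that one must use the \emph{alternating} property rather than mere antisymmetry when $p=2$ is a genuine refinement of the paper's citation, and it is satisfied here since $\om((a,b),(a,b))=\theta(b,a)-\theta(a,b)=0$.
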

\begin{proof}
See Appendix \ref{app:ProofVSTowerParameters}.
\end{proof}

This discussion leads to an alternative definition of subsystem stabilizer codes in the vector space setting.

\begin{definition}
A \emph{subsystem stabilizer code} is a subspace $H \leq G \times G$. We call \(H\) a \emph{subsystem CSS code} if \(H\) is a direct product of subspaces of $G$. With \(n,k,r,d\) as in Proposition \ref{prop:VSTowerParameters}, we say that \(H\) encodes \(k\) \emph{logical qudits} and \(r\) \emph{gauge qudits} into \(n\) \emph{physical qudits}, and that $H$ has \emph{distance} $d$. To summarize these parameters, we call \(H\) an \(\left[\left[n,k,r,d\right]\right]\) code.
\end{definition}

\begin{table*}[ht]
\centering
\footnotesize

\begin{tabular}{l|c|ccc}
\toprule 
Name & Subgroup of ${\cal P}$ & \multicolumn{3}{c}{Subspace of $G\times G$}\tabularnewline
 &  & Subsystem Stabilizer & Subsystem CSS & Subspace CSS \tabularnewline
\midrule
Centralizer & ${\cal C}({\cal S})/\Phi$ & $H+H^{\omega}$ & $H_{X}+H_{Z}^{\theta}\times H_{Z}+H_{X}^{\theta}$ & $H_{Z}^{\theta}\times H_{X}^{\theta}$\tabularnewline
Gauge & ${\cal G}/\Phi$ & $H$ & $H_{X}\times H_{Z}$ & $H_{X}\times H_{Z}$\tabularnewline
Stabilizer & ${\left\langle \Phi,{\cal S}\right\rangle} /\Phi$ & $H\cap H^{\omega}$ & $H_{X}\cap H_{Z}^{\theta}\times H_{Z}\cap H_{X}^{\theta}$ & $H_{X}\times H_{Z}$\tabularnewline
\midrule 
Logical Paulis & ${\cal C}({\cal S})/{\cal G}$ & ~~$(H+H^{\omega})/H$~~ & ~~{${(H_{X}+H_{Z}^{\theta})/H_{X}\times(H_{Z}+H_{X}^{\theta})/H_{Z}}$}~~ & ~~{${H_{Z}^{\theta}/H_{X}\times H_{X}^{\theta}/H_{Z}}$}~~\tabularnewline
Gauge Paulis & ${\cal G}/{{\left\langle \Phi,{\cal S}\right\rangle}} $ & $H/(H\cap H^{\omega})$ & {$H_{X}/(H_{X}\cap H_{Z}^{\theta})\times H_{Z}/(H_{Z}\cap H_{X}^{\theta})$} & {$0$}\tabularnewline
\midrule 
Gauge-preserving & ${\cal C}({\cal G}) / \Phi$ &  $H^\omega$ &  $H_{Z}^{\theta}\times H_{X}^{\theta}$ & $H_{Z}^{\theta}\times H_{X}^{\theta}$ \tabularnewline
Bare logical Paulis~~~ & ${\cal C}({\cal G})/{\left\langle \Phi,{\cal S}\right\rangle} $ &  $H^\omega / H\cap H^{\omega}$ &  $H_{Z}^{\theta} / {\parenth{H_{X}\cap H_{Z}^{\theta}}} \times {H_{X}^{\theta} / {\parenth{H_{Z}\cap H_{X}^{\theta}}}}$ & ${H_{Z}^{\theta}/H_{X}\times H_{X}^{\theta}/H_{Z}}$\tabularnewline
\bottomrule
\end{tabular}

\caption{Dictionary between the names (first column), Pauli group representations (second column), and vector space representations of seven components of a subsystem stabilizer code (third column), subsystem CSS code (fourth column), and subspace CSS code (fifth column) (see also \cite[Table 2]{KS22}). For a subsystem stabilizer code, these components are derived from one object --- the gauge group \(H\) --- up to the phases of the stabilizer group elements. For a subsystem CSS code, these components are derived from two objects --- \(H_X\) and \(H_Z\) --- which are two unrelated classical linear codes. Imposing the relations in Eq.~(\ref{eq:css-subspace}) yields the subspace CSS construction.}

\label{tab:conversion}
\end{table*}

\section{Subsystem CSS codes}\label{sec:subsystemCSS}

In this section, we summarize what is known about subsystem CSS codes, and we fill
in some details that, to our knowledge, have been missing in the general case.

Subsystem CSS codes are constructed from two classical linear codes \(C_1\) and \(C_2\),
which are subspaces of the vector space $G=\mathbb{F}_{p}^{n}$. 
The subsystem CSS construction does not require any relations between
the two classical codes.
Our construction is the same as that in Ref.~\cite[Corollary 4]{AKS06} via the association \(C_1\leftrightarrow H_{X}\) and \(C_2 \leftrightarrow H_{Z}\); we define the gauge group to be the direct product of the two classical codes, i.e.,
\begin{equation}
    H \coloneqq H_X \times H_Z.
\end{equation}


Restricting to the subspace case requires additionally that the gauge group and stabilizer group coincide up to phases, which forces the gauge group to be abelian.
This imposes the relations
\begin{equation}\label{eq:css-subspace}
\begin{aligned}
    H_{X} \leq H_{Z}^{\theta} \quad \text{and} \quad H_{Z} \leq H_{X}^{\theta},
\end{aligned}
\end{equation}
which are the well-known relations between the two classical codes in the subspace CSS construction.

In general, $H_{X}$ does not have to be a subspace of $H_{Z}^{\theta}$, and the
overlap between the two spaces controls the parameters of the subsystem CSS code.
For example, suppose that we fix the dimensions of $H_{X}$ and $H_{Z}$,
but we choose $H_{Z}$ such that $H_{Z}^{\theta}$
has less overlap with $H_{X}$. Then the resulting subsystem
CSS code encodes more logical and gauge qudits into the same number
of physical qudits. This is demonstrated in the following proposition,
which expresses the tower in Eq.~(\ref{eq:HTower}) and the code
parameters in Proposition \ref{prop:VSTowerParameters} in terms of $H_{X}$ and $H_{Z}$.


\begin{proposition}\label{prop:CSSStructure}
Let \(H=H_X\times H_Z\leq G\times G\) be an \(\left[\left[n,k,r,d\right]\right]\) subsystem CSS code. Then
\begin{equation}\label{eq:HTowerCSS}
\begin{tikzpicture}
    [
    node distance = \nd and \nd,
    on grid,
    baseline = (current bounding box.center)
    ]

    \node (1) {\(G \times G\)};
    \node (2) [below=of 1] {\(H + H^\omega\)};
    \node (3) [below=of 2] {\(H\)};
    \node (4) [below=of 3] {\(H \cap H^\omega\)};
    \node (5) [below=of 4] {\(0\)};

    \node (eq1) [right=of 1] {\(=\)};
    \node (eq2) [right=of 2] {\(=\)};
    \node (eq3) [right=of 3] {\(=\)};
    \node (eq4) [right=of 4] {\(=\)};
    \node (eq5) [right=of 5] {\(=\)};

    \node (6) [right=of eq1]{\(G\)};
    \node (7) [right=of eq2] {\(H_X + H_Z^\theta\)};
    \node (8) [right=of eq3] {\(H_X\)};
    \node (9) [right=of eq4] {\(H_X \cap H_Z^\theta\)};
    \node (10) [right=of eq5] {\(0\)};

    \node (t1) [right=of 6] {\(\times\)};
    \node (t2) [right=of 7] {\(\times\)};
    \node (t3) [right=of 8] {\(\times\)};
    \node (t4) [right=of 9] {\(\times\)};
    \node (t5) [right=of 10] {\(\times\)};

    \node (11) [right=of t1]{\(G\)};
    \node (12) [right=of t2] {\(H_Z + H_X^\theta\)};
    \node (13) [right=of t3] {\(H_Z\)};
    \node (14) [right=of t4] {\(H_Z \cap H_X^\theta\)};
    \node (15) [right=of t5] {\(0\)};
    
    \draw
        (1) --
        (2) -- node [right] {\(2k\)}
        (3) -- node [right] {\(2r\)}
        (4) --
        (5);

    \path
        (1) -- (eq1) -- (6)
        (2) -- (eq2) -- (7)
        (3) -- (eq3) -- (8)
        (4) -- (eq4) -- (9)
        (5) -- (eq5) -- (10);

    \draw
        (6) --
        (7) -- node [right] {\(k\)}
        (8) -- node [right] {\(r\)}
        (9) --
        (10);

    \path
        (6) -- (t1) -- (11)
        (7) -- (t2) -- (12)
        (8) -- (t3) -- (13)
        (9) -- (t4) -- (14)
        (10) -- (t5) -- (15);

    \draw
        (11) --
        (12) -- node [right] {\(k\)}
        (13) -- node [right] {\(r\)}
        (14) --
        (15);
        
\end{tikzpicture}
\end{equation}
Moreover, let
\begin{equation}
d^{H_X} \coloneqq \min \wt\left({\parenth{H_X+H_Z^{\theta}}}\setminus H_X\right)
\end{equation}
be the minimum weight of a non-gauge $X$-type logical operator, and let
\begin{equation}d^{H_Z} \coloneqq \min \wt\left({\parenth{H_Z+H_X^{\theta}}}\setminus H_Z\right)\end{equation}
be the minimum weight of a non-gauge $Z$-type logical operator. Then
\begin{equation}\label{eq:SubsystemCSSD}
d = \min {\left\{ d^{H_X}, d^{H_Z} \right\}}.
\end{equation}
\end{proposition}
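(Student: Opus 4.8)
The plan is to reduce every claim to two independent statements about the classical codes $H_X$ and $H_Z$, exploiting that $H = H_X\times H_Z$ is a direct product. First I would establish the product structure of the tower. By Eq.~(\ref{eq:DirectProductOmega}) we have $H^\om = H_Z^\theta\times H_X^\theta$, and since sums and intersections of direct-product subspaces factor componentwise---$(A\times B)+(C\times D)=(A+C)\times(B+D)$ and $(A\times B)\cap(C\times D)=(A\cap C)\times(B\cap D)$---it follows at once that $H+H^\om = (H_X+H_Z^\theta)\times(H_Z+H_X^\theta)$ and $H\cap H^\om = (H_X\cap H_Z^\theta)\times(H_Z\cap H_X^\theta)$. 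Thus every node of the central tower is the direct product of the corresponding node of an ``$X$-tower'' $G\supseteq H_X+H_Z^\theta\supseteq H_X\supseteq H_X\cap H_Z^\theta\supseteq 0$ and a ``$Z$-tower'' obtained by swapping $X\leftrightarrow Z$, and the dimension gaps add.

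Next I would show that the two central gaps split \emph{evenly}, so that both intermediate labels are $k$ and both are $r$, rather than merely summing to $2k$ and $2r$. The key device is the bilinear pairing $\theta\colon H_X\times H_Z\to\dsF_p$ obtained by restricting the symmetric form. Its left and right kernels are exactly $H_X\cap H_Z^\theta$ and $H_Z\cap H_X^\theta$, and since the left rank and right rank of any bilinear pairing coincide, I obtain $\dim H_X-\dim(H_X\cap H_Z^\theta)=\dim H_Z-\dim(H_Z\cap H_X^\theta)$; call this common value $r$, which is precisely the two ``$r$'' gaps. Feeding $\dim(H_X\cap H_Z^\theta)=\dim H_X-r$ into $\dim(H_X+H_Z^\theta)/H_X=\dim H_Z^\theta-\dim(H_X\cap H_Z^\theta)$ together with $\dim H_Z^\theta=n-\dim H_Z$ from Eq.~(\ref{eq:xicomplementdim}) gives $\dim(H_X+H_Z^\theta)/H_X=n-\dim H_X-\dim H_Z+r$, and the symmetric computation yields the same value for the $Z$-tower; call it $k$. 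Consistency with the $2k$ and $2r$ totals guaranteed by Proposition~\ref{prop:VSTowerParameters} is then automatic.

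For the distance, Proposition~\ref{prop:VSTowerParameters} gives $d=\min\swt\parenth{(H+H^\om)\setminus H}$. For the lower bound, any $(a,b)$ in this set must have $a\notin H_X$ or $b\notin H_Z$, since $H=H_X\times H_Z$; because $\swt(a,b)\geq\wt(a)$ and $\swt(a,b)\geq\wt(b)$, in the first case $\swt(a,b)\geq d^{H_X}$ and in the second $\swt(a,b)\geq d^{H_Z}$, so $d\geq\min\set{d^{H_X},d^{H_Z}}$. For the upper bound I use the product structure to decouple the two types: choosing $a^\ast\in(H_X+H_Z^\theta)\setminus H_X$ of minimal weight $d^{H_X}$, the vector $(a^\ast,0)$ lies in $(H+H^\om)\setminus H$ with $\swt(a^\ast,0)=\wt(a^\ast)=d^{H_X}$, and symmetrically $(0,b^\ast)$ realizes $d^{H_Z}$; hence $d\leq\min\set{d^{H_X},d^{H_Z}}$, giving Eq.~(\ref{eq:SubsystemCSSD}).

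I expect the even splitting of the central gaps to be the only substantive step: the rank-symmetry of the $\theta$-pairing is what forces the two classical codes to contribute equally, and it is easy to assert ``by symmetry'' without justification. The product decomposition and the two distance bounds are routine once the direct-product identities for the forms are in hand; the one subtlety to flag is the degenerate case in which $(H_X+H_Z^\theta)\setminus H_X$ or $(H_Z+H_X^\theta)\setminus H_Z$ is empty (no non-gauge logical of that type), which is handled by the convention that the minimum over the empty set is $+\infty$, consistent with Proposition~\ref{prop:VSTowerParameters}.
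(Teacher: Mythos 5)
Your proposal is correct and follows essentially the same route as the paper: the tower factors componentwise via Eq.~(\ref{eq:DirectProductOmega}), the two central gaps are shown to split evenly by a duality argument (the paper uses complement-dimension counting and the second isomorphism theorem where you invoke rank symmetry of the restricted $\theta$-pairing---the same fact in different clothing), and the distance is pinned down by the same two-sided bound, with your explicit witnesses $(a^\ast,0)$ and $(0,b^\ast)$ matching the paper's upper-bound argument. Your remark about the empty-set convention for $d^{H_X}$ or $d^{H_Z}$ is a reasonable extra precaution that the paper leaves implicit.
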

\begin{proof}
See Appendix \ref{app:ProofCSSStructure}.
\end{proof}
 
The above tower (Eq.~(\ref{eq:HTowerCSS})) is summarized in the fourth column of Table~\ref{tab:conversion},
while the fifth column shows the consequences of the subspace restriction (Eq.~(\ref{eq:css-subspace})). Each subspace in these columns is a direct product, where the second factor is obtained from the first factor by switching the letters $X$ and $Z$. This illustrates the $X\leftrightarrow Z$ symmetry of the CSS construction.

Recall that the subspace CSS construction yields a simple expression for a basis
of logical codewords in terms of cosets of the underlying classical
codes \cite{CS96,NC10,G10}. Indeed, the logical $X$-type Paulis commute and generate the entire codespace (up to scalars) when acting on the all-zero logical state, so their labels --- cosets of $H_{X}$ in $H_{Z}^{\theta}$ --- form a complete set of quantum numbers with which we can label a basis of logical codewords. Since a subsystem code can be interpreted as a subspace code with some of its logical qudits relegated to gauge qudits, we can extend this construction to subsystem CSS codes.

In the subsystem case, the logical and gauge qudits are labeled by
elements of their own subspaces, $(H_{X}+H_{Z}^{\theta})/H_{X}$ and $H_{X}/(H_{X}\cap H_{Z}^{\theta})$.
A basis for the codespace is given by the collection of vectors of the form
\begin{equation}
\left|l,g\right\rangle \coloneqq \frac{1}{\sqrt{{\abs{H_X \cap H_Z^\theta}}}} \sum_{s\in H_{X}\cap H_{Z}^{\theta}}\left|l+g+s\right\rangle,
\end{equation}
where $l$ and $g$ represent elements in $(H_{X}+H_{Z}^{\theta})/H_{X}$ and $H_{X}/(H_{X}\cap H_{Z}^{\theta})$, respectively 
(see Appendix \ref{app:parameterize}). This reduces to the subspace CSS
construction upon imposing Eq.~(\ref{eq:css-subspace}).

\section{Stabilizer-to-CSS mapping}\label{sec:Mapping}

In this section, we show that every subsystem stabilizer code can be used to construct a subsystem CSS code with comparable parameters (see \cite[Theorem 1]{KP13} for the analogous result for subspace stabilizer codes). Our construction is simple to express in the notation of Sections \ref{sec:PrelimPauli} and \ref{sec:PrelimSSC}. Namely, if $\scG$ is a subsystem stabilizer code generated by $\set{X^{a_j}Z^{b_j}}_{j=1}^{2r}$, then the corresponding subsystem CSS code $\Delta(\scG)$ is generated by $\set{X^{a_j} \otimes X^{b_j}, Z^{b_j} \otimes Z^{-a_j}}_{j=1}^{2r}$ (see Examples \ref{ex:fivequbit}, \ref{ex:DoubleSemion}, and \ref{ex:Z1N}). However, to prove that the codes $\scG$ and $\Delta(\scG)$ do indeed have comparable parameters, it is convenient to work in the vector space formalism introduced in Sections \ref{sec:PrelimPauliVS} and \ref{sec:PrelimSSCVS}. Thus, we define our mapping $\Delta$ in the vector space setting, and we exhibit its key properties in the following lemma.

\begin{lemma}\label{lem:DeltaLemma}
For any $(a,b) \in G \times G$, define
\begin{equation}
    \Psi(a,b) \coloneqq (b,-a).
\end{equation}
For any $H \leq G \times G$, define
\begin{equation}\label{eq:OmegaHDefinition}\De (H)\coloneqq H\times \Psi (H) \leq (G \times G) \times (G \times G).
\end{equation}
Then for any $H,K \leq G \times G$, we have
\begin{subequations}\label{eq:deltaprops}
    \begin{align}
 \De (H+K) &=\De (H)+\De (K), \label{eq:OmegaJoin} \\
 \De (H\cap K) &=\De (H)\cap \De (K), \label{eq:OmegaMeet} \\
 \De (H^{\om}) &=\De (H)^{\om}, \text { and} \label{eq:OmegaComplement} \\
 \dim \De (H) &=2 \dim H. \label{eq:OmegaDimension}
\end{align}
\end{subequations}
That is, $\Delta$ is a lattice embedding that respects $\om$-complement and doubles dimension.
\end{lemma}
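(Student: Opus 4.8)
The plan is to build everything from the single linear map $\Psi(a,b)=(b,-a)$ on $G\times G$ and to isolate its three relevant properties, after which each of the four claims becomes a short formal deduction. Concretely, $\Psi$ is a linear automorphism of $G\times G$ with $\Psi^2=-\mathrm{id}$; it is a $\theta$-isometry, meaning $\theta(\Psi x,\Psi y)=\theta(x,y)$; and it intertwines the two forms via $\theta(x,\Psi y)=-\omega(x,y)$. Each of these is a one-line computation from the definitions of $\theta$ and $\omega$ on $G\times G$, so I would verify them first. Throughout, I read the forms $\theta,\omega$ on the doubled space $(G\times G)\times(G\times G)$ as those induced from $\theta$ on $G\times G$, matching the $2n$-qudit Pauli group of the doubled code.

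The join, meet, and dimension identities are then purely formal. Using the elementary subspace identities $(A+B)\times(C+D)=(A\times C)+(B\times D)$ and $(A\cap B)\times(C\cap D)=(A\times C)\cap(B\times D)$, together with $\Psi(H+K)=\Psi(H)+\Psi(K)$ and $\Psi(H\cap K)=\Psi(H)\cap\Psi(K)$ (the latter because $\Psi$ is injective), I would obtain $\Delta(H+K)=\Delta(H)+\Delta(K)$ and $\Delta(H\cap K)=\Delta(H)\cap\Delta(K)$ directly from the definition $\Delta(H)=H\times\Psi(H)$. The dimension claim follows since $\dim\Psi(H)=\dim H$, giving $\dim\Delta(H)=\dim H+\dim\Psi(H)=2\dim H$.

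The main obstacle is the $\omega$-complement identity, since its two sides are taken with respect to forms on different spaces. Here I would start from Eq.~\eqref{eq:DirectProductOmega} applied to $\Delta(H)=H\times\Psi(H)$, which gives $\Delta(H)^{\omega}=\Psi(H)^{\theta}\times H^{\theta}$, thereby reducing the problem to the two identities $\Psi(H)^{\theta}=H^{\omega}$ and $H^{\theta}=\Psi(H^{\omega})$. The first is immediate from the intertwining relation $\theta(x,\Psi h)=-\omega(x,h)$, which shows that $\theta(x,\Psi(h))=0$ for all $h\in H$ if and only if $\omega(x,h)=0$ for all $h\in H$. For the second, the isometry property implies $\Psi(W)^{\theta}=\Psi(W^{\theta})$ for every subspace $W$; taking $W=\Psi(H)$ and using $\Psi^2(H)=-H=H$ then yields $\Psi(H^{\omega})=\Psi\bigl(\Psi(H)^{\theta}\bigr)=\bigl(\Psi^2(H)\bigr)^{\theta}=H^{\theta}$. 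Substituting both identities gives $\Delta(H)^{\omega}=H^{\omega}\times\Psi(H^{\omega})=\Delta(H^{\omega})$.

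I expect the crux to be organizational rather than computational: the genuinely new content is only the three properties of $\Psi$, and the payoff of phrasing the complement step through the isometry identity $\Psi(W)^{\theta}=\Psi(W^{\theta})$ is that it avoids computing $\Delta(H)^{\omega}$ coordinatewise from scratch.
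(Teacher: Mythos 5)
Your proposal is correct and follows essentially the same route as the paper's proof: both arguments rest on the interplay between $\Psi$, $\theta$, and $\omega$ (your isometry and intertwining identities are exactly the paper's Eqs.~(\ref{eq:psiprop})), and both reduce $\Delta(H)^\omega$ via Eq.~(\ref{eq:DirectProductOmega}) to the identities $\Psi(H)^\theta = H^\omega$ and $\Psi(H^\omega)=H^\theta$. The only (immaterial) divergence is that you prove Eq.~(\ref{eq:OmegaMeet}) directly from injectivity of $\Psi$ and the distributivity of direct products over intersections, whereas the paper deduces it from Eqs.~(\ref{eq:OmegaJoin}) and (\ref{eq:OmegaComplement}) by De Morgan duality.
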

\begin{proof}
To verify Eq.~(\ref{eq:OmegaComplement}), observe that for any \((a,b),(c,d)\in G\times G\), we have
\begin{subequations}
\begin{align}
 \theta (\Psi (a,b),\Psi (c,d))&=\theta ((a,b),(c,d)),\\ \om (\Psi (a,b),\Psi (c,d))&=\om ((a,b),(c,d)), \\ \theta (\Psi (a,b),(c,d))&=\om ((a,b),(c,d)), \text{ and}\\ \om((a,b),\Psi (c,d))&=\theta ((a,b),(c,d)).
\end{align}
\end{subequations}
Then for any subspace \(H\leq G\times G\), we have
\begin{subequations}\label{eq:psiprop}
\begin{align}
 \Psi (H^{\theta})&=\Psi (H)^{\theta}, \label{eq:PsiProperties1}\\
 \Psi (H^\om)&=\Psi (H)^{\om}, \label{eq:PsiProperties2}\\
 H^{\om}&=\Psi (H)^{\theta}, \text{ and}\label{eq:PsiProperties3}\\
 H^{\theta} &=\Psi(H)^{\om}.\label{eq:PsiProperties4}
\end{align}
\end{subequations}
Thus, we have
\begin{equation}
\begin{aligned}
 \De (H^\om)&=H^{\om} \times \Psi (H^\om)\\
 &=\Psi (H)^{\theta} \times H^{\theta} \\
 &=(H\times \Psi (H))^{\om}\\
 &=\De (H)^{\om},
\end{aligned}
\end{equation}
where the second line holds by Eq.~(\ref{eq:psiprop}), and the third line holds by Eq.~(\ref{eq:DirectProductOmega}). To verify Eq.~(\ref{eq:OmegaJoin}), observe that
\begin{equation}
\begin{aligned}
 \De (H)+\De (K)&=H\times \Psi (H)+K\times \Psi (K)\\
 &=(H+K)\times (\Psi (H)+\Psi (K))\\
 &=(H+K)\times \Psi (H+K)\\
 &=\De (H+K).
\end{aligned}
\end{equation}
To verify Eq.~(\ref{eq:OmegaMeet}), observe that
\begin{equation}
\begin{aligned}
 \De (H\cap K)&=\De \! \left(\left(H^{\om} +K^{\om}\right)^{\om}\right)\\
 &=\left(\De (H)^{\om}+\De (K)^{\om}\right)^{\om}\\
 &=\De(H)\cap \De (K).
\end{aligned}
\end{equation}
Finally, to verify Eq.~(\ref{eq:OmegaDimension}), note that $\Psi$ is an isomorphism on $G \times G$, and take the dimension of the right-hand side of Eq.~(\ref{eq:OmegaHDefinition}).
\end{proof}

We are now ready to state and prove the main result of this section.

\begin{theorem}\label{thm:mapping}
Let \(H\leq G\times G\) be an \(\left[\left[n,k,r,d\right]\right]\) subsystem stabilizer code. Then $\Delta(H)$ is a \(\left[\left[2n,2k,2r,d'\right]\right]\) subsystem CSS code, where $d \leq d' \leq 2d$. Moreover, if \(H\) admits a collection of generators with symplectic weight at most $w$, then \(\De (H)\) admits a collection of generators with symplectic weight at most $2w$.
\end{theorem}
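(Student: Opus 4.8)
The plan is to verify the three assertions---that $\De(H)$ is CSS, that its parameters are $[[2n,2k,2r,d']]$, and that $d\le d'\le 2d$---by exploiting the structural properties of $\De$ from Lemma \ref{lem:DeltaLemma} together with an elementary but careful bookkeeping of symplectic weights across the $n$-qudit and $2n$-qudit pictures. Throughout I would identify the ambient space of $\De(H)$ as $G'\times G'$ with $G'\coloneqq G\times G\cong\dsF_p^{2n}$, so that $\De(H)=H\times\Psi(H)$ is literally a direct product of the two subspaces $H$ and $\Psi(H)$ of $G'$. By the definition of a subsystem CSS code this already shows $\De(H)$ is CSS, with $X$-part $H$ and $Z$-part $\Psi(H)$.

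For the parameters, I would feed the $\De(H)$-tower into Proposition \ref{prop:VSTowerParameters}. Using Eq.~(\ref{eq:OmegaComplement}) to rewrite $\De(H)^\om=\De(H^\om)$ and then Eqs.~(\ref{eq:OmegaJoin}) and (\ref{eq:OmegaMeet}), the two middle terms collapse to $\De(H)+\De(H)^\om=\De(H+H^\om)$ and $\De(H)\cap\De(H)^\om=\De(H\cap H^\om)$. The dimension-doubling identity Eq.~(\ref{eq:OmegaDimension}) then shows that every gap in the $\De(H)$-tower is exactly twice the corresponding gap in the $H$-tower of Eq.~(\ref{eq:HTower}); reading off $\dim G'=2n$, the logical gap $2\cdot 2k$, and the gauge gap $2\cdot 2r$ yields the claimed $[[2n,2k,2r,\cdot]]$ parameters.

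The distance is where the real work lies, and I expect the main obstacle to be translating symplectic weight between the two pictures correctly. I would first observe that a general element of $\De(M)$, where $M\coloneqq H+H^\om$, has the form $(u,\Psi(v))$ with $u,v\in M$, and that it lies in $\De(H)$ precisely when both $u,v\in H$; hence $\De(M)\setminus\De(H)$ consists of those $(u,\Psi(v))$ with $u\notin H$ or $v\notin H$. Writing $u=(u_1,u_2)$, $v=(v_1,v_2)$ and unpacking $\Psi(v)=(v_2,-v_1)$, the first $n$ qudits of $(u,\Psi(v))$ carry the pair $(u_1,v_2)$ and the last $n$ carry $(u_2,-v_1)$, so
\[
\swt\bigl(u,\Psi(v)\bigr)\;\ge\;\wt(u_1)+\wt(u_2)\;\ge\;\swt(u),
\]
and, symmetrically, $\swt(u,\Psi(v))\ge\swt(v)$. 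Since at least one of $u,v$ lies in $M\setminus H$, this forces $\swt(u,\Psi(v))\ge d$, hence $d'\ge d$. For the upper bound I would exhibit a diagonal witness: if $m\in M\setminus H$ attains $\swt(m)=d$, then $(m,\Psi(m))\in\De(M)\setminus\De(H)$ and both $n$-qudit blocks are supported exactly on the positions where $m$ is supported, so $\swt(m,\Psi(m))=2\swt(m)=2d$ and therefore $d'\le 2d$.

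Finally, for the ``moreover'' claim I would take a generating set $\{g_i\}$ of $H$ with $\swt(g_i)\le w$ and use $\{(g_i,0)\}\cup\{(0,\Psi(g_i))\}$ as generators of $\De(H)=H\times\Psi(H)$. Each $(g_i,0)$ is a purely $X$-type operator whose weight equals $\wt(g_{i,1})+\wt(g_{i,2})\le 2\swt(g_i)\le 2w$, and likewise for $(0,\Psi(g_i))$; this is the same ``one qudit splits into two'' inequality that drove the distance bound. Indeed, the only genuinely delicate point throughout is keeping the block structure of $G'=G\times G$ straight, since the factor of two in both $d'\le 2d$ and the generator weight bound arises precisely because a single nontrivial qudit of an $n$-qudit operator can become two nontrivial qudits after applying $\De$.
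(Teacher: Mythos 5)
Your proof is correct and follows essentially the same route as the paper: Lemma~\ref{lem:DeltaLemma} collapses the $\De(H)$-tower so that every dimension doubles, and both the distance bounds and the generator-weight bound reduce to the inequality $\swt(x)\le\wt(x)\le 2\swt(x)$. The only cosmetic difference is in the distance lower bound, where you inspect general elements $(u,\Psi(v))$ of $\De\left(H+H^{\om}\right)\setminus\De(H)$ directly, whereas the paper first invokes Proposition~\ref{prop:CSSStructure} (together with the fact that $\Psi$ preserves weight) to identify $d'=\min\wt\left(\left(H+H^{\om}\right)\setminus H\right)$ and then applies the same inequality; both versions are valid.
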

\begin{proof}
Apply $\Delta$ to the tower
\begin{equation}\begin{tikzpicture}
    [
    node distance = \nd and \nd,
    on grid,
    baseline = (current bounding box.center)
    ]

    \node (1) {\(G \times G\)};
    \node (2) [below=of 1] {\(H + H^\omega\)};
    \node (3) [below=of 2] {\(H\)};
    \node (4) [below=of 3] {\(H \cap H^\omega\)};
    \node (5) [below=of 4] {\(0\)};
    
    \draw
        (1) -- node [right] {\(n-k-r\)}
        (2) -- node [right] {\(2k\)}
        (3) -- node [right] {\(2r\)}
        (4) -- node [right] {\(n-k-r\)}
        (5);
\end{tikzpicture}\end{equation}
to obtain the tower
\begin{equation}\label{eq:DeltaTower}
\begin{tikzpicture}
    [
    node distance = \nd and \nd,
    on grid,
    baseline = (current bounding box.center)
    ]

    \node (1) {\((G\times G) \times (G\times G)\)};
    \node (2) [below=of 1] {\(\De (H) + \De (H)^\omega\)};
    \node (3) [below=of 2] {\(\De (H)\)};
    \node (4) [below=of 3] {\(\De (H) \cap \De (H)^\omega\)};
    \node (5) [below=of 4] {\(0\)};
    
    \draw
        (1) -- node [right] {\(2n-2k-2r\)}
        (2) -- node [right] {\(4k\)}
        (3) -- node [right] {\(4r\)}
        (4) -- node [right] {\(2n-2k-2r\)}
        (5);
\end{tikzpicture},\end{equation}
which shows that $\Delta(H)$ is a \(\left[\left[2n,2k,2r,d'\right]\right]\) subsystem CSS code.

To bound $d'$, first note that by Proposition \ref{prop:CSSStructure} along with Eq.~(\ref{eq:psiprop}), it follows that the tower in Eq.~(\ref{eq:DeltaTower}) is equal to the tower
\begin{equation}\label{eq:DeltaTowerSimplified}\begin{tikzpicture}
    [
    node distance = \nd and \nd,
    on grid,
    baseline = (current bounding box.center)
    ]

    \node (1) {\(G \times G\)};
    \node (2) [below=of 1] {\(H + H^\omega\)};
    \node (3) [below=of 2] {\(H\)};
    \node (4) [below=of 3] {\(H \cap H^\omega\)};
    \node (5) [below=of 4] {\(0\)};

    \node (eq1) [right=of 1] {\(\times\)};
    \node (eq2) [right=of 2] {\(\times\)};
    \node (eq3) [right=of 3] {\(\times\)};
    \node (eq4) [right=of 4] {\(\times\)};
    \node (eq5) [right=of 5] {\(\times\)};

    \node (6) [right=of eq1]{\(\Psi(G \times G)\)};
    \node (7) [right=of eq2] {\(\Psi(H + H^\omega)\)};
    \node (8) [right=of eq3] {\(\Psi(H)\)};
    \node (9) [right=of eq4] {\(\Psi(H \cap H^\omega)\)};
    \node (10) [right=of eq5] {\(\Psi(0)\)};
    
    \draw
        (1) --
        (2) --
        (3) --
        (4) --
        (5);

    \path
        (1) -- (eq1) -- (6)
        (2) -- (eq2) -- (7)
        (3) -- (eq3) -- (8)
        (4) -- (eq4) -- (9)
        (5) -- (eq5) -- (10);

    \draw
        (6) --
        (7) --
        (8) --
        (9) --
        (10);
\end{tikzpicture}.
\end{equation}
Since the map $\Psi$ is weight-preserving, it follows from Proposition \ref{prop:CSSStructure} that
\begin{equation}
d' = \min \wt \left({\parenth{H+H^{\om}}}\setminus H\right).
\end{equation}
Now, for any $x \in G \times G$, we have
\begin{equation}
    \swt(x) \leq \wt(x) \leq 2\swt(x).
\end{equation}
Minimizing over ${\parenth{H+H^{\om}}}\setminus H$, we find that
\begin{equation}
    d \leq d' \leq 2d,
\end{equation}
as needed.

Finally, if \(T\) is a generating set for \(H\), then \(\left(T\times 0\right)\cup \left(0\times
\Psi (T)\right)\) is a generating set for \(\De (H)\), and for any $t \in T$, we have
\begin{align}
    \swt (t, 0) &\leq 2 \swt(t) \text{ and}\\
    \swt (0, \Psi(t)) &\leq 2 \swt(\Psi(t)) = 2 \swt(t),
\end{align}
and we are done.
\end{proof}

Intuitively, our mapping in Eq.~(\ref{eq:OmegaHDefinition}) embeds two copies of the subsystem stabilizer code $H$ into the subsystem CSS code $\De(H)$. One copy $H \times 0$ consists entirely of $X$-type Pauli operators, while the other (isomorphic) copy $0 \times \Psi(H)$ consists entirely of $Z$-type Pauli operators. This is shown explicitly in Eq.~(\ref{eq:DeltaTowerSimplified}), which explains why $\Delta$ can be called a ``doubling" mapping.

We remark that Theorem \ref{thm:mapping} can be generalized to subsystem stabilizer codes over modular qudits (see Appendix \ref{app:Generalize}). Now, we illustrate our construction with three examples.

\begin{example}[Five-qubit code]\label{ex:fivequbit}
Consider the $[[5,1,0,3]]$ code \cite{LMPZ96} \eczoo{stab_5_1_3}, which is the smallest subspace stabilizer code that can correct any single-qubit Pauli error. Note that the $[[5,1,0,3]]$ code does not admit a CSS representation under single-qubit Clifford rotations. The stabilizer group is
\begin{align}\label{eq:five-qubit-stabs}
    \mathcal G = \langle ZXXZI, IZXXZ, ZIZXX, XZIZX \rangle,
\end{align}
and the logical Pauli operators are
\begin{equation}
    \begin{aligned}
        \bar X &= XXXXX \text{ and} \\ \bar Z &= ZZZZZ.
\end{aligned}
\end{equation}
Applying the mapping in Lemma \ref{lem:DeltaLemma} and Theorem \ref{thm:mapping}, the corresponding subspace CSS code has ten physical qubits with stabilizer group
\begin{align}
\begin{split}
      \Delta (\mathcal G) = \bigg \langle \ &\begin{matrix}XIIXI\\IXXII\end{matrix}, \ \begin{matrix}IXIIX\\IIXXI\end{matrix}, \ \begin{matrix}XIXII\\IIIXX\end{matrix}, \ \begin{matrix}IXIXI\\XIIIX\end{matrix}, \\
    &\begin{matrix}IZZII\\ZIIZI\end{matrix}, \ \begin{matrix}IIZZI\\IZIIZ\end{matrix}, \ \begin{matrix}IIIZZ\\ZIZII\end{matrix}, \ \begin{matrix}ZIIIZ\\IZIZI\end{matrix} \ \bigg \rangle,
\end{split}
\end{align}
where we arrange the ten physical qubits in two rows of five for readability.
\end{example}

\begin{example}[Double semion code]\label{ex:DoubleSemion}
The double semion code is a topological subspace stabilizer code \cite{Ellison22} \eczoo{double_semion} encoding one logical qubit. To define the code, we place one $4$-dimensional qudit on each edge of a square lattice with periodic boundary conditions. Then, we specify the gauge group, which is generated by the four Pauli operators
\begin{equation}
\label{eq:doublesemiongenerators}
\raisebox{-.5\height}{\includegraphics[scale=0.4]{}}
\end{equation}
per unit cell, where
\begin{align}
    Y &\coloneqq e^{i\pi/4}X^\dagger Z^\dagger \text{ and} \\
    \tilde Y &\coloneqq e^{i\pi/4}Z^\dagger X.
\end{align}

Note that the double semion code is not a subspace CSS code. In fact, one can argue that there does not exist a shallow depth Clifford circuit transforming the double semion code into a subspace CSS code. Indeed, if such a circuit exists, then one can define a commuting projector Hamiltonian which is stoquastic \footnote{the $X$ stabilizers in Eq.~(\ref{eq:doublesemiongenerators}) only contain positive off-diagonal entries, while the $Z$ stabilizers are diagonal and can therefore be shifted by a constant such that all coefficients are positive}, and whose ground state lies in the double semion topological phase. However, it is known that the double semion phase has an intrinsic sign problem~\cite{Hastings16,Smith20}.

Although the double semion code is not Clifford equivalent to a subspace CSS code, one can apply the mapping in Lemma \ref{lem:DeltaLemma} and Theorem \ref{thm:mapping} to construct a subspace CSS code with parameters comparable to that of the double semion code. The resulting subspace CSS code has two physical qudits per edge and eight stabilizer generators per unit cell given by
\begin{equation}
\raisebox{-.5\height}{\includegraphics[scale=0.4]{}}.
\end{equation}
Here, we use the shorthand $X^a X^b$ to mean $X^a \otimes X^b$ (as opposed to operator multiplication) and similarly for the $Z$-type operators.

It would be interesting to determine the phase of matter corresponding to the subspace CSS code above. This code cannot be two copies of the double semion code due to the sign problem, but we conjecture that this code is equivalent to two copies of a $\mathbb Z_2$ toric code. We leave further investigation of this point to future work.

\end{example}

\begin{example}[$\mathbb Z_N^{(1)}$ subsystem code]
\label{ex:Z1N}
The $\mathbb Z_N^{(1)}$ subsystem code is a topological subsystem stabilizer code based on the qudit generalization of the Kitaev honeycomb model introduced in~\cite{Barkeshli15,Ellison23} \eczoo{qudit_znone}. It encodes one logical qudit, which is of dimension $N$ if $N$ is odd and $\frac{N}{2}$ if $N$ is even. To define the code, we place one $N$-dimensional qudit on each vertex of a honeycomb lattice. The gauge group is generated by the two-body checks
\begin{equation}
\raisebox{-.5\height}{\includegraphics[scale=1]{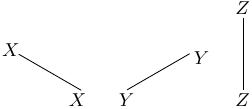}}
\end{equation}
where
\begin{equation}
    Y \coloneqq X Z.
\end{equation}

Since each vertex hosts three checks of $XX$, $YY$, and $ZZ$ type, there does not exist a shallow depth Clifford circuit transforming the $\mathbb Z_N^{(1)}$ subsystem code into a subsystem CSS code. However, using the mapping in Lemma \ref{lem:DeltaLemma} and Theorem \ref{thm:mapping}, we can construct a subsystem CSS code with comparable parameters. This code is generated by the six checks
\begin{equation}
\raisebox{-.5\height}{\includegraphics[scale=1]{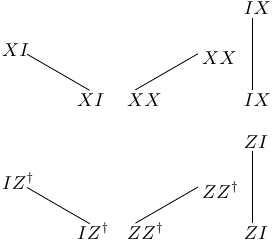}}
\end{equation}
per unit cell.
\end{example}

The proof of Theorem \ref{thm:mapping} suggests a general strategy for constructing subsystem CSS codes from subsystem stabilizer codes. Indeed, any function $\De$ satisfying Eq.~(\ref{eq:deltaprops}) yields a mapping of codes that doubles the number of physical, logical, and gauge qudits. Moreover, by modifying Eq.~(\ref{eq:OmegaDimension}) appropriately, one can allow $\De$ to map to subspaces of $G^m \times G^m$ for any $m \in \dsN$, thus producing $[[m n,m k,m r,d']]$ subsystem CSS codes as output. We hope that the proof of Theorem \ref{thm:mapping} and the related linear algebraic ideas will inspire new mappings from subsystem stabilizer codes to subsystem CSS codes.

We remark that our Theorem \ref{thm:mapping} is a direct generalization of Theorem 1 in \cite{KP13}. In addition, results analogous to our Theorem \ref{thm:mapping} have been obtained previously \cite{BTL10, LAV22, GNK14}. The mappings in these works take $[[n,k,0,d]]$ subspace stabilizer codes to $[[4n,2k,0,d']]$ subspace CSS codes. These mappings take subspace stabilizer codes to Majorana fermion codes to subspace CSS codes, which increases the complexity of the proof (a mapping from $[[n,k,r,d]]$ subsystem stabilizer codes to $[[4n,2k,2r,2d]]$ subsystem CSS codes was obtained in \cite[Theorem 5.7]{B11}, but the proof remains rather long despite avoiding the intermediate Majorana fermion codes). Nonetheless, the mappings in \cite{BTL10, LAV22, GNK14} can be formulated as maps analogous to $\De$ in Lemma \ref{lem:DeltaLemma}. With some modifications, it is possible to demonstrate the validity of the mappings in \cite{BTL10, LAV22, GNK14} by following the general proof structure in Lemma \ref{lem:DeltaLemma} and Theorem \ref{thm:mapping}.

\section{Recovery procedure for subsystem CSS codes}\label{sec:CSSRecovery}

Having shown that subsystem CSS codes achieve performance comparable to that of subsystem stabilizer codes, we now study subsystem CSS codes in greater detail. In this section, we present a recovery procedure to correct an unknown Pauli error affecting the physical Hilbert space of a subsystem CSS code. Our recovery procedure generalizes the usual Steane-type decoder for subspace stabilizer codes \cite{CS96,NC10,G10} to the subsystem setting.

To begin, let \(H=H_X\times H_Z\leq G\times G\) be a subsystem CSS code. Let $\dsL \otimes \dsG \leq \dsH$ be the codespace of $H$, i.e., the fixed space of the stabilizer group of $H$. Suppose that a code state
\begin{equation}
    \ket{\psi} = \ket{\psi_\dsL} \otimes \ket{\psi_\dsG} \in \dsL \otimes \dsG
\end{equation}
is corrupted by a Pauli error \(X^aZ^b \in \scP\), where \((a,b)\in G\times G\). Our recovery procedure identifies the error $(a,b)$ up to some unknown gauge term in $H_X \times H_Z$.

The first step in our recovery procedure is purely quantum: in Section \ref{sec:CSSRecoverySyndrome}, we use gauge generator measurements to determine the syndrome of the Pauli error. The second step in our recovery procedure is purely classical: in Section \ref{sec:CSSRecoveryClassical}, we use two classical linear codes to identify the Pauli error up to gauge terms. Our discussion focuses on correcting the $X$-component of the error; correcting the $Z$-component is similar.

\subsection{Quantum step: syndrome measurement}\label{sec:CSSRecoverySyndrome}

We determine the syndrome of the Pauli error by measuring gauge generators. Usually, the syndrome of a Pauli error is defined to be its commutation with each of the stabilizer generators. In the linear algebraic framework, the syndrome of a Pauli error is defined (for subsystem CSS codes) as follows.

\begin{definition}\label{def:SynXandSynZ}
Let \(H=H_X\times H_Z \leq G \times G\) be a subsystem CSS code. Let $(a,b) \in G \times G$. Let
\begin{equation}
\begin{aligned}
\Syn_X: G &\to G/{\parenth{H_X + H_Z^\theta}} \\
a &\mapsto a + \parenth{H_X + H_Z^\theta}
\end{aligned}
\end{equation}
and
\begin{equation}
\begin{aligned}
\Syn_Z: G &\to G/{\parenth{H_Z + H_X^\theta}} \\
b &\mapsto b + \parenth{H_Z + H_X^\theta}
\end{aligned}
\end{equation}
be the canonical projections. We call $\Syn_X(a)$ the \emph{syndrome} of $a$ and similarly for $b$.
\end{definition}

For intuition, note that the space $H_X + H_Z^\theta$ is precisely the space of all $X$-type logical operators, i.e., the space of all $X$-type Pauli operators that commute with all the $Z$-type stabilizers in $H_Z\cap H_X^{\theta}$. Thus, the syndrome of an $X$-type error $a$ is precisely the set of all $X$-type Pauli operators with the same $Z$-type stabilizer commutations as $a$.

Now, our goal is to use gauge generator measurements to determine the syndrome of $a$. To begin, fix bases for the spaces
\begin{equation}
    \begin{array}{cc}
        H_X = \generate{h_X} & H_Z = \generate{h_Z} \\
        H_X\cap H_Z^{\theta} = \generate{{s_X}} & H_Z\cap H_X^{\theta} = \generate{{s_Z}}
    \end{array}.
\end{equation}
Observe that for each basis vector ${s_Z}$ in $H_Z\cap H_X^{\theta}$, we have
\begin{equation}{s_Z}=\sum_{h_Z} c_{h_Z}h_Z\end{equation}
for some \(c_{h_Z}\in \mathbb{F}\), so
\begin{equation}\label{eq:GaugeProductToStabilizer}
Z^{{s_Z}}=\prod _{h_Z} \left(Z^{h_Z}\right)^{c_{h_Z}}.
\end{equation}
That is, every \(Z\)-type stabilizer generator \(Z^{{s_Z}}\) can be written as a product of \(Z\)-type gauge generators \(Z^{h_Z}\). This simple observation suggests that one can determine the syndrome of the error $a$, i.e., the commutation of $X^a$ with the stabilizer generators $Z^{s_Z}$, by measuring the gauge generators $Z^{h_Z}$. To see this formally, note that the stabilizer generators \(X^{{s_X}}\), \(Z^{{s_Z}}\) and the \(Z\)-type gauge generators \(Z^{h_Z}\) all commute, so there exists an orthonormal basis \(\left\{\ket{e_i} \right\}_i\) for \(\dsH\) in which the \(X^{{s_X}}\), \(Z^{{s_Z}}\), and \(Z^{h_Z}\) are all diagonal. Write the corrupted state in this basis, i.e.,
\begin{equation}\label{eq:CorruptedStateExpanded}
X^aZ^b\ket{\psi} = \sum_i \psi_i \ket{e_i},
\end{equation}
where the $\psi_i \in \dsC \setminus 0$. Since $\ket{\psi} \in \dsL \otimes \dsG$, by applying $Z^{{s_Z}}$ to both sides above, we find that
\begin{equation}Z^{{s_Z}}\ket{e_i} = e^{\frac{2\pi i}{p}\theta ({s_Z},a)} \ket{e_i}\end{equation}
for each $\ket{e_i}$ in Eq.~(\ref{eq:CorruptedStateExpanded}). Similarly, applying $X^{{s_X}}$ to both sides in Eq.~(\ref{eq:CorruptedStateExpanded}), we find that
\begin{equation}X^{{s_X}}\ket{e_i} = e^{-\frac{2\pi i}{p}\theta ({s_X},b)} \ket{e_i}\end{equation}
for each $\ket{e_i}$ in Eq.~(\ref{eq:CorruptedStateExpanded}). Now, since the $\ket{e_i}$ are also eigenstates of the $Z^{h_Z}$, measuring any $Z^{h_Z}$ projects the corrupted state $X^aZ^b\ket{\psi}$ onto a subspace spanned by some subset of the $\ket{e_i}$ in Eq.~(\ref{eq:CorruptedStateExpanded}). Thus, by Eq.~(\ref{eq:GaugeProductToStabilizer}), if we measure each gauge generator \(Z^{h_Z}\) and multiply each corresponding measurement outcome $c_{h_Z}$ times, we obtain the value $e^{\frac{2\pi i}{p}\theta ({s_Z},a)}$, from which we can determine $\theta({s_Z}, a)$. Repeating this for each ${s_Z}$, we can determine the image of $a$ under the linear map $a \mapsto \sum_{{s_Z}} \theta ({s_Z},a) {s_Z}$ on $G$. Since this map has kernel ${\parenth{H_Z\cap H_X^{\theta}}}^\theta = H_X+H_Z^{\theta}$, we can determine $a+\parenth{H_X+H_Z^{\theta}} = \Syn_X(a)$, as desired.

We claim that this procedure does not disturb the logical component of the corrupted state. Intuitively, this is because the measurement of gauge generators can only modify the gauge component of the corrupted state. To see this formally, recall that the gauge generators \(Z^{h_Z}\) are of the form
\begin{equation}
Z^{h_Z} \cong \underset{\tau}{\bigoplus} \, {\tau \parenth{\mathbbm{1}_{\dsL} \otimes U_{\dsG}^\tau} \tau^{-1}}
\end{equation}
for some unitaries $U_{\dsG}^\tau \in \End(\dsG)$. Since any projection onto an eigenspace of \(Z^{h_Z}\) can be written as a polynomial in \(Z^{h_Z}\), any measurement of \(Z^{h_Z}\) effects a projection of the form
\begin{equation}
P \cong \underset{\tau}{\bigoplus} \, {\tau \parenth{\mathbbm{1}_{\dsL} \otimes P_{\dsG}^\tau} \tau^{-1}}
\end{equation}
on \(X^aZ^b{\parenth{\ket{\psi _{\mathbb{L}}} \otimes\ket{\psi _{\mathbb{G}}}}}\), where the $P_{\dsG}^\tau \in \End({\dsG})$ are some projections. Thus, after measuring the gauge generators \(Z^{h_Z}\), the corrupted state is of the form \(X^aZ^b{\parenth{\ket{\psi _{\mathbb{L}}} \otimes\ket{\psi _{\mathbb{G}}'}}}\) for some gauge state \(\ket{\psi _{\mathbb{G}}'} \in \mathbb{G}\).

Finally, since the state $\ket{\psi _{\mathbb{L}}} \otimes\ket{\psi _{\mathbb{G}}'}$ still belongs to the codespace $\dsL \otimes \dsG$, one can apply an analogous procedure to determine $\Syn_Z(b)$ by measuring the $X$-type gauge generators $X^{h_X}$, and after this procedure, the state of the system is \(X^aZ^b{\parenth{\ket{\psi _{\mathbb{L}}} \otimes\ket{\psi _{\mathbb{G}}''}}}\) for some \(\ket{\psi _{\mathbb{G}}''} \in \mathbb{G}\). This concludes the first step in our recover procedure.

We remark that our syndrome measurement procedure uses homogeneous scheduling of gauge generator measurements. That is, in our procedure, all $Z$-type gauge generators are measured, and subsequently all $X$-type gauge generators are measured. However, other measurement schedules might be relevant for fault-tolerant implementations of our recovery procedure; this is related to the notion of ``gauge fixing'' \cite{HB21, SS23}. Indeed, the special case of our recovery procedure for subspace CSS codes (i.e., the Steane recovery procedure) is fault-tolerant \cite{G10}. Future work could investigate potential fault-tolerant implementations of our generalized recovery procedure for subsystem CSS codes.

\subsection{Classical step: error recovery}\label{sec:CSSRecoveryClassical}

In the first step of our recovery procedure, we determined the syndrome $\Syn_X(a)$ of the error $a$. That is, we determined the image of $a$ under the map
\begin{equation}
\begin{aligned}
\Syn_X: G \to G/{\parenth{H_X + H_Z^\theta}}.
\end{aligned}
\end{equation}
In the second step of our recovery procedure, we use $\Syn_X$ as the parity check of the classical linear code $H_X + H_Z^\theta$ to identify the error $a$ up to gauge terms in $H_X$. We emphasize here a key point: the syndrome map used in the first step of our recovery procedure to determine the commutation of the error with the stabilizers is \emph{precisely the same as} the parity check of the classical linear code used in the second step of our recovery procedure to determine the error up to gauge terms. Thus, in our recovery procedure, the quantum and classical syndromes coincide, which further illustrates the connection between subsystem CSS codes and classical linear codes.

To begin, we introduce a general notion of a classical linear code capable of identifying an error up to some prescribed ``redundant" subcode.

\begin{definition}\label{def:ClassicalLinearCodeProperties}
Let $N$ be a vector space over $\dsF_p$. A \emph{classical linear code} is a subspace $K \leq N$. A \emph{parity check} for $K$ is a linear map \(F:N\to M\) such that $\ker F = K$. A \emph{redundant subcode} for $K$ is a subspace $R \leq K$. We say that $K$ has \emph{logical dimension} \(k\coloneqq \dim K\) and \emph{physical dimension} \(n\coloneqq \dim N\). The \emph{distance of $K$ with respect to $R$} is
\begin{equation}d^R\coloneqq \min \wt(K \setminus R).\end{equation}
To summarize these parameters, we call $K$ an \(\left[n,k,d^R\right]\) code.
\end{definition}

Now, suppose that we only wish to identify an error $a \in N$ up to some redundant subcode $R$. That is, suppose that we only wish to recover the coset $a+R$ from the syndrome $F a$ of the error. As shown in the following proposition, this is possible when $a+R$ contains some vector of sufficiently small weight relative to the code distance $d^R$.

\begin{proposition}\label{prop:ClassicalLinearCodeCorrectability}
Let $K$ be a classical linear code with parity check \(F:N\to M\) and redundant subcode $R$. Let \(a\in N\) be unknown, but suppose that $F a$ is known. Suppose that \(\min \wt(a + R) < \frac{d^R}{2}\). Then we can determine \(a+R\), i.e., we can determine $a+k$ for some unknown redundant vector $k \in R$.
\end{proposition}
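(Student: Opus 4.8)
The plan is to show that the minimum-weight representative of the coset $a+R$ is uniquely determined by the syndrome $Fa$, under the stated distance assumption. The key observation is that the syndrome $Fa$ determines the coset $a + K$ exactly (since $\ker F = K$), and $R \leq K$, so recovering $a+R$ amounts to pinning down which sub-coset of $a+K$ modulo $R$ we are in. The strategy is a standard uniqueness-via-triangle-inequality argument, adapted to the redundant subcode.

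First I would let $\hat{a} \in a+R$ be a minimum-weight representative, so that $\wt(\hat a) = \min\wt(a+R) < \tfrac{d^R}{2}$ by hypothesis. The decoding rule I propose is: among all vectors $v \in N$ with $Fv = Fa$ (equivalently, $v \in a+K$), output the coset $v+R$ for any $v$ of minimum weight. I would then argue this rule is well-defined, i.e. that any minimum-weight $v \in a+K$ satisfies $v + R = a+R$, which is exactly the claim that we recover $a+R$.

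The core step is the uniqueness argument. Suppose $v \in a+K$ has $\wt(v) \leq \wt(\hat a) < \tfrac{d^R}{2}$ but $v + R \neq a + R = \hat a + R$. Since both $v$ and $\hat a$ lie in $a+K$, their difference satisfies $v - \hat a \in K$. Moreover $v - \hat a \notin R$, for otherwise $v + R = \hat a + R$, contradicting our assumption. Hence $v - \hat a \in K \setminus R$, so by the definition of the distance with respect to $R$ we have $\wt(v-\hat a) \geq d^R$. On the other hand, the triangle inequality for Hamming weight gives
\begin{equation}
\wt(v - \hat a) \leq \wt(v) + \wt(\hat a) < \frac{d^R}{2} + \frac{d^R}{2} = d^R,
\end{equation}
a contradiction. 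Therefore every minimum-weight element of $a+K$ lies in the single coset $a+R$, so the rule returns $a+R$ unambiguously, which is the desired conclusion.

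The only mild subtlety — and the step I would be most careful about — is making sure the decoding is phrased purely in terms of what is actually known, namely the syndrome $Fa$. Since $Fv = Fa$ iff $v \in a+K$, searching over the fiber $F^{-1}(Fa)$ is the same as searching over the coset $a+K$, so the minimization is genuinely computable from the syndrome without knowing $a$ itself; I would state this explicitly so that the recovered object $a+R = a + k$ (for some unknown $k \in R$) matches the proposition's phrasing. Everything else is routine: the triangle inequality for $\wt$ and the elementary coset bookkeeping carry the argument, and no appeal to the earlier towers or forms is needed beyond Definition \ref{def:ClassicalLinearCodeProperties}.
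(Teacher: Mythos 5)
Your proof is correct and follows essentially the same route as the paper's: both arguments locate a low-weight representative of the known coset $a+K$ and use the triangle inequality together with $d^R=\min\wt(K\setminus R)$ to force the difference of two such representatives into $R$. The only cosmetic difference is that you phrase the decoder as ``take a minimum-weight element of the fiber $F^{-1}(Fa)$,'' whereas the paper takes any element of $a+K$ of weight below $\tfrac{d^R}{2}$; the uniqueness argument is identical.
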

\begin{proof}
See Appendix \ref{app:proofblah}.
\end{proof}

The special case of Proposition \ref{prop:ClassicalLinearCodeCorrectability} with trivial redundant subcode is well known \cite{HKS21}. We state it as a corollary below for a sanity check.

\begin{corollary}
Let $K$ be a classical linear code with parity check \(F:N\to M\). Let \(a\in N\) be unknown, but suppose that $F a$ is known. Suppose that \(\wt(a) < \frac{\min \wt K}{2}\). Then we can determine $a$.
\end{corollary}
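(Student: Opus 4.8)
The plan is to derive this corollary as the special case of Proposition \ref{prop:ClassicalLinearCodeCorrectability} obtained by taking the redundant subcode to be trivial, i.e.\ $R = 0$. First I would observe that when $R = 0$, the coset $a + R$ collapses to the single vector $a$, so ``determining $a+R$'' is the same as determining $a$ itself. Likewise, $K \setminus R = K \setminus \{0\}$, so the distance with respect to $R$ reduces to the ordinary minimum distance: $d^0 = \min \wt(K \setminus \{0\}) = \min \wt K$. Finally, $\min \wt(a + R) = \min \wt(\{a\}) = \wt(a)$. Under these identifications the hypothesis $\min \wt(a+R) < \tfrac{d^R}{2}$ of the proposition becomes exactly the hypothesis $\wt(a) < \tfrac{\min \wt K}{2}$ of the corollary.

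With these substitutions in place, I would invoke Proposition \ref{prop:ClassicalLinearCodeCorrectability} directly: it guarantees that we can determine $a + R = \{a\}$, which is precisely the claim that we can determine $a$. The proof is therefore a short unwinding of definitions followed by a single application of the already-proved proposition; no new machinery is required.

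The only point that deserves a word of care is confirming that $R = 0$ is a legitimate choice of redundant subcode in the sense of Definition \ref{def:ClassicalLinearCodeProperties} --- that is, that the trivial subspace $0 \leq K$ qualifies. Since $R$ is only required to be a subspace of $K$, and $0$ is a subspace of every vector space, this is immediate. The main (and essentially only) conceptual content is recognizing that the corollary's hypotheses and conclusion are the $R=0$ instances of the proposition's, so I do not anticipate any genuine obstacle; the statement is labeled a ``sanity check'' precisely because it is meant to recover the familiar classical bounded-distance-decoding guarantee from the more general redundant-subcode version.
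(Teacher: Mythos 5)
Your proposal is correct and matches the paper's intent exactly: the paper presents this corollary as the trivial-redundant-subcode ($R=0$) specialization of Proposition \ref{prop:ClassicalLinearCodeCorrectability} and offers no separate proof, which is precisely the unwinding you carry out. Nothing further is needed.
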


Let us return to the second step of our recovery procedure. Suppose that $H = H_X \times H_Z$ is an \(\left[\left[n,k,r,d\right]\right]\) subsystem CSS code. Write $k_X \coloneqq \dim {\parenth{H_X + H_Z^\theta}}$ and $k_Z \coloneqq \dim {\parenth{H_Z + H_X^\theta}}$ (note that $k_X + k_Z = n + k + r$ by Proposition \ref{prop:VSTowerParameters}). Let $d^{H_X},d^{H_Z}$ be as in Proposition \ref{prop:CSSStructure}. One verifies from the definitions that $H_X + H_Z^\theta$ with parity check $\Syn_X$ and redundant subcode $H_X$ is an $[n, k_X,d^{H_X}]$ classical linear code, and $H_Z + H_X^\theta$ with parity check $\Syn_Z$ and redundant subcode $H_Z$ is an $[n, k_Z,d^{H_Z}]$ classical linear code.

Now, suppose that the error $a$ satisfies $\min \wt(a+H_X) < \frac{d^{H_X}}{2}$. Then by Proposition \ref{prop:ClassicalLinearCodeCorrectability}, we can recover the error class $a+H_X$ from the syndrome $\Syn_X(a)$, which was determined previously in the first step of our recovery procedure. To correct the $X$-type error, we can pick any $a+h_X \in a+H_X$ and apply  ${\parenth{X^{a+h_X}}}^{-1}$ to the corrupted code state \(X^aZ^b{\parenth{\ket{\psi _{\mathbb{L}}} \otimes\ket{\psi _{\mathbb{G}}''}}}\), which yields the state  \(Z^b{\parenth{\ket{\psi _{\mathbb{L}}} \otimes\ket{\psi _{\mathbb{G}}'''}}}\). Assuming that the error $b$ satisfies $\min \wt(b+H_Z) < \frac{d^{H_Z}}{2}$, we can apply a similar procedure to correct the $Z$-type error and restore the state \(\ket{\psi _{\mathbb{L}}} \otimes\ket{\psi _{\mathbb{G}}''''}\). This concludes the second step in our recovery procedure.

We remark that the second step in our recovery procedure performs as well as can be expected, given that the distance of the parent subsystem CSS code is $d = \min\{d^{H_X},d^{H_Z}\}$ (as shown in Proposition \ref{prop:CSSStructure}). Indeed, if the original Pauli error $X^a Z^b$ satisfies $\swt(a,b) < \frac{d}{2}$, which is the usual correctability condition for subsystem stabilizer codes, then $a$ and $b$ satisfy $\wt(a) < \frac{d^{H_X}}{2}$ and $\wt(b) < \frac{d^{H_Z}}{2}$, so by the above discussion, the second step in our recovery procedure succeeds.

One might note that the second step in our recovery procedure identifies a little more than just the coset $a+H_X$. In fact, as indicated in the proof of Proposition \ref{prop:ClassicalLinearCodeCorrectability}, our recovery procedure always identifies a representative $a+h_X$ of $a+H_X$ with weight less than $\frac{d^{H_X}}{2}$ (provided that such a representative exists). One might ask whether a weaker recovery procedure exists, which only recovers the coset $a+H_X$ and nothing else. We explore this idea further in Appendix \ref{app:Par}.


\section{Structure of subsystem stabilizer codes} \label{sec:SubsystemStabilizerCode}
After a detailed study of subsystem CSS codes, we now turn our attention to the structure of general subsystem stabilizer codes. Our main observation is that every subsystem stabilizer code is associated with two subsystem CSS codes and an isomorphism that determine the structure of the subsystem stabilizer code. In Section \ref{sec:Goursat}, we formalize this observation using Goursat's Lemma. Then, in Section \ref{sec:SSCStructure}, we investigate the structure of subsystem stabilizer codes by associating two subsystem CSS codes to each subsystem stabilizer code in Eq.~(\ref{eq:HTower}), just as we associated two classical linear codes to each subsystem CSS code in Eq.~(\ref{eq:HTowerCSS}). Finally, in Section \ref{sec:SSCGeneralize}, we apply this theory to introduce two generalizations of subsystem CSS codes.

\subsection{Goursat data of the gauge group}\label{sec:Goursat}

We can quantify the degree of ``non-CSS-ness'' of any subsystem stabilizer code $H\leq G\times G$ by sandwiching it between two subsystem CSS codes,
\begin{equation}
N\leq H\leq E,
\end{equation}
where the \emph{internal CSS code} $N$ is the largest subsystem CSS code
contained in $H$, and the \emph{external CSS code} $E$ is the
smallest subsystem CSS code containing $H$.

The internal CSS code consists of all elements in the gauge group that are purely
$X$-type or purely $Z$-type, i.e., all $X^a$ and $Z^b$ where $X^a, Z^b \in \scG$. In particular, the internal CSS code equals $H$ iff $H$ is itself a subsystem CSS code, and the internal CSS code is trivial iff $H$ has no pure $X$ or pure $Z$ elements.

The external CSS code is constructed by independently peeling off the $X$- and $Z$-components of each element in the gauge group. That is, the external CSS code contains all $X^a$ and $Z^b$ such that $X^a Z^b \in \scG$. In particular, the external CSS code equals $H$ iff $H$ is itself a subsystem CSS code, and the external CSS code always contains $H$.

If $H$ is not a subsystem CSS code, the inclusions
\begin{equation}
N < H < E
\end{equation}
are strict. In particular, the internal code is a strict subspace of
the external code, since there are elements of $E$ that cannot be
obtained by taking products of the pure gauge group elements
in $N$. The relative size of the internal and external codes corresponds to the number of non-CSS elements in the gauge group. This claim is made precise using Goursat's Lemma.

Goursat's Lemma associates a gauge group with a set of \emph{Goursat
data} --- the internal code $N=N_{X}\times N_{Z}$, the external code $E=E_{X}\times E_{Z}$, and an isomorphism $\phi:E_{X}/N_{X} \to E_{Z}/N_{Z}$ that associates the \(X\)- and \(Z\)-components of each gauge group element $(e_X,e_Z)\in H$ up to pure $X,Z$ elements:
\begin{equation}
e_X+N_{X}\stackrel[\phi^{-1}]{\phi}{\rightleftharpoons}e_Z+N_{Z}.
\end{equation}
In other words, for each gauge group element $(e_X,e_Z)$, the isomorphism $\phi$ associates the non-pure component of \(e_X\) with the non-pure component of its partner \(e_Z\).
The number of such independent pairs quantifies the ``non-CSS-ness'' of the code: if $E_{X}/N_{X}$ has one element, then $H$ is CSS, but if $E_{X}/N_{X}$ is large, then $H$ is highly ``non-CSS''. This is all stated formally below.

\begin{lemma}[Goursat's Lemma] \label{lem:Goursat}
Let \(G\) be a vector space.
\begin{enumerate}[label=(\arabic*),ref=(\arabic*)]
\item \label{lem:GoursatA} Let \(H \leq G\times G\). Define
\begin{subequations}\label{eq:defsr}\begin{align}
{E_X} &\coloneqq \{{e_X}\in G \mid \exists {e_Z}\in G, ({e_X},{e_Z})\in H\}, \label{eq:DefSX}\\
{E_Z} &\coloneqq \{{e_Z}\in G \mid  \exists {e_X}\in G, ({e_X},{e_Z})\in H\}, \\
{N_X} &\coloneqq \{{n_X}\in G \mid  ({n_X},0)\in H\}, \text{ and} \\
{N_Z} &\coloneqq \{{n_Z}\in G \mid  (0,{n_Z})\in H\}.\label{eq:DefRZ}
\end{align}\end{subequations}
For each ${e_X} \in {E_X}$, define
\begin{equation}
\phi ({e_X}+{N_X}) \coloneqq {e_Z}+{N_Z},
\end{equation}
where $({e_X},{e_Z}) \in H$. Then
\begin{equation}\begin{array}{c}
G \times G \\
| \\
 {E_X} \times {E_Z}\\
 | \\
 {N_X} \times {N_Z}\\
\end{array}\end{equation}
and
\begin{equation}{E_X}/{N_X}\overset{\phi} {\xrightarrow{\sim}}{E_Z}/{N_Z}.\end{equation}

\item \label{lem:GoursatB} Conversely, let
\begin{equation}\begin{array}{c}
G \times G \\
| \\
 {E_X} \times {E_Z}\\
 | \\
 {N_X} \times {N_Z}\\
\end{array}\end{equation}
and
\begin{equation}{E_X}/{N_X}\overset{\phi} {\xrightarrow{\sim}}{E_Z}/{N_Z}.\end{equation}
Define
\begin{equation}\label{eq:HFromGoursat}
\begin{aligned}
    H\coloneqq \{({e_X},{e_Z})\in {E_X}&\times {E_Z} \, | \\ \phi ({e_X}+{N_X})&={e_Z}+{N_Z}\}.
\end{aligned}
\end{equation}
Then $H\leq G\times G$.

\item \label{lem:GoursatC} The constructions in \ref{lem:GoursatA} and \ref{lem:GoursatB} are inverses.
\end{enumerate}
\end{lemma}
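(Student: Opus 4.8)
The plan is to prove Goursat's Lemma in three parts corresponding to the three claims \ref{lem:GoursatA}, \ref{lem:GoursatB}, and \ref{lem:GoursatC}, treating it as the well-known group-theoretic statement specialized to vector spaces (where ``subgroup'' becomes ``subspace'' and all maps are linear). Throughout I would lean on the fact that the four spaces $E_X, E_Z, N_X, N_Z$ are all images or intersections of the subspace $H$ under the two coordinate projections $\pi_X, \pi_Z : G \times G \to G$, which immediately makes them subspaces; the real content is that $\phi$ is a well-defined linear isomorphism and that the two constructions invert one another.

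For \ref{lem:GoursatA}, I would first observe that $E_X = \pi_X(H)$ and $E_Z = \pi_Z(H)$ are subspaces, and that $N_X = \{n_X : (n_X,0) \in H\}$ and $N_Z = \{n_Z : (0,n_Z) \in H\}$ are subspaces (being kernels of $\pi_Z|_H$ and $\pi_X|_H$ projected appropriately). I would check $N_X \leq E_X$ and $N_Z \leq E_Z$ directly. The key step is showing $\phi$ is well defined: if $(e_X, e_Z)$ and $(e_X, e_Z')$ are both in $H$, then their difference $(0, e_Z - e_Z') \in H$, so $e_Z - e_Z' \in N_Z$, giving $e_Z + N_Z = e_Z' + N_Z$; an analogous argument handles two representatives of the same coset $e_X + N_X$. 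Linearity of $\phi$ follows because $H$ is closed under addition and scalar multiplication, and bijectivity follows from the symmetric roles of $X$ and $Z$ (surjectivity is clear since every $e_Z \in E_Z$ has a partner $e_X$; injectivity follows from the well-definedness argument run in reverse).

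For \ref{lem:GoursatB}, given the tower and an isomorphism $\phi : E_X/N_X \to E_Z/N_Z$, I would define $H$ by Eq.~(\ref{eq:HFromGoursat}) and verify that $H \leq G \times G$ is a subspace: closure under addition and scalar multiplication is inherited from the linearity of $\phi$ and the subspace structure of $E_X \times E_Z$, since the defining condition $\phi(e_X + N_X) = e_Z + N_Z$ is preserved under these operations. For \ref{lem:GoursatC}, I would check both composites are the identity: starting from an $H$, extracting its Goursat data via \ref{lem:GoursatA}, and reconstructing via \ref{lem:GoursatB} returns $H$ (one inclusion is immediate from the definition of $\phi$, the reverse requires that any $(e_X,e_Z)$ satisfying the coset condition actually lies in $H$, which follows by picking a known partner and adding an element of $N_Z$); conversely, starting from abstract Goursat data, building $H$ and re-extracting recovers the original $E_X, E_Z, N_X, N_Z, \phi$.

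I expect the main obstacle to be the round-trip verification in \ref{lem:GoursatC}, specifically the inclusion showing that the reconstructed $H$ from \ref{lem:GoursatB} is not larger than the original: one must argue that if $\phi(e_X + N_X) = e_Z + N_Z$, then $(e_X, e_Z) \in H$, which uses that some partner $(e_X, \tilde e_Z) \in H$ exists with $\tilde e_Z \equiv e_Z \pmod{N_Z}$, whence $(0, e_Z - \tilde e_Z) \in H$ and adding recovers $(e_X, e_Z) \in H$. All the remaining verifications are routine once the projection/kernel descriptions are in hand, so I would keep those brief and concentrate the written argument on well-definedness of $\phi$ and this inclusion.
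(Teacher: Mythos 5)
Your proposal is correct, but note that the paper does not actually prove Lemma~\ref{lem:Goursat} at all --- it defers entirely to the literature, citing \cite[Theorem 3]{MG22} and \cite[Theorem 4]{AC09}. What you have written is the standard direct proof of Goursat's Lemma specialized to vector spaces, and all of its steps check out: the identifications $E_X = \pi_X(H)$, $E_Z = \pi_Z(H)$, $N_X = \pi_X(H \cap (G \times 0))$, $N_Z = \pi_Z(H \cap (0 \times G))$ make the tower immediate; well-definedness, linearity, and bijectivity of $\phi$ all reduce to closure of $H$ under linear combinations; and you correctly isolate the only inclusion in \ref{lem:GoursatC} that requires an argument, namely that a pair $(e_X, e_Z)$ satisfying the coset condition lies in $H$ because it differs from a known element $(e_X, \tilde e_Z) \in H$ by $(0, e_Z - \tilde e_Z) \in 0 \times N_Z \leq H$. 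One small detail worth spelling out in the other half of the round trip (data $\to$ $H$ $\to$ data): recovering $N_X$ from the reconstructed $H$ uses injectivity of $\phi$, since $(n_X,0) \in H$ forces $\phi(n_X + N_X) = N_Z$ and hence $n_X \in N_X$. The trade-off between the two approaches is the usual one: the citation keeps the paper short and signals that the result is classical, while your self-contained argument makes the paper's Section on Goursat data genuinely self-contained and, more usefully, exposes exactly the mechanics (cosets of $N_X \times N_Z$ paired by $\phi$) that Propositions~\ref{prop:GoursatProperties}--\ref{prop:GoursatIntersection} later rely on.
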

\begin{proof}
This is \cite[Theorem 3]{MG22}. See also \cite[Theorem 4]{AC09}.
\end{proof}

We emphasize that item \ref{lem:GoursatC} above states that the correspondence between \(H\) and its Goursat data is bijective. Thus, we can write
\begin{equation}\label{eq:goursatcorresp}
\goursatCorrespondence{G}{H}{{E_X}}{{E_Z}}{{N_X}}{{N_Z}}{\phi}
\end{equation}
to indicate that the subsystem stabilizer code on the left-hand side corresponds via Lemma \ref{lem:Goursat} to the Goursat data on the right-hand side. This allows us to study the structure of a subsystem stabilizer code $H$ by studying the corresponding internal code $N$, external code $E$, and isomorphism $\phi$.

Goursat's Lemma implicitly states that the subsystem stabilizer code $H$ consists of ``blocks" (i.e., cosets) of $N_X \times N_Z$ inside $E_X \times E_Z$. The isomorphism $\phi$ essentially pairs pure $X$ elements $e_X$ with pure $Z$ elements $e_Z$ to obtain the representatives $(e_X, e_Z)$ of these cosets comprising $H$. We illustrate this in Fig.~\ref{fig:Figure1} and state it formally below.

\begin{proposition}\label{prop:GoursatProperties}
Suppose that Eq.~(\ref{eq:goursatcorresp}) holds. Fix a basis
\begin{equation}
    \left\{{e_X^j}+{N_X}\right\}_j
\end{equation}
for \({E_X}/{N_X}\), and consider the basis
\begin{equation}
    \left\{\phi \left({e_X^j}+{N_X}\right)\eqqcolon {e_Z^j}+{N_Z}\right\}_j
\end{equation}
for \({E_Z}/{N_Z}\). Then
\begin{equation}\label{eq:GoursatHExplicit}
\begin{aligned}
H=\Bigg\{\sum_j &c_j(e_X^j, e_Z^j) + (n_X, n_Z)\in {E_X}\times {E_Z} \, \Bigg| \, \\
&c_j\in \dsF_p,(n_X, n_Z)\in {N_X \times N_Z}\Bigg\}.
\end{aligned}
\end{equation}
Moreover, \(H\) is a subsystem CSS code iff \({E_X}={N_X}\), which occurs iff \({E_Z}={N_Z}\).
\end{proposition}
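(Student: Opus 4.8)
The plan is to read the explicit description of $H$ off of the characterization in Lemma~\ref{lem:Goursat}\ref{lem:GoursatB}, and then obtain the CSS criterion as a degenerate special case. For Eq.~(\ref{eq:GoursatHExplicit}) I would argue by double inclusion; write $P$ for the set on the right-hand side. For $P \subseteq H$, I take a typical element $\left(\sum_j c_j e_X^j + n_X, \sum_j c_j e_Z^j + n_Z\right)$ and verify the defining condition $\phi(e_X+N_X)=e_Z+N_Z$ from Eq.~(\ref{eq:HFromGoursat}): since $n_X \in N_X$ and $\phi$ is $\dsF_p$-linear, $\phi\!\left(\sum_j c_j e_X^j + n_X + N_X\right) = \sum_j c_j \phi(e_X^j+N_X) = \sum_j c_j (e_Z^j + N_Z) = \sum_j c_j e_Z^j + n_Z + N_Z$, so the element lies in $H$. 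For $H \subseteq P$, I take $(e_X,e_Z)\in H$, expand $e_X+N_X = \sum_j c_j (e_X^j + N_X)$ in the chosen basis, which gives $e_X = \sum_j c_j e_X^j + n_X$ for some $n_X \in N_X$; then applying $\phi$ and using $\phi(e_X+N_X)=e_Z+N_Z$ forces $e_Z = \sum_j c_j e_Z^j + n_Z$ for some $n_Z \in N_Z$, exhibiting $(e_X,e_Z)\in P$.

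For the CSS characterization I would proceed in two steps. First, $E_X = N_X$ iff $E_Z = N_Z$ is immediate, since $\phi \colon E_X/N_X \xrightarrow{\sim} E_Z/N_Z$ is an isomorphism, so one quotient vanishes exactly when the other does. Second, I claim $H$ is a subsystem CSS code iff $E_X = N_X$. For the backward direction, if $E_X = N_X$ then $E_X/N_X = 0$ has empty basis, the sums in Eq.~(\ref{eq:GoursatHExplicit}) drop out, and $H = N_X \times N_Z$ is manifestly a direct product. For the forward direction, suppose $H = H_X \times H_Z$; feeding this into the definitions in Eq.~(\ref{eq:defsr}), the projection gives $E_X = H_X$, while $N_X = \{n_X : (n_X,0)\in H_X\times H_Z\} = H_X$ because $0 \in H_Z$ always, so $E_X = N_X$.

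I expect no genuine obstacle, as the proposition is essentially a coordinate-level restatement of Goursat's Lemma. The only points demanding care are the coset bookkeeping in the first part --- in particular invoking that $\phi$ is an honest $\dsF_p$-linear isomorphism rather than a mere bijection, so that it commutes with the finite combinations $\sum_j c_j(\cdot)$ --- and keeping the chosen lifts compatible, namely that each $e_Z^j$ is taken to be a representative of $\phi(e_X^j + N_X)$, which is already built into the statement. With these observations in hand, both inclusions and the CSS equivalence follow by direct computation.
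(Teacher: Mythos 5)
Your proof is correct and follows essentially the same route as the paper: both establish Eq.~(\ref{eq:GoursatHExplicit}) by double inclusion using the linearity of $\phi$ and the basis expansion of $e_X + N_X$. You additionally spell out the ``Moreover'' CSS characterization (via $N_X = H_X$ since $0 \in H_Z$, and the empty-basis degeneration), which the paper's appendix proof leaves implicit; that part is also correct.
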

\begin{proof}
See Appendix \ref{app:ProofGoursatProperties}.
\end{proof}

\begin{figure}[htbp]
\includegraphics[width = 0.48 \textwidth]{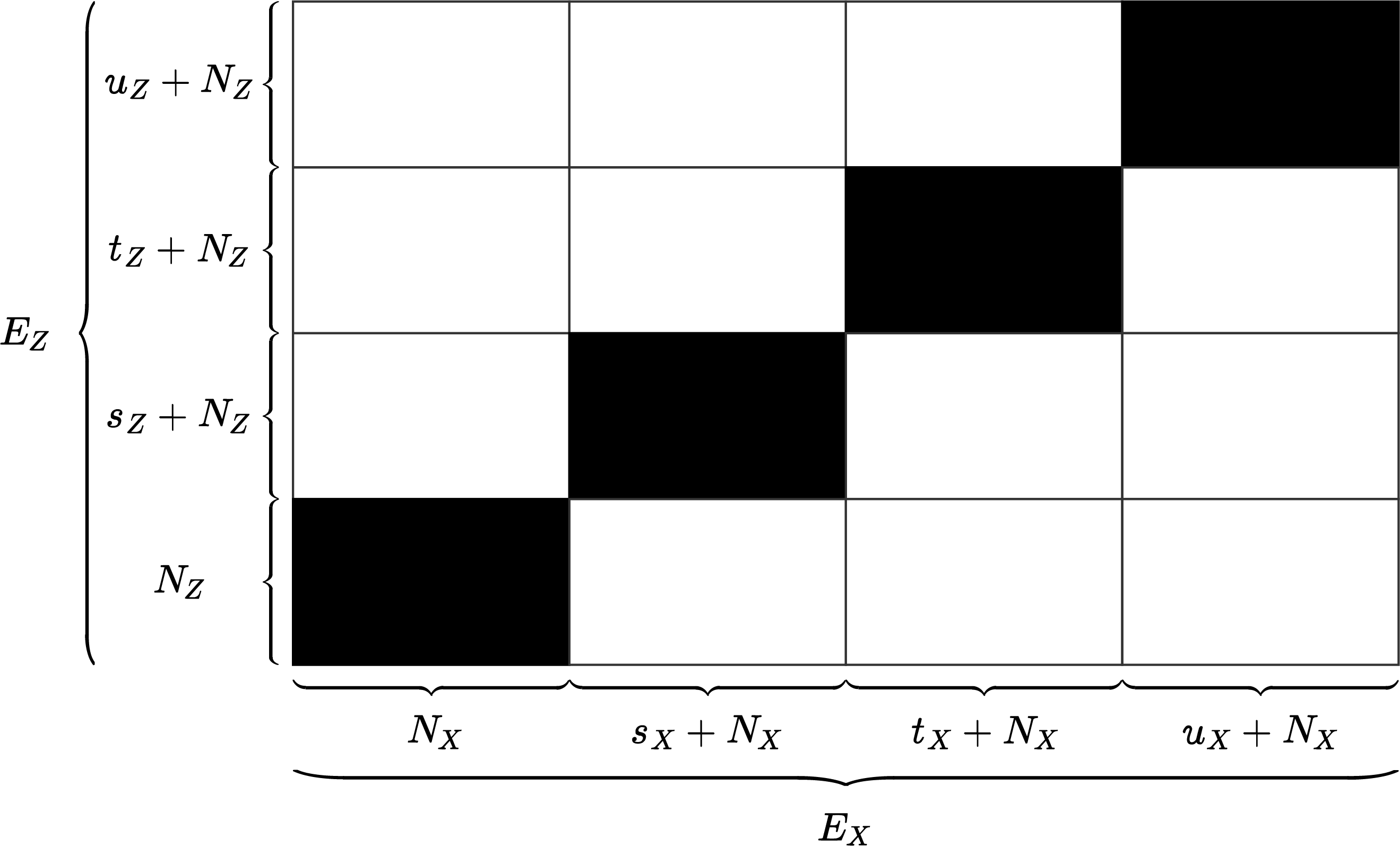}
\caption{Illustration of Lemma \ref{lem:Goursat} and Proposition \ref{prop:GoursatProperties} with $|{E_X}/{N_X}|= 4 =|{E_Z}/{N_Z}|$. Any subspace \(H \leq G \times G\) consists of ``blocks'' --- i.e., cosets --- (dark shaded rectangles) of a direct product \({N_X}\times {N_Z}\) (bottom left dark shaded rectangle) contained inside a direct product \({E_X}\times {E_Z}\) (large rectangle). Shaded regions belong to $H$, while unshaded regions do not. Here, $\phi \left(v_X + {N_X}\right)= v_Z+{N_Z}$ for $v = s,t,u$.}
\label{fig:Figure1}
\end{figure}

As indicated in Eq.~(\ref{eq:GoursatHExplicit}), the isomorphism $\phi$ specifies which $X$-type operators ${e_X^j}$ and $Z$-type operators ${e_Z^j}$ are ``paired together'' to form generators of $H$. We illustrate this in the following example.

\begin{example}[5-qubit code]
The five-qubit code \cite{LMPZ96} is a stabilizer code, so its gauge group is the join of its stabilizer group and phases, \(\mathcal{G} = \langle \mathcal{S}, \Phi\rangle\).

Inspecting the stabilizer generators of this code in Eq.~(\ref{eq:five-qubit-stabs}), we see that there are no elements of the gauge group that are purely $X$-type or $Z$-type. Thus, the internal code $N_X \times N_Z$ is trivial.

On the other hand, the external code $E_X \times E_Z$ is generated by splitting the gauge group into its $X$-part and $Z$-part. That is, with a slight abuse of notation, we have
\begin{align}
{E_X} &= \langle IXXII, IIXXI, IIIXX, XIIIX\rangle \text{ and}\\
{E_Z} &= \langle ZIIZI, IZIIZ, ZIZII, IZIZI \rangle.
\end{align}

The isomorphism $\phi$ is constructed from combinations of ${e_X} \in {E_X}$ and ${e_Z} \in {E_Z}$ such that ${e_X} {e_Z} \in \mathcal G$. Thus, again with a slight abuse of notation, the action of $\phi$ is
\begin{align}
    IXXII &\mapsto ZIIZI,& IIXXI &\mapsto IZIIZ,\\
     IIIXX &\mapsto ZIZII,& XIIIX &\mapsto IZIZI.
\end{align}

An alternative set of generators for the external code is
\begin{align}
{E_X} &= \langle XXIII, IXXII, IIXXI, IIIXX\rangle \text{ and}\\
{E_Z} &= \langle ZZIII, IZZII, IIZZI, IIIZZ \rangle.
\end{align}
The external code can be gauged fixed to obtain the \([[5,1,0,3]]\) code. However, the external code has distance one, since $XIIII$ and $ZIIII$ are dressed logicals.
\end{example}

To aid in computer implementations, we note that the Goursat data of a subsystem stabilizer code $H$ can be expressed explicitly in terms of its generators. This is useful for explicitly computing the Goursat data of subsystem stabilizer codes.

\begin{proposition}\label{prop:ExplicitGoursat}
Suppose that
\begin{equation}
    H =  \left\langle \left({e_X^1},{e_Z^1}\right),\ldots ,\left({e_X^l},{e_Z^l}\right)\right\rangle \leq G\times G.
\end{equation}
Define the $\dim G \times l$ matrices
\begin{equation}
    \pi_X\coloneqq \left(
\begin{array}{ccc}
 | & & | \\
 {e_X^1} & \ldots & {e_X^l} \\
 | & & | \\
\end{array}
\right)
\end{equation}
and
\begin{equation}
    \pi_Z\coloneqq \left(
\begin{array}{ccc}
 | & & | \\
 {e_Z^1} & \ldots & {e_Z^l} \\
 | & & | \\
\end{array}
\right).
\end{equation}
Then
\begin{equation}\label{eq:ExplicitGoursat}
\goursatCorrespondence{G}{H}{\pi_X{\parenth{\dsF_p^l}}}{\pi_Z{\parenth{\dsF_p^l}}}{\pi_X{\parenth{\ker \pi_Z}}}{\pi_Z{\parenth{\ker \pi_X}}}{\phi}\end{equation}
for some isomorphism \(\phi\). Moreover, \(H\) is a subsystem CSS code iff
\begin{equation}\label{eq:KernelSum}
\ker \pi_X+\ker \pi_Z=\dsF_p^l.\end{equation}
\end{proposition}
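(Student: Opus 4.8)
The plan is to compute the Goursat data directly from the definitions in Lemma \ref{lem:Goursat}\ref{lem:GoursatA}, exploiting the fact that the generating description of $H$ gives an explicit parameterization of its elements. Since $H = \langle (e_X^1, e_Z^1), \ldots, (e_X^l, e_Z^l)\rangle$, every element of $H$ has the form $\sum_j c_j (e_X^j, e_Z^j) = (\pi_X c, \pi_Z c)$ for some coefficient vector $c = (c_1, \ldots, c_l) \in \dsF_p^l$, and conversely every such pair lies in $H$. Thus $H = \{(\pi_X c, \pi_Z c) \mid c \in \dsF_p^l\}$, which is the key identity from which everything follows.

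First I would read off $E_X$ and $E_Z$. By definition $E_X$ is the image of $H$ under projection onto the first factor, which by the parameterization is exactly $\{\pi_X c \mid c \in \dsF_p^l\} = \pi_X(\dsF_p^l)$; symmetrically $E_Z = \pi_Z(\dsF_p^l)$. Next I would compute $N_X$ and $N_Z$. By definition $(n_X, 0) \in H$ iff $n_X = \pi_X c$ and $\pi_Z c = 0$ for some $c$, i.e.\ iff $n_X \in \pi_X(\ker \pi_Z)$; hence $N_X = \pi_X(\ker \pi_Z)$, and symmetrically $N_Z = \pi_Z(\ker \pi_X)$. This establishes the tower in Eq.~(\ref{eq:ExplicitGoursat}); the isomorphism $\phi$ is then furnished automatically by Lemma \ref{lem:Goursat}\ref{lem:GoursatA}, so no explicit construction of $\phi$ is needed.

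For the CSS criterion, I would invoke Proposition \ref{prop:GoursatProperties}, which says $H$ is a subsystem CSS code iff $E_X = N_X$. Using the formulas just derived, this is the equality $\pi_X(\dsF_p^l) = \pi_X(\ker \pi_Z)$. The inclusion $\pi_X(\ker \pi_Z) \subseteq \pi_X(\dsF_p^l)$ always holds, so equality amounts to $\pi_X(\dsF_p^l) \subseteq \pi_X(\ker \pi_Z)$, i.e.\ to the statement that for every $c \in \dsF_p^l$ there is some $c' \in \ker \pi_Z$ with $\pi_X c = \pi_X c'$. Since $\pi_X c = \pi_X c'$ iff $c - c' \in \ker \pi_X$, this says precisely that every $c$ decomposes as $c = (c - c') + c'$ with $c - c' \in \ker \pi_X$ and $c' \in \ker \pi_Z$, that is, $\dsF_p^l = \ker \pi_X + \ker \pi_Z$. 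This yields Eq.~(\ref{eq:KernelSum}).

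The argument is essentially bookkeeping once the parameterization $H = \{(\pi_X c, \pi_Z c) \mid c \in \dsF_p^l\}$ is in hand; the only step requiring a moment's thought is the final equivalence, where the main obstacle is recognizing that the equality of images $\pi_X(\dsF_p^l) = \pi_X(\ker \pi_Z)$ is governed solely by whether the two kernels span $\dsF_p^l$, rather than by any finer interaction between $\pi_X$ and $\pi_Z$.
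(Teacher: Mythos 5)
Your proposal is correct and follows essentially the same route as the paper: the Goursat data $E_X=\pi_X(\dsF_p^l)$, $E_Z=\pi_Z(\dsF_p^l)$, $N_X=\pi_X(\ker\pi_Z)$, $N_Z=\pi_Z(\ker\pi_X)$ are read off directly from the definitions in Lemma~\ref{lem:Goursat}\ref{lem:GoursatA}, and the CSS criterion is reduced via Proposition~\ref{prop:GoursatProperties} to $\pi_X(\dsF_p^l)=\pi_X(\ker\pi_Z)$, which is then shown equivalent to $\ker\pi_X+\ker\pi_Z=\dsF_p^l$. The only cosmetic difference is that the paper packages this last equivalence as a small standalone lemma (stated for general $W\leq V\leq U$) and specializes it, whereas you prove the needed special case inline; the content is identical.
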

\begin{proof}
See Appendix \ref{app:ProofExplicitGoursat}.
\end{proof}

Note that Eq.~(\ref{eq:KernelSum}) provides a characterization of subsystem CSS codes that can be readily checked given the generators of $H$. Indeed, to check whether a given subsystem stabilizer code is a subsystem CSS code, one can compute spanning sets for $\ker \pi_X$ and $\ker \pi_Z$, take their union to form a spanning set for $\ker \pi_X+\ker \pi_Z$, and compare the dimension of that spanning set with $l$. However, this procedure is more complicated than simply computing the reduced row echelon form of the generator matrix of $H$, so Eq.~(\ref{eq:KernelSum}) may be most interesting as a theoretical characterization of subsystem CSS codes.

\subsection{Goursat data of the stabilizer group}\label{sec:SSCStructure}

The stabilizer group \(H\cap H^{\om}\) of a subsystem stabilizer code \(H\) can be described in terms of the Goursat data of $H$. More broadly, our goal in this section is to determine the Goursat data of each subsystem stabilizer code in Eq.~(\ref{eq:HTower}) in terms of the Goursat data of $H$, just as we determined the classical linear codes comprising each subsystem CSS code in Eq.~(\ref{eq:HTowerCSS}) in terms of the classical linear codes comprising $H$. 

We first derive the Goursat data of the gauge-preserving group $H^\om$ in terms of that of $H$.

\begin{proposition}\label{prop:GoursatComplement}
Suppose that
\begin{equation}\goursatCorrespondence{G}{H}{{E_X}}{{E_Z}}{{N_X}}{{N_Z}}{\phi}.\end{equation}
Then there exist isomorphisms $\phi_X, \phi_Z$ in the diagram
\begin{equation}{N_Z^\theta} /{E_Z^\theta} \overset{\phi_Z}{\xrightarrow{\sim}}{E_Z}/{N_Z}\overset{\phi ^{-1}}{\xrightarrow{\sim}}{E_X}/{N_X}\overset{\phi_X}{\xleftarrow{\sim}}{N_X^\theta}/{E_X^\theta},\end{equation}
such that with
\begin{equation}\phi ^{\om} \coloneqq \phi_X^{-1}\phi ^{-1}\phi_Z,\end{equation}
we have
\begin{equation}\label{eq:abcd123}\goursatCorrespondence{G}{H^\om}{N_Z^\theta}{N_X^\theta}{E_Z^\theta}{E_X^\theta}{\phi^\om}\end{equation}
\end{proposition}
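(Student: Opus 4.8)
The plan is to read off the Goursat data of $H^{\om}$ directly from Lemma~\ref{lem:Goursat}\ref{lem:GoursatA} and then match the resulting isomorphism against the stated composition. The backbone of the argument is the small duality identity that, for any $L\leq G\times G$, the pure-type spaces of $L^{\om}$ are the $\theta$-complements of the projection spaces of $L$:
\[
N_X(L^{\om})=E_Z(L)^{\theta},\qquad N_Z(L^{\om})=E_X(L)^{\theta},
\]
where $N_X,N_Z,E_X,E_Z$ denote the four Goursat spaces of the indicated code. This is immediate from $\om((a,b),(c,d))=\theta(b,c)-\theta(a,d)$: one has $(c,0)\in L^{\om}$ iff $\theta(b,c)=0$ for every $(a,b)\in L$, i.e.\ iff $\theta(E_Z(L),c)=0$, i.e.\ iff $c\in E_Z(L)^{\theta}$, and the computation for $N_Z(L^{\om})$ is symmetric. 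Applying this with $L=H$ gives the internal code of $H^{\om}$, namely $N_X(H^{\om})=E_Z^{\theta}$ and $N_Z(H^{\om})=E_X^{\theta}$; applying it with $L=H^{\om}$ and using $(H^{\om})^{\om}=H$ together with $(W^{\theta})^{\theta}=W$ gives the external code, $E_X(H^{\om})=N_Z^{\theta}$ and $E_Z(H^{\om})=N_X^{\theta}$. This reproduces the tower in Eq.~(\ref{eq:abcd123}) with no dimension counting (though the identities $\De$-style of Eq.~(\ref{eq:DirectProductOmega}) applied to $N_X\times N_Z\leq H\leq E_X\times E_Z$ furnish a quick consistency check via sandwiching).

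To pin down the isomorphism, I would apply Lemma~\ref{lem:Goursat}\ref{lem:GoursatA} to $H^{\om}$: its Goursat isomorphism sends $c+E_Z^{\theta}\mapsto d+E_X^{\theta}$, where $d$ is any element with $(c,d)\in H^{\om}$, which after unwinding $\om$ means $\theta(e_X,d)=\theta(e_Z,c)$ for all $(e_X,e_Z)\in H$. Next I would introduce the two $\theta$-pairings
\[
E_X/N_X\times N_X^{\theta}/E_X^{\theta}\to\dsF_p,\qquad E_Z/N_Z\times N_Z^{\theta}/E_Z^{\theta}\to\dsF_p,
\]
and check they are well defined and nondegenerate, hence perfect, since by Eq.~(\ref{eq:xicomplementdim}) both factors share the common dimension $\dim(E_X/N_X)=\dim(E_Z/N_Z)$. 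Using $\phi(e_X+N_X)=e_Z+N_Z$, the defining relation for $d$ reads $\theta(e_X,d)=\theta(\phi(e_X+N_X),c)$ for all $e_X$, which says that, under the isomorphisms to dual spaces furnished by the two pairings, $d+E_X^{\theta}$ is the image of $c+E_Z^{\theta}$ under the transpose of $\phi$. Folding the two perfect pairings into isomorphisms $\phi_Z\colon N_Z^{\theta}/E_Z^{\theta}\to E_Z/N_Z$ and $\phi_X\colon N_X^{\theta}/E_X^{\theta}\to E_X/N_X$ then exhibits $\phi^{\om}=\phi_X^{-1}\phi^{-1}\phi_Z$, as claimed; the freedom in the identification of each $m$-dimensional quotient with its dual is exactly what the existence clause ``there exist isomorphisms $\phi_X,\phi_Z$'' permits.

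The main obstacle is the bookkeeping in this last step rather than any deep difficulty. One must verify that $c\mapsto\bigl(e_X+N_X\mapsto\theta(e_Z,c)\bigr)$ is a well-defined functional on $E_X/N_X$ — here $c\in N_Z^{\theta}$ is precisely what kills the ambiguity in the $\phi$-partner $e_Z$, since two partners differ by an element of $N_Z$ — and one must confirm nondegeneracy of the two $\theta$-pairings on the quotients (using $(N_X^{\theta})^{\theta}=N_X$). The subtlest point is conceptual: the natural computation produces the transpose of $\phi$ conjugated by the pairing isomorphisms, so some care is needed to present it in the stated form involving $\phi^{-1}$; this is handled by absorbing the (non-canonical) dual identifications into the definitions of $\phi_X$ and $\phi_Z$. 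By contrast, the tower computation is entirely routine once the duality identity $N_X(L^{\om})=E_Z(L)^{\theta}$ is established.
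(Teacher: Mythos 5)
Your proposal is correct and follows essentially the same route as the paper: the tower in Eq.~(\ref{eq:abcd123}) comes from the $\om$--$\theta$ duality between the pure-type and projection subspaces, and your isomorphisms $\phi_X,\phi_Z$ built from the perfect $\theta$-pairings on $E_X/N_X\times N_X^\theta/E_X^\theta$ and $E_Z/N_Z\times N_Z^\theta/E_Z^\theta$ are exactly the paper's basis-explicit maps $n_X^\theta\mapsto\sum_j\theta(e_X^j,n_X^\theta)(e_X^j+N_X)$ written invariantly. The transpose subtlety you flag is resolved in the paper precisely as you suggest, by taking the basis of $E_Z/N_Z$ to be the $\phi$-image of the chosen basis of $E_X/N_X$ so that the dual identifications absorb $\phi^T$ into $\phi^{-1}$.
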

\begin{proof}
See Appendix \ref{app:ProofGoursatComplement}.
\end{proof}

Intuitively, Proposition \ref{prop:GoursatComplement} states that if a subsystem stabilizer code $H$ corresponds to the subsystem CSS codes $N \leq H \leq E$, then the complement $H^\om$ corresponds to the subsystem CSS codes $E^\om \leq H^\om \leq N^\om$ (c.f. Eq.~(\ref{eq:DirectProductOmega})).

Next, we derive the Goursat data of an intersection of codes $H \cap \tilde{H}$ in terms of the Goursat data of the original codes $H$ and $\tilde{H}$. This result is slightly more general than is needed to determine the Goursat data of the stabilizer $H \cap H^\omega$, since \(\tilde H\) does not have to equal \(H^\omega\).

\begin{proposition}\label{prop:GoursatIntersection}
Suppose that
\begin{equation}\goursatCorrespondence{G}{H}{{E_X}}{{E_Z}}{{N_X}}{{N_Z}}{\phi}\end{equation}
and
\begin{equation}\goursatCorrespondence{G}{\tilde{H}}{\tilde{{E_X}}}{\tilde{{E_Z}}}{\tilde{{N_X}}}{\tilde{{N_Z}}}{\tilde{\phi}}.\end{equation}

Then there exist subspaces
\begin{widetext}
\begin{equation}\label{eq:complicated_intersection}
\begin{array}{ccccccc}
 {E_X}\cap \tilde{{E_X}} & & \left({E_X}/{N_X}\right)\times \left(\tilde{{E_X}}/\tilde{{N_X}}\right) & & \left({E_Z}/{N_Z}\right)\times \left(\tilde{{E_Z}}/\tilde{{N_Z}}\right)
& & {E_Z}\cap \tilde{{E_Z}} \\
 | & & | & & | & & | \\
 T & , & \scU & , & \scV & , & W \\
 | & & | & & | & & | \\
 {N_X}\cap \tilde{{N_X}} & & 0 & & 0 & & {N_Z}\cap \tilde{{N_Z}} \\
\end{array}
\end{equation}
\end{widetext}
and isomorphisms $\phi_T, \phi_W$ in the diagram
\begin{equation}
 T\left/\left({N_X}\cap \tilde{{N_X}}\right)\right. \overset{\phi_T}{\xrightarrow{\sim}} \scU \overset{\phi \times \tilde{\phi}}{\xrightarrow{\sim}} \scV \overset{\phi_W}{\xleftarrow{\sim}} W\left/\left({N_Z}\cap \tilde{{N_Z}}\right)\right.
 \end{equation}
such that with
\begin{equation}\phi ^{\cap} \coloneqq \phi_W^{-1}\left(\phi \times \tilde{\phi} \right)\phi_T,\end{equation}
we have
\begin{equation}\goursatCorrespondence{G}{H\cap \tilde{H}}{T}{W}{\parenth{{N_X}\cap \tilde{{N_X}}}}{\parenth{{N_Z}\cap \tilde{{N_Z}}}}{\phi ^{\cap}}.\end{equation}
\end{proposition}
\begin{proof}
See Appendix \ref{app:ProofGoursatIntersection}.
\end{proof}

Intuitively, Proposition \ref{prop:GoursatIntersection} states that the internal code of the intersection of $H$ and $\tilde{H}$ is exactly the intersection of the internal codes of $H$ and $\tilde H$. However, as indicated in the first and fourth columns of Eq.~(\ref{eq:complicated_intersection}), the external code of the intersection of $H$ and $\tilde{H}$ need only be contained in (but not necessarily equal to) the intersection of the external codes of $H$ and $\tilde{H}$.

Propositions \ref{prop:GoursatComplement} and \ref{prop:GoursatIntersection} are sufficient to determine the Goursat data of each subsystem stabilizer code in Eq.~(\ref{eq:HTower}) in terms of the Goursat data of $H$. However, these expressions for subsystem stabilizer codes are not as explicit as the expressions for subsystem CSS codes in Eq.~(\ref{eq:HTowerCSS}). Nonetheless, Propositions \ref{prop:GoursatComplement} and \ref{prop:GoursatIntersection} might be useful for some applications, such as a characterizaztion of subsystem stabilizer codes.

\subsection{Characterizing subsystem stabilizer codes using the Goursat data of the stabilizer group}\label{sec:SSCGeneralize}

Here, we characterize subsystem stabilizer codes according to the Goursat data of their stabilizer groups (see Fig.~\ref{fig:Figure2} for a summary).

\begin{figure}[htbp]
\includegraphics[width = 0.5\textwidth]{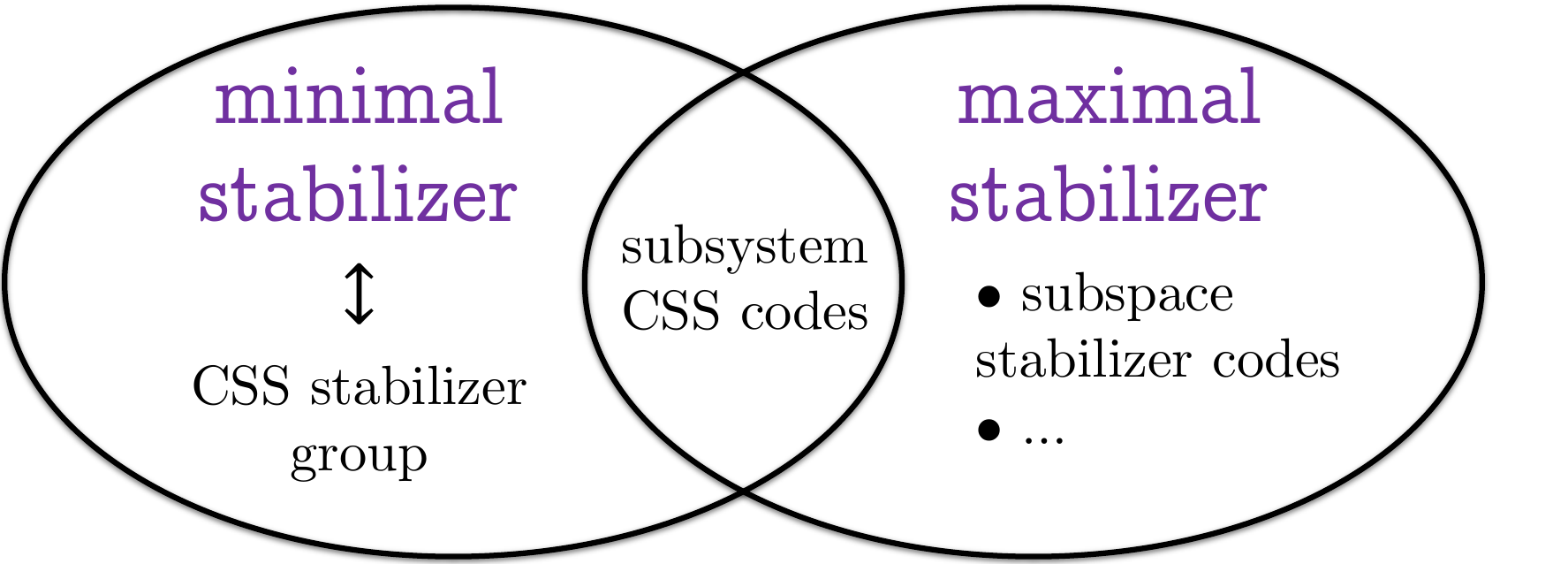}
\caption{Venn diagram depicting features of subsystem stabilizer codes with maximal and minimal stabilizer groups, per Definition \ref{def:maxmincenter}. A code has minimal stabilizer iff its stabilizer group is CSS. A code has minimal and maximal stabilizer iff its gauge group is CSS. Every subspace stabilizer code has maximal stabilizer, but there are codes with maximal stabilizer that are not subspace stabilizer codes.}
\label{fig:Figure2}
\end{figure}

Recall that a subsystem stabilizer code \(H\) can equivalently be described by its Goursat data --- the internal CSS code $N$, the external CSS code $E$, and the isomorphism \(\phi\). Here, we are interested in the number of blocks in the stabilizer group $H \cap H^\omega$ of $H$ (see Figure \ref{fig:Figure1}). To motivate this inquiry, recall that the stabilizer group has one block iff it is CSS, and the stabilizer group has many blocks iff it is ``far from CSS''. As we will show, the number of blocks in the stabilizer group is bounded above by a function of $N$ and $E$. This upper bound is independent of the particular isomorphism $\phi$, which determines the true number of blocks in the stabilizer group of $H$. Thus, for fixed internal and external codes $N$ and $E$, we can define two extreme classes of codes which, in some sense, achieve the largest and the smallest number of blocks admitted by $N$ and $E$.




\begin{definition}\label{def:maxmincenter}
Let
\begin{equation}\goursatCorrespondence{G}{H}{{E_X}}{{E_Z}}{{N_X}}{{N_Z}}{\phi}.\end{equation}
By Propositions \ref{prop:GoursatComplement} and \ref{prop:GoursatIntersection}, we have
\begin{equation}\label{eq:macdonal}
\begin{array}{ccccc}
 & & \parenth{{E_X}\cap {N_Z^\theta}} \times \parenth{{E_Z}\cap {N_X^\theta}} & &  \\
 G\times G & & | & & \\
 | & \longleftrightarrow & T \times W &,& \phi ^{\cap} \\
 H\cap H^{\om} & & | & & \\
 & & \parenth{{N_X}\cap {E_Z^\theta}} \times \parenth{{N_Z}\cap {E_X^\theta}} & & 
\end{array}
\end{equation}
for some subspace \(T \times W\) (interpreted as the external code of the stabilizer) and some isomorphism $\phi^\cap$. We say that $H$ has \emph{maximal stabilizer} if
\begin{equation}
T \times W = \parenth{{E_X}\cap {N_Z^\theta}} \times \parenth{{E_Z}\cap {N_X^\theta}},
\end{equation}
i.e., if $T \times W$ is as large as possible. We say that \(H\) has \emph{minimal stabilizer} if
\begin{equation}
T \times W = \parenth{{N_X}\cap {E_Z^\theta}} \times \parenth{{N_Z}\cap {E_X^\theta}},
\end{equation}
i.e., if $T \times W$ is as small as possible.
\end{definition}

It is in the following sense that subsystem stabilizer codes with maximal or minimal stabilizer generalize subsystem CSS codes.

\begin{proposition}\label{prop:minimalmaximal}
A subsystem stabilizer code has both maximal and minimal stabilizer iff it is a subsystem CSS code.
\end{proposition}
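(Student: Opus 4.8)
The plan is to reduce both directions to a single dimension count on the $\theta$-complements appearing in Definition \ref{def:maxmincenter}. Write $m \coloneqq \dim(E_X/N_X) = \dim(E_Z/N_Z)$, so that by Proposition \ref{prop:GoursatProperties} the code $H$ is a subsystem CSS code if and only if $m = 0$. For the ``if'' direction, I would suppose $H$ is CSS; then $N_X = E_X$ and $N_Z = E_Z$, so the maximal bound $(E_X \cap N_Z^\theta)\times(E_Z \cap N_X^\theta)$ and the minimal bound $(N_X \cap E_Z^\theta)\times(N_Z \cap E_X^\theta)$ in Eq.~(\ref{eq:macdonal}) are literally the same subspace $(N_X \cap N_Z^\theta)\times(N_Z \cap N_X^\theta)$. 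Since the external code $T \times W$ of the stabilizer is sandwiched between the minimal and maximal bounds by Eq.~(\ref{eq:macdonal}), it equals both, and $H$ has both maximal and minimal stabilizer.

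For the ``only if'' direction, suppose $H$ has both maximal and minimal stabilizer. Then $T \times W$ equals both bounds, so the two bounds coincide, giving
\begin{equation}
E_X \cap N_Z^\theta = N_X \cap E_Z^\theta \quad \text{and} \quad E_Z \cap N_X^\theta = N_Z \cap E_X^\theta.
\end{equation}
Note these are genuine equalities of nested subspaces, since $N_X \leq E_X$ together with $E_Z^\theta \leq N_Z^\theta$ forces $N_X \cap E_Z^\theta \leq E_X \cap N_Z^\theta$, and symmetrically $N_Z \cap E_X^\theta \leq E_Z \cap N_X^\theta$. In particular the relevant dimensions are equal on each side.

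The key step is the identity
\begin{equation}\label{eq:planidentity}
\bigl[\dim(E_X \cap N_Z^\theta) - \dim(N_X \cap E_Z^\theta)\bigr] + \bigl[\dim(E_Z \cap N_X^\theta) - \dim(N_Z \cap E_X^\theta)\bigr] = 2m,
\end{equation}
which holds for \emph{any} $H$, independently of $\phi$. To prove it I would first establish the elementary formula $\dim(A \cap B^\theta) = \dim A - \dim B + \dim(A^\theta \cap B)$ for subspaces $A,B \leq G$; this follows from $\dim(A \cap B^\theta) = \dim A + \dim B^\theta - \dim(A + B^\theta)$, the count $\dim B^\theta = \dim G - \dim B$ from Eq.~(\ref{eq:xicomplementdim}), and the complement-of-sum rule $(A + B^\theta)^\theta = A^\theta \cap B$ from Section \ref{sec:PrelimPauliVS}. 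Applying it with $(A,B) = (E_X, N_Z)$ gives $\dim(E_X \cap N_Z^\theta) - \dim(N_Z \cap E_X^\theta) = \dim E_X - \dim N_Z$, and with $(A,B) = (E_Z, N_X)$ gives $\dim(E_Z \cap N_X^\theta) - \dim(N_X \cap E_Z^\theta) = \dim E_Z - \dim N_X$. Summing and regrouping the four terms recovers the left-hand side of Eq.~(\ref{eq:planidentity}), while the right-hand side becomes $(\dim E_X - \dim N_X) + (\dim E_Z - \dim N_Z) = 2m$.

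Finally I would conclude: each bracketed difference in Eq.~(\ref{eq:planidentity}) is the dimension of a quotient of nested subspaces, hence nonnegative, and by the ``only if'' hypothesis both vanish; thus Eq.~(\ref{eq:planidentity}) forces $m = 0$, and by Proposition \ref{prop:GoursatProperties} the code $H$ is a subsystem CSS code. The main obstacle is spotting and proving the bookkeeping identity Eq.~(\ref{eq:planidentity}): once one knows that the ``maximal-minus-minimal'' block counts on the $X$- and $Z$-sides sum to exactly $2m$, both directions fall out immediately. A secondary point requiring care is verifying the containment of the minimal bound inside the maximal bound, so that the coinciding-bounds argument concerns genuinely nested subspaces rather than merely equidimensional ones.
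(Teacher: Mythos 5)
Your proof is correct and follows essentially the same route as the paper's: the ``if'' direction is the same trivial observation that the two bounds in Eq.~(\ref{eq:macdonal}) coincide when $E=N$, and your identity $\dim(A\cap B^\theta)-\dim(A^\theta\cap B)=\dim A-\dim B$ is exactly the dimension count $\dim A+\dim B^\theta=\dim(A+B^\theta)+\dim(A\cap B^\theta)$ (combined with Eq.~(\ref{eq:xicomplementdim})) that the paper applies to the pairs $(E_X,N_Z)$ and $(N_X,E_Z)$. The only difference is packaging: the paper concludes from the two separate inequalities $\dim E_X\geq\dim N_X$ and $\dim N_Z^\theta\geq\dim E_Z^\theta$, whereas you sum the two nonnegative gaps into a single identity totalling $2m$ --- equivalent bookkeeping.
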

\begin{proof}
See Appendix \ref{app:proofminimalmaximal}.
\end{proof}

Since subsystem CSS codes have many nice properties, as exhibited in this work and others, it may be interesting to study these two generalizations of subsystem CSS codes. In particular, the class of codes with minimal stabilizer is precisely the class of codes with CSS stabilizer, so it seems particularly natural to investigate whether codes with minimal stabilizer retain some useful properties enjoyed by subsystem CSS codes. In addition, the class of codes with maximal stabilizer is precisely the class of codes where the external code of the intersection of $H$ and $H^\omega$ is equal to the intersection of the external codes of $H$ and $H^\omega$ (see the discussion following Proposition \ref{prop:GoursatIntersection}), so this class of codes may also exhibit some interesting properties.

We remark that codes with minimal stabilizer have multiple characterizations.

\begin{corollary}\label{cor:DirectProductEquivalence}
The following are equivalent:
  \begin{enumerate}[label=(\arabic*),ref=(\arabic*)]
    \item\label{eq:Equiv0}
    $H$ has minimal stabilizer
    \item\label{eq:Equiv1}
    $H + H^{\omega}$ is a direct product
    \item\label{eq:Equiv2}
    $H \cap H^{\omega}$ is a direct product
    \item\label{eq:Equiv3}
    $H + H^{\omega} = ({E_X} + {N_Z^\theta}) \times ({E_Z} + {N_X^\theta})$
    \item\label{eq:Equiv4}
    $H \cap H^{\omega} = ({N_X} \cap {E_Z^\theta}) \times ({N_Z} \cap {E_X^\theta})$
  \end{enumerate}
\end{corollary}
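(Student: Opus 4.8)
The plan is to route every equivalence through the single observation that ``minimal stabilizer'' is precisely the statement that the stabilizer group $H \cap H^\omega$ is a subsystem CSS code. Recall from Proposition \ref{prop:GoursatProperties} that a subspace of $G \times G$ is a direct product exactly when its external code coincides with its internal code. By Definition \ref{def:maxmincenter}, the external code of $H \cap H^\omega$ is $T \times W$ and its internal code is $\parenth{N_X \cap E_Z^\theta} \times \parenth{N_Z \cap E_X^\theta}$, and minimal stabilizer asserts precisely that these two agree. Hence \ref{eq:Equiv0} holds iff the external and internal codes of $H \cap H^\omega$ coincide, which by Proposition \ref{prop:GoursatProperties} holds iff $H \cap H^\omega$ is a direct product. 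This establishes the equivalence of \ref{eq:Equiv0} and \ref{eq:Equiv2}, which I would take as the hub of the argument.

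Next I would connect the intersection to the sum via $\omega$-complementation. Using the identities from Section \ref{sec:PrelimPauliVS}, one has $\parenth{H \cap H^\omega}^\omega = H^\omega + H = H + H^\omega$, so $H + H^\omega$ is the $\omega$-complement of $H \cap H^\omega$. Since Eq.~(\ref{eq:DirectProductOmega}) shows that the $\omega$-complement of a direct product is again a direct product, and since complementation is an involution, $H + H^\omega$ is a direct product iff $H \cap H^\omega$ is. This gives the equivalence of \ref{eq:Equiv1} and \ref{eq:Equiv2}.

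It remains to upgrade the bare ``direct product'' conditions to the explicit formulas in \ref{eq:Equiv3} and \ref{eq:Equiv4}. The key point is that the external code of $H + H^\omega$ and the internal code of $H \cap H^\omega$ are fixed data of $H$, independent of whether these codes happen to be direct products. Projecting $H + H^\omega$ onto each factor and using the external code $N_Z^\theta \times N_X^\theta$ of $H^\omega$ supplied by Proposition \ref{prop:GoursatComplement} shows that the external code of $H + H^\omega$ is always $\parenth{E_X + N_Z^\theta} \times \parenth{E_Z + N_X^\theta}$; dually, the internal code of $H \cap H^\omega$ is always $\parenth{N_X \cap E_Z^\theta} \times \parenth{N_Z \cap E_X^\theta}$. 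Now a direct product equals both its own external code and its own internal code, so when $H + H^\omega$ (resp.~$H \cap H^\omega$) is a direct product it must equal the displayed expression, giving \ref{eq:Equiv1} $\Rightarrow$ \ref{eq:Equiv3} (resp.~\ref{eq:Equiv2} $\Rightarrow$ \ref{eq:Equiv4}); the reverse implications are immediate, since the right-hand sides in \ref{eq:Equiv3} and \ref{eq:Equiv4} are manifestly direct products.

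I expect the only real obstacle to be the bookkeeping in the third paragraph: correctly reading off the external code of $H + H^\omega$ and the internal code of $H \cap H^\omega$ from the Goursat data of $H$, and verifying that these match the expressions in \ref{eq:Equiv3} and \ref{eq:Equiv4} after applying the complement rules $\parenth{A \cap B}^\theta = A^\theta + B^\theta$ and Eq.~(\ref{eq:DirectProductOmega}). Everything else reduces to Proposition \ref{prop:GoursatProperties} (direct product iff external $=$ internal) together with the involutive, product-preserving behavior of $\omega$-complementation, so no genuinely new computation is required.
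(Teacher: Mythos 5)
Your proposal is correct and follows essentially the same route as the paper's proof: the definition of minimal stabilizer gives \ref{eq:Equiv0}$\iff$\ref{eq:Equiv2}, the identity $\parenth{H\cap H^{\om}}^{\om}=H+H^{\om}$ together with Eq.~(\ref{eq:DirectProductOmega}) links the sum and intersection conditions, and Propositions \ref{prop:GoursatComplement} and \ref{prop:GoursatIntersection} supply the explicit internal/external codes that yield \ref{eq:Equiv3} and \ref{eq:Equiv4}. The only difference is cosmetic routing (you prove \ref{eq:Equiv1}$\Rightarrow$\ref{eq:Equiv3} directly from the external code of the sum, where the paper instead chains \ref{eq:Equiv3}$\iff$\ref{eq:Equiv4}$\iff$\ref{eq:Equiv2}), and your bookkeeping of external codes of sums and internal codes of intersections is accurate.
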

\begin{proof}
By definition, \ref{eq:Equiv0} and \ref{eq:Equiv2} are equivalent. By Eq.~(\ref{eq:DirectProductOmega}), \ref{eq:Equiv1} and \ref{eq:Equiv2} are equivalent, and \ref{eq:Equiv3} and \ref{eq:Equiv4} are equivalent. By Propositions \ref{prop:GoursatComplement} and \ref{prop:GoursatIntersection}, \ref{eq:Equiv2} and \ref{eq:Equiv4} are equivalent.
\end{proof}

This corollary yields a simple algorithm to test whether a subsystem stabilizer code has minimal stabilizer. Indeed, given a spanning set for $H$, one can compute a spanning set for $H+H^\om$ and apply Eq.~(\ref{eq:KernelSum}) (or row reduction) to test if $H + H^\om$ is a direct product, i.e., if $H$ has minimal stabilizer. We leave a more detailed investigation of codes with maximal and minimal stabilizer for future research.

\section{Conclusion} \label{sec:Conclusion}

In this work, we study subsystem stabilizer codes in a linear algebraic framework by viewing Pauli groups as vector spaces and subsystem stabilizer codes as subspaces via the symplectic representation. 
This perspective allows us to construct a doubling map that converts any modular-qudit subsystem stabilizer code into a subsystem CSS code with comparable parameters (see Theorem \ref{thm:mapping} and Appendix \ref{app:Generalize}). 

The linear algebraic perspective also elucidates the correspondence between a subsystem CSS code and a pair of classical linear codes, yielding an explicit basis of codewords (see Section \ref{sec:subsystemCSS}), a generalized Steane-type recovery procedure (see Section \ref{sec:CSSRecovery}), and other results (see Appendices \ref{app:Par} and \ref{app:parameterize}). 

Finally, the linear-algebraic perspective enables a characterization of subsystem stabilizer codes via Goursat's Lemma, which yields two generalizations of subsystem CSS codes (see Section \ref{sec:SubsystemStabilizerCode}).

We conclude with questions and outlook for future investigations.
\begin{enumerate}
    \item Is our ``doubling'' map from \([[n,k,r,d]]\) subsystem stabilizer codes to \([[2n,2k,2r, d \leq d' \leq 2d]]\) subsystem CSS codes in Theorem \ref{thm:mapping} tight in terms of the required number of physical qudits?
    \item Which subsystem stabilizer codes yield ``doubled'' subsystem CSS codes with \textit{exactly} twice the code distance?
    \item Does our recovery procedure in Section \ref{sec:CSSRecovery} interact nicely with the ``doubled" subsystem CSS code?
    \item 
    Keeping in mind that Steane error correction is fault-tolerant for subspace codes, can our subsystem syndrome extraction procedure in Section \ref{sec:CSSRecoverySyndrome} be made fault-tolerant?
    Fault-tolerant Steane error correction for subspace CSS codes relies on the codes' transversal implementations of inter-block logical CNOT gates and check operator measurements. 
    A subsystem CSS code can be viewed as a subspace CSS code with some logical qubits relegated as gauge qubits.
    Thus, subsystem CSS codes inherit the ingredients required to make Steane error correction fault-tolerant \cite{G24}. 
    Proving fault tolerance of Steane error correction for subsystem CSS codes may be possible with, e.g., the exRec formalism \cite{AGP06}, and we leave this as an exciting question for future work.
    \item Does the Goursat-based characterization in Section \ref{sec:Goursat} simplify significantly for subspace stabilizer codes?
    \item How does the Goursat data of a code behave under Clifford deformations?
    \item Is there a concise characterization of subsystem stabilizer codes with maximal (minimal) stabilizer, as introduced in Definition \ref{def:maxmincenter}?
    \item Do subsystem stabilizer codes with maximal (minimal) stabilizer have nice structural properties, analogous to those described in Proposition \ref{prop:CSSStructure} for subsystem CSS codes?
    \item     Our doubling map holds for all modular-qudit subsystem stabilizer codes \eczoo{qudit_subsystem_stabilizer} --- can we generalize our other results to modular-qudit codes or operator algebra stabilizer codes \cite{KS08,DKV24} \eczoo{qubit_stabilizer_oaqecc}? For a generalization to modular-qudit codes, it is helpful that the theory of classical linear codes, subsystem stabilizer codes, and Goursat's Lemma all generalize to rings of the form $(\dsZ / m \dsZ)^n$ \cite{SAS17, Ellison23, MG22}, but many of our proofs in Sections \ref{sec:CSSRecovery} and \ref{sec:SubsystemStabilizerCode} rely on the vector space structure of the underlying Pauli group. In particular, the lack of a basis for modular-qudit Pauli groups may complicate the existing proofs in this work. For a generalization to operator algebra stabilizer codes, one may need to add to our perspective of subsystem stabilizer codes as subspaces of vector spaces (Section \ref{sec:PrelimSSCVS}), since this perspective ignores phases and thus is insensitive to the particular factor of the logical algebra used to encode information (see Eqs.~(\ref{eq:StabilizerSpaceDecomposition}) and (\ref{eq:StabilizerAlgebraDecomposition})). Since operator algebra codes encode information in more than one such factor in general, one must keep track of these factors when generalizing our results to operator algebra codes. One may also investigate how to transition from the Schrödinger picture, in which this paper is expressed, to the Heisenberg picture, in which the theory of operator algebra codes is usually expressed.
    \item Using existing (non-CSS) Floquet codes~\cite{Hastings2021,Zhang22,Bauer2023,Sullivan23,Ellison23_2,Davydova23,Dua23} \eczoo{floquet}, new CSS Floquet codes~\cite{Aasen2022,Davydova22,brown_2022,Bombin23,Townsend_23} can be constructed using our mapping. Such CSS codes could be more amenable to the analysis of fault-tolerance using ZX calculus~\cite{Bombin23,Bauer2023,Townsend_23}.   

\end{enumerate}

\begin{acknowledgments}
We thank Nathan Pflueger, Kenneth Brown, Andrew Nemec, and Eric Sabo for helpful feedback and suggestions.
MLL acknowledges William Gasarch and the REU-CAAR program at the University of Maryland, College Park.
NT is supported by the Walter Burke Institute for Theoretical Physics at Caltech.
Contributions to this work by NIST, an agency of the US government, are not subject to US copyright.
Any mention of commercial products does not indicate endorsement by NIST. 
VVA thanks Ryhor Kandratsenia and Olga Albert for providing daycare support throughout this work.
\end{acknowledgments}

\newpage~\newpage
\appendix

\section{Proofs of Propositions}\label{app:Proofs}

\subsection{Proof of Proposition \ref{prop:VSTowerParameters}}\label{app:ProofVSTowerParameters}
Write
\begin{align}\overline{H} &\coloneqq \left( H + H^{\omega} \right) / H \text{ and} \\ \underline{H} &\coloneqq H / \left(H \cap H^{\omega} \right).\end{align}
One verifies that the map
\begin{equation}\left(a+\left(H\cap H^{\om} \right), b+\left(H\cap H^{\om} \right)\right) \mapsto \om (a, b)\end{equation}
is an antisymmetric form on \(\underline{H}\). Since every vector space with an antisymmetric form has even dimension \cite{TA86}, we have \(\dim \underline{H}=2r\) for some \(r\in \dsZ\). Moreover, observe that
\begin{equation}
\begin{aligned}
    \dim \overline{H} + \dim \underline{H} &= \dim H + H^\om - \dim H \cap H^\om \\
    &= \dim G \times G - 2 \dim H \cap H^\om,
\end{aligned}
\end{equation}
so \(\dim \overline{H}=2k\) for some \(k\in \dsZ\). Next, note that
\begin{equation}\dim G \times G-\dim \left(H+H^{\om}\right) = \dim \left(H\cap H^{\om}\right).\end{equation}
Since the dimensions in Eq.~(\ref{eq:HTower}) must sum to \(\dim G \times G = 2n\), we find that \(\dim H\cap H^{\om} = n -k-r\), as needed. The final statement follows by defining $d$ appropriately.

\subsection{Proof of Proposition \ref{prop:CSSStructure}}\label{app:ProofCSSStructure}

First, note that Eq.~(\ref{eq:HTowerCSS}) follows directly from Eq.~(\ref{eq:DirectProductOmega}). Next, observe that
\begin{equation}
\begin{aligned}
    \dim {\parenth{H_X + H_Z^\theta}}/H_X &=
    \dim H_X^\theta / \left(H_Z\cap H_X^\theta\right) \\
    &= \dim {\parenth{H_Z + H_X^\theta}}/H_Z
\end{aligned}
\end{equation}
and
\begin{equation}
\begin{aligned}
    \dim H_X/{\parenth{H_X \cap H_Z^\theta}} &=
    \dim {\parenth{H_Z + H_X^\theta}}/H_X^\theta \\
    &= \dim H_Z/{\parenth{H_Z \cap H_X^\theta}},
\end{aligned}
\end{equation}
so the claim about dimensions follows. Finally, recall that
\begin{equation}d\coloneqq \min \swt\left({\parenth{H+H^{\om}}}\setminus H\right).\end{equation}
Fix \((a,b)\in {\parenth{H+H^{\om}}}\setminus H\) with \(\swt(a,b)=d\). Then \(a\in {\parenth{H_X+H_Z^{\theta}}}\) and \(b\in
{\parenth{H_Z+H_X^{\theta}}}\), but either \(a\notin H_X\) or \(b\notin H_Z\). If \(a\notin H_X\), then
\begin{equation}
 d = \swt(a,b) \geq \swt(a,0) = \wt(a) \geq d^{H_X}.
\end{equation}
If instead \(b\notin H_Z\), a similar argument shows that \(d \geq d^{H_Z}\). Conversely, suppose that \(a\in \left(H_X+H_Z^{\theta
}\right)\setminus H_X\) satisfies \(\wt(a)=\min \left\{ d^{H_X}, d^{H_Z} \right\}\). Then
\begin{equation}
d \leq \swt(a,0) = \wt(a) = \min \left\{ d^{H_X}, d^{H_Z} \right\}.
\end{equation}
If instead \(b\in {\parenth{H_Z+H_X^{\theta}}}\setminus H_Z\) satisfies \(\wt(b)=\min \left\{ d^{H_X}, d^{H_Z} \right\}\), a similar argument shows that \(d\leq \min \left\{ d^{H_X}, d^{H_Z} \right\}\).

\subsection{Proof of Proposition \ref{prop:ClassicalLinearCodeCorrectability}}\label{app:proofblah}

Since we know \(F a\), we can determine \(a+K\) by the first isomorphism theorem. By assumption, there exists $r \in R$ such that \(\wt(a+r) < \frac{d^R}{2}\). Thus, we can identify some $a+k \in a+K$ with $\wt(a+k) < \frac{d^R}{2}$, where $k\in K$ is unknown. By the triangle inequality, we have
\begin{equation}
\begin{aligned}
        \wt (r - k) &= \wt ((a+r) - (a+k)) \\&\leq \wt(a+r) + \wt(a+k) \\&< d^R \\&= \min \wt(K \setminus R).
\end{aligned}
\end{equation}
Thus, since $r - k \in K$, we must have $r - k \in R$, so $k \in R$.

\subsection{Proof of Proposition \ref{prop:GoursatProperties}} \label{app:ProofGoursatProperties}

On the one hand, observe that
\begin{equation}
\begin{aligned}
\phi \left(\sum_j c_j{e_X^j}+{n_X}+{N_X}\right) &= \sum_j c_j\phi \left({e_X^j}+{N_X}\right) \\
&= \sum_j c_j\left({e_Z^j}+{N_Z}\right) \\
&= \sum_j c_j{e_Z^j}+{n_Z}+{N_Z}.
\end{aligned}
\end{equation}
Conversely, suppose that \(\left({e_X},{e_Z}\right)\in {E_X}\times {E_Z}\) satisfies \(\phi \left({e_X}+{N_X}\right)={e_Z}+{N_Z}\). Write
\begin{equation}
    {e_X}+{N_X}=\sum_j c_j\left({e_X^j}+{N_X}\right)
\end{equation}
and
\begin{equation}
    {e_Z}+{N_Z}=\sum_j d_j\left({e_Z^j}+{N_Z}\right)
\end{equation}
for some \(c_j,d_j\in \dsF_p\). Then
\begin{equation}
    \sum_j c_j\left({e_Z^j}+{N_Z}\right)=\sum_j d_j\left({e_Z^j}+{N_Z}\right),
\end{equation}
so \(c_j=d_j\) for all \(j\). Then
\begin{equation}
    {e_X}=\sum_j c_j{e_X^j}+{n_X}
\end{equation}
and
\begin{equation}
    {e_Z}=\sum_j c_j{e_Z^j}+{n_Z}
\end{equation}
for some \({n_X}\in {N_X}\) and \({n_Z}\in {N_Z}\), so Eq.~(\ref{eq:GoursatHExplicit}) holds.

\subsection{Proof of Proposition \ref{prop:ExplicitGoursat}}\label{app:ProofExplicitGoursat}
One verifies Eq.~(\ref{eq:ExplicitGoursat}) from the definitions in Eq.~(\ref{eq:defsr}). To prove Eq.~(\ref{eq:KernelSum}), we need the following technical lemma.

\begin{lemma}
Let \(W\leq V\leq U\), and let \(\pi :U\to X\) be any linear map. Then \(\pi (W)=\pi (V)\) iff \((V\cap \ker \pi )+W=V\).
\end{lemma}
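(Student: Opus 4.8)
The plan is to prove both implications directly, exploiting the fact that one containment in each equality is automatic, so that the lemma reduces to an equivalence between two single containments. First I would record the two trivial inclusions. Since $W \leq V$, applying $\pi$ gives $\pi(W) \subseteq \pi(V)$ unconditionally, so the hypothesis $\pi(W) = \pi(V)$ is equivalent to the reverse containment $\pi(V) \subseteq \pi(W)$. Likewise, because $V \cap \ker\pi \leq V$ and $W \leq V$, their sum satisfies $(V \cap \ker\pi) + W \leq V$ unconditionally, so the condition $(V \cap \ker\pi) + W = V$ is equivalent to $V \leq (V \cap \ker\pi) + W$. Thus it suffices to prove that $\pi(V) \subseteq \pi(W)$ holds if and only if $V \leq (V \cap \ker\pi) + W$.

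For the forward direction, I would assume $\pi(V) \subseteq \pi(W)$ and take an arbitrary $v \in V$. Then $\pi(v) \in \pi(V) \subseteq \pi(W)$, so there is some $w \in W$ with $\pi(v) = \pi(w)$, whence $\pi(v - w) = 0$, i.e.\ $v - w \in \ker\pi$. Since $v \in V$ and $w \in W \leq V$, we also have $v - w \in V$, so $v - w \in V \cap \ker\pi$. Writing $v = (v - w) + w$ then exhibits $v$ as an element of $(V \cap \ker\pi) + W$, giving the desired inclusion.

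For the reverse direction, I would assume $V \leq (V \cap \ker\pi) + W$ and take $v \in V$. Decomposing $v = u + w$ with $u \in V \cap \ker\pi$ and $w \in W$, applying $\pi$ yields $\pi(v) = \pi(u) + \pi(w) = \pi(w) \in \pi(W)$, since $\pi(u) = 0$. Hence $\pi(V) \subseteq \pi(W)$, completing the equivalence.

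Honestly there is no serious obstacle here: this is a routine diagram-chase, and the only point meriting care is the reduction in the first paragraph, where one must notice that both equalities in the statement have a ``free'' containment built in from $W \leq V$, so that each side amounts to a single nontrivial inclusion. Once that observation is made, the two lifting arguments above are immediate and symmetric.
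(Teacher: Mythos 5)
Your proof is correct and follows essentially the same route as the paper's: both directions are handled by the same elementary lifting argument, writing $v = (v-w) + w$ in the forward direction and applying $\pi$ to the decomposition $v = u + w$ in the reverse. The only difference is that you make the reduction to single containments explicit, which the paper leaves implicit.
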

\begin{proof}
Suppose that \(\pi (W)=\pi (V)\). Fix \(v\in V\). Then there exists \(w\in W\) such that \(\pi (v)=\pi (w)\). Then \(v=(v-w)+w\), where \(v-w\in V\cap \ker \pi\). Conversely, suppose that \((V\cap \ker \pi )+W=V\). Fix \(v\in V\). Then there exist \(k\in V\cap \ker \pi\) and \(w\in W\) such that \(v=k+w\). Then \(\pi (v)=\pi (w)\). Thus, \(\pi (V)\subseteq \pi (W)\), as needed.
\end{proof}

Now, recall that $H$ is a direct product iff $\pi_X{\parenth{\ker \pi_Z}} = \pi_X{\parenth{\dsF_p^l}}$. Applying the above lemma with $U=V=\dsF_p^l$, $W = \ker \pi_Z$, and $\pi = \pi_X$, we obtain Eq.~(\ref{eq:KernelSum}).

\subsection{Proof of Proposition \ref{prop:GoursatComplement}} \label{app:ProofGoursatComplement}
Fix a basis \(\left\{{e_X^j}+{N_X}\right\}_j\) for \({E_X}/{N_X}\). Define the function
\begin{equation}
\begin{aligned}
\tilde{\phi}_X :{N_X^\theta} &\to {E_X}/{N_X} \\
{n_X^\theta} &\mapsto \sum_j \theta \left({e_X^j},{n_X^\theta} \right)\left({e_X^j}+{N_X}\right).
\end{aligned}
\end{equation}
Since \({n_X^\theta} \in {N_X^\theta} \), the function \(\tilde{\phi}_X\) is independent of the choice of representatives \({e_X^j}\) for the cosets
\({e_X^j}+{N_X}\). Since \(\theta\) is bilinear, \(\tilde{\phi}_X\) is linear. Now, if \({n_X^\theta} \in {E_X^\theta} \), then \(\tilde{\phi}_X\left({n_X^\theta}\right)={N_X}\). Conversely, suppose that \(\tilde{\phi}_X\left({n_X^\theta} \right)={N_X}\). Since the \({e_X^j}+{N_X}\) are linearly independent, we have \(\theta \left({e_X^j},{n_X^\theta} \right)=0\) for all \(j\). Since every \({e_X}\in {E_X}\) can be written as
\begin{equation}{e_X}=\sum_j c_j{e_X^j}+{n_X}\end{equation}
for some \(c_j\in \dsF_p\) and \({n_X}\in {N_X}\), we find that \({n_X^\theta} \in {E_X^\theta} \). Thus, \(\ker \tilde{\phi}_X={E_X^\theta} \).
Then \(\tilde{\phi}_X\) induces the injection
\begin{equation}\label{eq:PhiXDef}
\begin{aligned}
\phi_X : {N_X^\theta} / {E_X^\theta} &\to {E_X} / {N_X} \\
{n_X^\theta} + {E_X^\theta} &\mapsto \tilde{\phi}_X\left({n_X^\theta}\right).
\end{aligned}
\end{equation}
In fact, since \(\dim {N_X^\theta} /{E_X^\theta} =\dim {E_X}/{N_X}\), the map \(\phi_X\) is an isomorphism.

Similarly, given the basis
\begin{equation}
    \left\{\phi
\left({e_X^j}+{N_X}\right) \eqqcolon {e_Z^j}+{N_Z}\right\}_j
\end{equation}
for \({E_Z}/{N_Z}\), there exists an isomorphism \(\phi_Z:{N_Z^\theta} /{E_Z^\theta} \to {E_Z}/{N_Z}\), where
\begin{equation}\label{eq:PhiZDef}
\phi_Z\left({n_Z^\theta} +{E_Z^\theta} \right)\coloneqq \sum_j \theta \left({e_Z^j},{n_Z^\theta} \right)\left({e_Z^j}+{N_Z}\right).
\end{equation}

Now, define \(\phi ^{\om} \coloneqq \phi_X^{-1}\phi ^{-1}\phi_Z\). Fix \(\left({n_Z^\theta} ,{n_X^\theta} \right)\in {N_Z^\theta}\times {N_X^\theta} \). By definition, we have
\begin{equation}\phi ^{\om} \left({n_Z^\theta} +{E_Z^\theta} \right) = {n_X^\theta} +{E_X^\theta}\end{equation}
iff
\begin{equation}\phi_Z\left({n_Z^\theta} +{E_Z^\theta} \right) = \phi \left( \phi_X\left({n_X^\theta} +{E_X^\theta} \right)\right).
\end{equation}
By Eqs.~(\ref{eq:PhiXDef}) and (\ref{eq:PhiZDef}), this is equivalent to
\begin{equation}\begin{aligned}
    \sum_j \theta \left({e_Z^j},{n_Z^\theta} \right)&\left({e_Z^j}+{N_Z}\right) \\ = {}&\phi \left(\sum_j \theta \left({e_X^j},{n_X^\theta} \right)\left({e_X^j}+{N_X}\right)\right),
\end{aligned}\end{equation}
which is equivalent to
\begin{equation}\begin{aligned}
    \sum_j \theta \left({e_Z^j},{n_Z^\theta} \right){}&\phi \left({e_X^j}+{N_X}\right) \\ = \sum_j {}&\theta \left({e_X^j},{n_X^\theta} \right)\phi \left({e_X^j}+{N_X}\right).
\end{aligned}\end{equation}
Since the $\phi \left({e_X^j}+{N_X}\right)$ are linearly independent, this is equivalent to
\begin{equation}\forall j, \,  \theta \left({e_Z^j},{n_Z^\theta} \right) = \theta \left({e_X^j},{n_X^\theta} \right),\end{equation}
which by Eq.~(\ref{eq:GoursatHExplicit}) is equivalent to
\begin{equation}\forall \left({e_X},{e_Z}\right)\in H, \, \theta \left({e_Z},{n_Z^\theta} \right) = \theta \left({e_X},{n_X^\theta}\right).\end{equation}
Rewriting, this occurs iff
\begin{equation}\forall \left({e_X},{e_Z}\right)\in H, \, \om \left(\left({e_X},{e_Z}\right),\left({n_Z^\theta} ,{n_X^\theta}\right)\right) = 0,\end{equation}
which is equivalent to $\left({n_Z^\theta} ,{n_X^\theta} \right) \in H^{\om}$. Now, since \({N_X}\times {N_Z}\leq H\), we have \(H^{\om} \leq {N_Z^\theta} \times {N_X^\theta} \). Thus, we find
\begin{equation}H^{\om} = \left\{\left({n_Z^\theta} ,{n_X^\theta} \right)\in {N_Z^\theta} \times {N_X^\theta} \, \middle| \, \phi ^{\om} \left({n_Z^\theta} +{E_Z}^{\theta
}\right)={n_X^\theta} +{E_X^\theta} \right\},\end{equation}
as needed.

\subsection{Proof of Proposition \ref{prop:GoursatIntersection}} \label{app:ProofGoursatIntersection}
Consider the injections \(\phi_T,\phi_W\) defined in the diagram
\begin{widetext}
\begin{equation}
\begin{array}{ccccccc}
 \left({E_X}\cap \tilde{{E_X}}\right)/\left({N_X}\cap \tilde{{N_X}}\right) & \overset{\phi_T}{\hookrightarrow} & \left({E_X}/{N_X}\right)\times \left(\tilde{{E_X}}/\tilde{{N_X}}\right)
& \overset{\phi \times \tilde{\phi}}{\xrightarrow{\sim}} & \left({E_Z}/{N_Z}\right)\times \left(\tilde{{E_Z}}/\tilde{{N_Z}}\right) & \overset{\phi_W}{\hookleftarrow}
& \left({E_Z}\cap \tilde{{E_Z}}\right)/\left({N_Z}\cap \tilde{{N_Z}}\right) \\
 a+\left({N_X}\cap \tilde{{N_X}}\right) & {\mapsto} & \left(a+{N_X},a+\tilde{{N_X}}\right) & & \left(b+{N_Z},b+\tilde{{N_Z}}\right)
& {\mapsfrom} & b+\left({N_Z}\cap \tilde{{N_Z}}\right)
\end{array}.
\end{equation}
\end{widetext}
Moving ``left to right" in the above diagram, define
\begin{equation}\scW\coloneqq \phi_W^{-1}\left(\phi \times \tilde{\phi} \right)\phi_T\left(\left({E_X}\cap \tilde{{E_X}}\right)/\left({N_X}\cap \tilde{{N_X}}\right)\right).\end{equation}
Moving ``right to left" in the above diagram, define
\begin{align}
\scV &\coloneqq \phi_W(\scW), \\
\scU &\coloneqq \left(\phi \times \tilde{\phi} \right)^{-1}(\scV), \\
\scT &\coloneqq \phi_T^{-1}(\scU).
\end{align}
From the definitions, it is straightforward to verify that the appropriate restrictions of \(\phi_T,\phi \times \tilde{\phi} ,\phi_W\) satisfy
\begin{equation}\begin{array}{ccccccc}
 \scT & \overset{\phi_T}{\xrightarrow{\sim}} & \scU & \overset{\phi \times \tilde{\phi}}{\xrightarrow{\sim}} & \scV & \overset{\phi
_W}{\xleftarrow{\sim}} & \scW \\
\end{array}
.\end{equation}
By the lattice isomorphism theorem, there exist subspaces
\begin{equation}\begin{array}{ccc}
 {E_X}\cap \tilde{{E_X}} & & {E_Z}\cap \tilde{{E_Z}} \\
 | & & | \\
 T & \text{and} & W \\
 | & & | \\
 {N_X}\cap \tilde{{N_X}} & & {N_Z}\cap \tilde{{N_Z}} \\
\end{array}\end{equation}
such that
\begin{equation}
\begin{aligned}
\scT&=T\left/\left({N_X}\cap \tilde{{N_X}}\right)\right. \text{and} \\ \scW&=W\left/\left({N_Z}\cap \tilde{{N_Z}}\right).\right.
\end{aligned}
\end{equation}
Now, fix \((a,b)\in T\times W\). Then with
\begin{equation}
    \phi ^{\cap} \coloneqq \phi_W^{-1}\left(\phi \times \tilde{\phi} \right)\phi_T,
\end{equation}
we have
\begin{equation}\phi^{\cap} \left(a+\left({N_X}\cap \tilde{{N_X}}\right)\right) = b+\left({N_Z}\cap \tilde{{N_Z}}\right)\end{equation}
iff
\begin{equation}\left(\phi \times \tilde{\phi}\right)\phi_T\left(a+\left({N_X}\cap \tilde{{N_X}}\right)\right) = \phi_W\left(b+\left({N_Z}\cap \tilde{{N_Z}}\right)\right),\end{equation}
which occurs iff
\begin{equation}\left(\phi \times \tilde{\phi} \right)\left(a+{N_X},a+\tilde{{N_X}}\right) = \left(b+{N_Z},b+\tilde{{N_Z}}\right).\end{equation}
Reading coordinate-wise, this occurs iff
\begin{equation}
\begin{aligned}
    \phi \left(a+{N_X}\right) &= b+{N_Z} \text{ and} \\ \tilde{\phi} \left(a+\tilde{{N_X}}\right) &= b+\tilde{{N_Z}},
\end{aligned}
\end{equation}
which occurs iff
\begin{equation}(a,b)\in H \text{ and } (a,b)\in \tilde{H}.\end{equation}
Moreover, if \((a,b)\in H\cap \tilde{H}\subseteq {E_X}\cap \tilde{{E_X}}\), then by the same computation above, we find
\begin{equation}\begin{aligned}
\left(\phi \times \tilde{\phi}\right)\left(\phi_T\left(a+\left({N_X}\cap \tilde{{N_X}}\right)\right)\right)& \\ = \phi_W\left(b+\left({N_Z}\cap \tilde{{N_Z}}\right)\right)&.
\end{aligned}\end{equation}
That is, \(b+\left({N_Z}\cap
\tilde{{N_Z}}\right)\in \scW\), so \(a+\left({N_X}\cap \tilde{{N_X}}\right)\in \scT\). Then \((a,b)\in T\times W\), so \(H\cap \tilde{H}\subseteq T\times W\). Thus, we have
\begin{equation}
\begin{aligned}
H\cap \tilde{H}=\Big\{(a,b)\in T\times W \, \Big| \, \phi ^{\cap} \Big(&a+\Big({N_X}\cap \tilde{{N_X}}\Big)\Big) \\ = {} &b+\Big({N_Z}\cap \tilde{{N_Z}}\Big)\Big\},
\end{aligned}
\end{equation}
as needed.

\subsection{Proof of Proposition \ref{prop:minimalmaximal}}\label{app:proofminimalmaximal}

Suppose that \(H\) has both maximal and minimal stabilizer. Taking orthogonal complements in the right-hand side factor, we have
\begin{align}
{N_X}\cap {E_Z^\theta} &= {E_X}\cap {N_Z^\theta} \text{ and} \\
{E_X} + {N_Z^\theta} &= {N_X} + {E_Z^\theta}.
\end{align}
On the one hand,
\begin{equation}
\begin{aligned}
\dim {E_X}+\dim {N_Z^\theta} &= \dim \left({E_X}+{N_Z^\theta} \right) + \dim \left({E_X}\cap {N_Z^\theta} \right) \\
&= \dim \left({N_X}+{E_Z^\theta} \right) + \dim \left({N_X}\cap {E_Z^\theta} \right) \\
&= \dim {N_X}+\dim {E_Z^\theta} .
\end{aligned}
\end{equation}
On the other hand,
\begin{align}
\dim {E_X} &\geq \dim {N_X} \text{ and} \\
\dim {N_Z^\theta} &\geq \dim {E_Z^\theta}.
\end{align}
Thus, we have
\begin{equation}\label{eq:farm}
{E_X} = {N_X} \text{ and } {E_Z} = {N_Z},
\end{equation}
as needed. The converse is clear --- assume Eq.~(\ref{eq:farm}) and simplify the right-hand side of Eq.~(\ref{eq:macdonal}).

\section{Generalizing Theorem \ref{thm:mapping} to subsystem stabilizer codes over modular qudits}\label{app:Generalize}

In this appendix, we briefly describe how to generalize Theorem \ref{thm:mapping} to subsystem stabilizer codes over modular qudits. The material in the preliminaries, Section \ref{sec:Prelim}, generalizes as follows \cite{DKV24,TA86,Ellison23}.
\begin{itemize}
    \item The prime $p$ is now an arbitrary integer $m$, so $G$ is now the finite abelian group $(\dsZ / m \dsZ)^n$ with standard generating set $\{(1,0,\ldots,0),\ldots,(0,\ldots,0,1)\}$
    \item The Pauli group modulo phases is still isomorphic to $G \times G$
    \item Eq.~(\ref{eq:xicomplementdim}) now reads as $\abs{H} \abs{H^\xi} = \abs{G}$, where the absolute value denotes group order
    \item A subsystem stabilizer code is now a subgroup $H \leq G \times G$. We call $H$ a subsystem CSS code if $H$ is a direct product of subgroups of $G$
    \item We no longer keep track of the number of logical and gauge qudits in a subsystem stabilizer code. Rather, we keep track of the \emph{logical Pauli group} $\overline{H} \coloneqq \left( H + H^{\omega} \right) / H$ and the \emph{gauge Pauli group} $\underline{H} \coloneqq H / \left(H \cap H^{\omega} \right)$ --- see Eqs.~(\ref{eq:logicalpauli}) and (\ref{eq:gaugepauli}). The logical (resp. gauge) Pauli group is isomorphic to the group of Pauli operators on the logical (resp. gauge) subsystem $\dsL$ (resp. $\dsG$) \cite{Ellison23}
\end{itemize}

Having established the necessary background, we now sketch the generalization of Theorem \ref{thm:mapping}.
\begin{itemize}
    \item Eq.~(\ref{eq:OmegaDimension}) in Lemma \ref{lem:DeltaLemma} now reads as $\abs{\Delta(H)}=\abs{H}^2$, and the analogue of Lemma \ref{lem:DeltaLemma} holds with essentially the same proof
    \item Theorem \ref{thm:mapping} now reads as follows, where the proof is similar
\end{itemize}
\begin{theorem}
Let \(H\leq G\times G\) be a subsystem stabilizer code with logical Pauli group $\overline{H}$, gauge Pauli group $\underline{H}$, and distance $d$. Then $\Delta(H)$ is a subsystem CSS code with logical Pauli group isomorphic to $\overline{H} \times \overline{H}$, gauge Pauli group isomorphic to $\underline{H} \times \underline{H}$, and distance $d'$, where $d \leq d' \leq 2d$. Moreover, if \(H\) admits a collection of generators with symplectic weight at most $w$, then \(\Delta (H)\) admits a collection of generators with symplectic weight at most $2w$.
\end{theorem}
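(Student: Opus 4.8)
The plan is to transcribe the proof of Theorem~\ref{thm:mapping} into the modular-qudit language, replacing each dimension count by the corresponding logical or gauge Pauli group and each prime-field fact by the analogue recorded at the start of this appendix. First I would confirm that the modular-qudit version of Lemma~\ref{lem:DeltaLemma} is available: the join, meet, and $\om$-complement identities (Eqs.~(\ref{eq:OmegaJoin})--(\ref{eq:OmegaComplement})) follow solely from the $\Psi$-identities of Eq.~(\ref{eq:psiprop}) together with the direct-product complement rule Eq.~(\ref{eq:DirectProductOmega}), none of which use primality, while Eq.~(\ref{eq:OmegaDimension}) becomes $\abs{\De(H)}=\abs{H}^2$ because $\Psi$ is a group automorphism of $G\times G$.

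The core step is to apply $\De$ to the tower $G\times G\geq H+H^\om\geq H\geq H\cap H^\om\geq 0$. Because $\De$ is a lattice embedding respecting $\om$-complement, one gets $\De(H)+\De(H)^\om=\De(H+H^\om)$ and $\De(H)\cap\De(H)^\om=\De(H\cap H^\om)$. Since $\De(K)=K\times\Psi(K)$ and quotients of direct products factor, this gives
\begin{equation*}
\overline{\De(H)}=\De(H+H^\om)/\De(H)\cong\overline{H}\times\overline{H},
\end{equation*}
where the right factor arises because $\Psi$ carries $(H+H^\om)/H$ isomorphically onto $\Psi(H+H^\om)/\Psi(H)$; the identical computation gives $\underline{\De(H)}\cong\underline{H}\times\underline{H}$. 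That $\De(H)=H\times\Psi(H)$ is a direct product of two subgroups of $G\times G$ is precisely the modular-qudit definition of a subsystem CSS code.

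For the distance, I would observe that the distance formula of Proposition~\ref{prop:CSSStructure} (Eq.~(\ref{eq:SubsystemCSSD})) rests only on the tower Eq.~(\ref{eq:HTowerCSS}) and on weight comparisons, so it survives the passage to modular qudits. Applying it with $H_X=H$ and $H_Z=\Psi(H)$, then using Eqs.~(\ref{eq:PsiProperties3}) and (\ref{eq:PsiProperties4}) together with the weight-invariance of $\Psi$, collapses both $d^{H_X}$ and $d^{H_Z}$ to $\min\wt((H+H^\om)\setminus H)$, so $d'=\min\wt((H+H^\om)\setminus H)$. The elementary bound $\swt(x)\leq\wt(x)\leq 2\swt(x)$ on $G\times G$, minimized over $(H+H^\om)\setminus H$, then yields $d\leq d'\leq 2d$. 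Finally, if $T$ generates $H$, then $(T\times 0)\cup(0\times\Psi(T))$ generates $\De(H)$, and the position-counting from Theorem~\ref{thm:mapping} bounds each generator's symplectic weight by twice that of its preimage.

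I do not expect a genuinely new difficulty; the work lies in checking that each invoked prime-field fact survives. The most delicate point is the form theory used implicitly throughout---that $\Psi$ remains a weight-preserving automorphism and that the complement identities (equivalently $\abs{H}\abs{H^\xi}=\abs{G}$) hold over $G=\parenth{\dsZ/m\dsZ}^n$---since these underpin both the $\Psi$-identities of Eq.~(\ref{eq:psiprop}) and the distance half of Proposition~\ref{prop:CSSStructure}. Once these are in hand, the argument is formally identical to the prime case.
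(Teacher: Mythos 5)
Your proposal is correct and follows essentially the same route as the paper, which itself only states that the modular-qudit proof "is similar" to that of Theorem~\ref{thm:mapping}: you apply the lattice-embedding properties of $\De$ to the tower, identify $\overline{\De(H)}\cong\overline{H}\times\overline{H}$ and $\underline{\De(H)}\cong\underline{H}\times\underline{H}$ from the factorization $\De(K)=K\times\Psi(K)$, and reuse the distance argument via Proposition~\ref{prop:CSSStructure} and the bound $\swt(x)\leq\wt(x)\leq 2\swt(x)$. Your replacement of the dimension counts by direct isomorphisms of the logical and gauge Pauli groups, and your flagging of the duality identities $\abs{H}\abs{H^{\xi}}=\abs{G}$ over $\parenth{\dsZ/m\dsZ}^{n}$ as the one point needing care, fill in exactly the details the paper leaves implicit.
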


\section{An alternative classical step for error recovery}\label{app:Par}

In this appendix, we explore an alternative to the classical step for error recovery described in Section \ref{sec:CSSRecoveryClassical}. To begin, consider the commutative diagram of canonical quotients
\begin{equation}
    \begin{tikzcd}
    G \arrow[dd,"\overline{*}"] \arrow[dr, "\Syn_X"] \\
    & G/{\parenth{H_X + H_Z^\theta}}\\
    G/H_X \arrow[ur, swap,"\Par_X"]
\end{tikzcd}.
\end{equation}
In Section \ref{sec:CSSRecoveryClassical}, we used the parity check $\Syn_X$ of the classical linear code $H_X + H_Z^\theta$ to recover an error class $a+H_X$ from its syndrome $a+\parenth{H_X+H_Z^{\theta}}$. Here, we investigate the utility of $\Par_X$ for the same task --- this line of inquiry is motivated by the Bacon--Shor decoder \cite{B06}, which utilizes $\Par_X$ for error recovery (see Example \ref{ex:baconshor}). We will need the following notation.

Let $\sigma \subseteq G$ denote the standard basis for $G$. Let $\sigma_0$ be any subset of $\sigma$ such that $\overline{\sigma_0}$ is a basis for $G/H_X$, where the over-line denotes quotient by $H_X$. For any $a+H_X \in G/H_X$, we write $\wt_{\overline{\sigma_0}}(a+H_X)$ to denote the weight of $a+H_X$ in the basis $\overline{\sigma_0}$, i.e., the number of nonzero coordinates of $a+H_X$ in the basis $\overline{\sigma_0}$.

Now, recall that by Proposition \ref{prop:ClassicalLinearCodeCorrectability}, $\Syn_X$ can recover the error class $a+H_X$ from its syndrome $a+\parenth{H_X+H_Z^{\theta}}$ provided that
\begin{equation}
    \min \wt(a+H_X) < \frac{d^{H_X}}{2}. \label{eq:SynSuff}
\end{equation}
Similarly, $\Par_X$ can recover the error \(a+H_X\) from its syndrome \(a+\parenth{H_X+H_Z^{\theta}}\) provided that
\begin{equation}
    \wt_{\overline{\sigma_0}}\left(a+H_X\right) < \frac{d^{\Par_X}}{2}, \label{eq:ParSuff}
\end{equation}
where
\begin{equation}
d^{\Par_X}\coloneqq \min \wt_{\overline{\sigma_0}}\left({\parenth{H_X+H_Z^{\theta}}}/H_X\right)
\end{equation}
is the distance of $\Par_X$ in the basis $\overline{\sigma_0}$. Our first observation is that the notions of weight and distance in Eqs.~(\ref{eq:SynSuff}) and (\ref{eq:ParSuff}) are related.

\begin{proposition}\label{prop:CompareSynPar}
We have
\begin{equation}
    \min \wt(a+H_X) \leq \wt_{\overline{\sigma_0}}\left(a+H_X\right)
\end{equation}
for all $a \in G$. Moreover, we have
\begin{equation}
        d^{H_X} \leq d^{\Par_X}.
\end{equation}
\end{proposition}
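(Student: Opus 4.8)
The plan is to prove the pointwise weight inequality first and then obtain the distance inequality by minimizing both sides over the same family of cosets.

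First I would bound $\min \wt(a+H_X)$ by exhibiting a low-weight representative of the coset $a + H_X$. Since $\overline{\sigma_0}$ is a basis for $G / H_X$, I can expand $a + H_X = \sum_{s \in \sigma_0} c_s \parenth{s + H_X}$ for unique scalars $c_s$, so that $\wt_{\overline{\sigma_0}}\parenth{a+H_X}$ is by definition the number of nonzero $c_s$. Consider the lift $\tilde{a} \coloneqq \sum_{s \in \sigma_0} c_s\, s \in G$. Because $\sigma_0 \subseteq \sigma$ consists of distinct standard basis vectors, which occupy distinct coordinates of $G$, the Hamming weight $\wt(\tilde{a})$ equals the number of nonzero $c_s$, i.e.\ $\wt(\tilde{a}) = \wt_{\overline{\sigma_0}}\parenth{a+H_X}$. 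Since $\tilde{a} + H_X = a + H_X$, the representative $\tilde{a}$ lies in the coset, giving $\min \wt(a+H_X) \leq \wt(\tilde{a}) = \wt_{\overline{\sigma_0}}\parenth{a+H_X}$, which is the first claim.

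For the distance inequality, I would note that both $d^{H_X}$ and $d^{\Par_X}$ are minima indexed by the nonzero cosets of $\parenth{H_X + H_Z^\theta}/H_X$. Indeed, because $H_X \leq H_X + H_Z^\theta$, the set $\parenth{H_X + H_Z^\theta}\setminus H_X$ is exactly the union of the nonzero cosets of $H_X$ in $H_X + H_Z^\theta$ (adding an element of $H_X$ to a vector outside $H_X$ keeps it outside $H_X$), so I can rewrite
\[
d^{H_X} = \min_{a \,\in\, \parenth{H_X + H_Z^\theta}\setminus H_X} \wt(a) = \min_{0 \,\neq\, a+H_X} \min \wt(a+H_X),
\]
whereas $d^{\Par_X} = \min_{0 \neq a+H_X} \wt_{\overline{\sigma_0}}\parenth{a+H_X}$ by definition, with both minima ranging over the same nonzero cosets. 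Applying the pointwise inequality term by term and then taking the minimum over all such cosets yields $d^{H_X} \leq d^{\Par_X}$.

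I expect no serious obstacle, as the argument is essentially bookkeeping. The one point requiring care is the re-indexing in the displayed equation: one must verify that minimizing the Hamming weight over $\parenth{H_X + H_Z^\theta}\setminus H_X$ agrees with minimizing the coset-minimum weight over nonzero cosets, which relies on the observation that translating a vector outside $H_X$ by an element of $H_X$ keeps it outside $H_X$, so that no coset escapes the complement set. Once this identification is in place, the pointwise bound propagates directly to the two distances.
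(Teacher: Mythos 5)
Your proposal is correct and follows essentially the same route as the paper: both parts hinge on lifting the coset expansion $a+H_X=\sum_{\alpha\in\sigma_0}c_\alpha(\alpha+H_X)$ to the representative $\sum_{\alpha\in\sigma_0}c_\alpha\,\alpha$, whose Hamming weight equals $\wt_{\overline{\sigma_0}}(a+H_X)$ because the elements of $\sigma_0$ are distinct standard basis vectors. The paper derives $d^{H_X}\leq d^{\Par_X}$ by applying this to a coset achieving $d^{\Par_X}$, while you minimize the pointwise bound over all nonzero cosets after re-indexing $d^{H_X}$ as a double minimum; these are the same argument phrased slightly differently, and your re-indexing step is justified exactly as you say.
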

\begin{proof}
First, note that for any $a \in G$, we have
\begin{equation}
    a+H_X=\sum_{\alpha \in {\sigma_0}} a_{\alpha} \left(\alpha+H_X\right)
\end{equation}
for some $a_\alpha \in \dsF_p$, so
\begin{equation}
    a + h_X =\sum_{\alpha \in {\sigma_0}} a_{\alpha} \alpha
\end{equation}
for some $h_X \in H_X$. Thus,
\begin{equation}
    \min \wt(a+H_X) \leq \wt(a+h_X) = \wt_{\overline{\sigma_0}}\left(a+H_X\right).
\end{equation}
Now, suppose in addition that 
\begin{equation}
a+H_X\in {\parenth{H_X+H_Z^{\theta}}}/H_X
\end{equation}
is nonzero and satisfies
\begin{equation}
    \wt_{\overline{\sigma_0}}\left(a+H_X\right) = d^{\Par_X}.
\end{equation}
Then $a+h_X\in {\parenth{H_X+H_Z^{\theta}}}\setminus H_X$, so
\begin{equation}
    d^{H_X} \leq \wt(a+h_X) = \wt_{\overline{\sigma_0}}\left(a+H_X\right) = d^{\Par_X}.
\end{equation}
\end{proof}

Unfortunately, this proposition does not immediately reveal whether $\Par_X$ performs better or worse than $\Syn_X$. Indeed, the distance of $\Par_X$ is nominally lower bounded by the distance of $\Syn_X$, but the notion of error weight relevant for $\Par_X$ is also lower bounded by that for $\Syn_X$. Thus, it is not immediately clear whether one code performs better than the other. To shed light on this issue, we investigate the special case in which the notions of error weight for $\Par_X$ and $\Syn_X$ coincide.

\begin{definition}\label{def:GoodBasisDefinition}
We say that $H_X \leq G$ \emph{respects weight} if there exists $\sigma_0 \subseteq \sigma$ such that \(\overline{\sigma_0}\) is a basis for $G/H_X$ satisfying
\begin{equation}\min \wt(a+H_X) = \wt_{\overline{\sigma_0}}\left(a+H_X\right)\end{equation}
for all \(a\in G\).
\end{definition}

Note that by definition, $H_X$ respects weight iff the notions of weight for $\Syn_X$ and $\Par_X$ in Eqs.~(\ref{eq:SynSuff}) and (\ref{eq:ParSuff}) coincide. In fact, if $H_X$ respects weight, then the distances of $\Syn_X$ and $\Par_X$ also coincide. To see this, note that
\begin{equation}
\begin{aligned}
d^{\Par_X}&= \underset{a\in (H_X+H_Z^{\theta})\setminus H_X}{\min} \wt_{\overline{\sigma_0}}\left(a+H_X\right) \\
&\leq \underset{a\in (H_X+H_Z^{\theta})\setminus H_X}{\min} \wt(a) \\
&= d^{H_X},
\end{aligned}
\end{equation}
so by Proposition \ref{prop:CompareSynPar}, we have $d^{H_X} = d^{\Par_X}$.

Interestingly, it turns out that there are not very many subspaces $H_X \leq G$ that respect weight.

\begin{proposition}
\label{prop:goodcharacterization}
A subspace $H_X \leq G$ respects weight iff $H_X$ admits a basis whose elements have weight at most $2$.
\end{proposition}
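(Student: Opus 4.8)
The plan is to reformulate the ``respects weight'' condition geometrically and then prove the two implications separately, with the substance lying in the reverse direction. \emph{Setup.} Write $e_1,\dots,e_n$ for the standard basis vectors, so $\sigma=\{e_1,\dots,e_n\}$. Fixing $\sigma_0$ with $\overline{\sigma_0}$ a basis of $G/H_X$, let $I$ index the complementary ``pivot'' coordinates. First I would record that $H_X$ then has a unique reduced-echelon basis $\{v_i\}_{i\in I}$ for these pivots, with $v_i=e_i-r_i$ and $r_i$ supported on $\sigma_0$; consequently every coset $a+H_X$ has a unique representative supported on $\sigma_0$, and $\wt_{\overline{\sigma_0}}(a+H_X)$ is exactly the ordinary weight of that representative. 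Since $\min\wt(a+H_X)$ is at most the weight of any representative, ``$H_X$ respects weight via $\sigma_0$'' is equivalent to the assertion that \emph{the $\sigma_0$-supported representative always attains the coset minimum weight}.

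\emph{Forward direction.} Assuming $H_X$ respects weight via $\sigma_0$, I would test the condition on the coset of a single pivot vector $e_i$: its $\sigma_0$-representative is $r_i$, so $\wt_{\overline{\sigma_0}}(e_i+H_X)=\wt(r_i)$, whereas the coset contains $e_i$ itself and thus $\min\wt(e_i+H_X)\le 1$. Respecting weight forces $\wt(r_i)\le 1$, hence $\wt(v_i)=1+\wt(r_i)\le 2$. The echelon basis $\{v_i\}_{i\in I}$ is then the desired weight-$\le 2$ basis.

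\emph{Reverse direction (induction).} I would induct on $\dim H_X$, the zero case being trivial. Given a weight-$\le 2$ basis, choose one basis vector $b$ and a coordinate $i$ in its support, and use $b$ to clear coordinate $i$ from every other basis vector; a short case check shows this preserves the weight-$\le 2$ property of each vector. This exhibits $H_X=\langle b\rangle\oplus H_X'$, where $H_X'$ lives in $G':=\{x_i=0\}\cong\dsF_p^{\,n-1}$ and inherits a weight-$\le 2$ basis. By induction $H_X'$ respects weight via some $\sigma_0'\subseteq\sigma\setminus\{e_i\}$; I would set $\sigma_0:=\sigma_0'$, making $e_i$ a pivot, and verify via a dimension count together with $\spn(\sigma_0)\cap H_X=0$ that $\overline{\sigma_0}$ is a basis of $G/H_X$.

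\emph{Key estimate and the main obstacle.} It remains to show this $\sigma_0$ respects weight. Writing a general element of $H_X$ as $tb+h'$ with $h'\in H_X'$, the minimization over $H_X$ splits into two branches according to whether coordinate $i$ of $a+tb+h'$ is zeroed: the ``zero'' branch gives exactly $V_1:=\min\wt_{G'}(\,\cdot\,+H_X')$, which by the inductive hypothesis equals the $\sigma_0$-representative weight $\wt_{\overline{\sigma_0}}(a+H_X)$, while the ``nonzero'' branch costs $+1$ for coordinate $i$ and merely shifts the induced $G'$-coset by a single weight-$1$ vector (the contribution of $b$ at its other support coordinate), giving a value $V_2$. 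The crux — which I expect to be the main obstacle — is to rule out $V_2<V_1$, i.e.\ to show that routing through the eliminated pivot coordinate never produces a strictly lighter representative. This I would settle with the elementary observation that shifting a coset by a weight-$1$ vector changes its minimum weight by at most $1$, so $V_2\ge 1+(V_1-1)=V_1$. Hence the $\sigma_0$-supported representative is always minimal, $H_X$ respects weight, and the induction closes. (If $\wt(b)=1$ the $G'$-coset in the nonzero branch is unchanged, so $V_2\ge 1+V_1$ and the estimate is immediate.)
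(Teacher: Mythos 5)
Your proof is correct. The forward direction is essentially the paper's argument: both test the respects-weight condition on the standard basis vectors indexed by the pivot coordinates and extract the weight-$\le 2$ basis $v_i = e_i - r_i$ (the paper's $\gamma_\beta = \beta - b_\beta\alpha_\beta$). The reverse direction, however, is executed differently. The paper passes to the reduced row echelon form $M'$ of the generator matrix, asserts that its rows again have weight at most $2$, and then concludes directly from the pointwise bound $\wt_{\overline{\sigma_0}}(a+H_X)\le\wt(a)$ combined with Proposition \ref{prop:CompareSynPar}; the claim that row reduction preserves the weight-$\le 2$ property of the whole generating set is stated but not argued in detail there. You instead induct on $\dim H_X$, performing a single elimination step (whose preservation of weight $\le 2$ you do check), splitting the coset minimization into the branch that zeroes the eliminated coordinate and the branch that does not, and closing the gap with the observation that translating a coset by a weight-$1$ vector moves its minimum weight by at most $1$, so $V_2 \ge 1 + (V_1 - 1) = V_1$. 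Your route is longer and the two-branch estimate is more delicate, but it is fully self-contained and never needs the global RREF structure; the paper's route is shorter because, once the echelon form is granted, the inequality $\wt_{\overline{\sigma_0}}(a+H_X)\le\wt(a)$ for all $a$ immediately forces equality of the two weight notions. Both arguments are sound, and yours could be viewed as supplying the justification for the elimination fact that the paper's proof takes for granted.
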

\begin{proof}
$(\implies)$ Fix $\beta \in \sigma_0^c \coloneqq \sigma \setminus \sigma_0$ and write
\begin{equation}
    \beta + H_X = \sum_{\alpha \in {\sigma_0}} a_{\alpha \beta} \left(\alpha + H_X\right),
\end{equation}
where the $a_{\alpha \beta} \in \dsF_p$. If more than one $a_{\alpha \beta}$ is nonzero, then $\wt_{\overline{\sigma_0}}\left(\beta+H_X\right)>1$ but $\wt(\beta) = 1$, a contradiction. Thus, at most one $a_{\alpha \beta}$ is nonzero, so $\beta = b_{\beta} \alpha_\beta + \gamma_\beta$ for some $b_{\beta} \in \dsF_p$, $\alpha_\beta\in \sigma_0$ and $\gamma_\beta \in H_X$. We claim that $\Gamma \coloneqq \left\{\gamma_\beta \, \middle| \, \beta \in \sigma_0^c\right\} \subseteq H_X$ is linearly independent. Indeed, suppose that
\begin{equation}
    \sum_{\beta \in \sigma_0^c} c_\beta \gamma_\beta = 0
\end{equation}
for some $c_\beta \in \dsF_p$. Then
\begin{equation}\label{eq:blahblah}
    \sum_{\beta \in \sigma_0^c} c_\beta \beta = \sum_{\beta \in \sigma_0^c} c_\beta b_{\beta} \alpha_\beta,
\end{equation}
which by linear independence of $\sigma$ implies that $c_\beta=0$ for all $\beta \in \sigma_0^c$. Thus, $\Gamma$ has exactly $\abs{\sigma_0^c} = \dim H_X$ elements, so $\Gamma$ is a basis for $H_X$ whose elements have weight at most $2$.

$(\impliedby)$ Let $M$ be the $\dim H_X \times \dim G$ matrix whose rows are the basis vectors of $H_X$. Let $M'$ be the reduced row echelon form of $M$. Note that the row space of $M'$ is $H_X$. Let ${\sigma_0}$ be the subset of $\sigma$ corresponding to the non-pivot columns of $M'$, so $\sigma_0^c$ corresponds to the pivot columns of $M'$. Since each row of $M$ has weight at most $2$, it follows that for each pivot $\beta \in \sigma_0^c$, there is at most one other nonzero entry (which lies to the right of $\beta$ and is not a pivot) in the corresponding row of $M'$. Call that entry $-b_{\beta} \alpha_\beta$, where $b_{\beta} \in \dsF_p$ and $\alpha_\beta\in \sigma_0$. Now, fix any $a \in G$, and write
\begin{equation}
a = \sum_{\alpha \in \sigma_0
} a_\alpha \alpha + \sum_{\beta \in \sigma_0^c
} a_\beta \beta,
\end{equation}
where the $a_\alpha, a_\beta \in \dsF_p$. Then
\begin{equation}
\begin{aligned}
    a + H_X &= \sum_{\alpha \in \sigma_0
} a_\alpha \left( \alpha + H_X\right) + \sum_{\beta \in \sigma_0^c
} a_\beta \left( \beta + H_X\right)\\
&=\sum_{\alpha \in \sigma_0
} a_\alpha \left( \alpha + H_X\right) + \sum_{\beta \in \sigma_0^c
} a_\beta \left(b_{\beta} \alpha_\beta + H_X\right).
\end{aligned}
\end{equation}
Thus, the weight of $a+H_X$ is at most the number of nonzero $a_\alpha$ plus the number of nonzero $a_\beta$. That is,
\begin{equation}\wt_{\overline{\sigma_0}}\left(a+H_X\right)\leq \wt(a).\end{equation}
Since this holds for any $a \in G$, we conclude that $H_X$ respects weight.
\end{proof}

In the following example, we illustrate the usage of $\Par_X$ to correct errors affecting a subsystem CSS code whose underlying classical linear codes respect weight.

\begin{example}[Bacon--Shor code]\label{ex:baconshor}
The physical qubits of a Bacon--Shor code \cite{B06} are located at the vertices of an \(l\times l\) grid, so we set \(G = \dsF_2^{l^2}\). For concreteness, let $l=4$. Every pair of row-adjacent sites \((i,j),(i,j+1)\) on the grid contributes an \(X\)-type gauge generator \(X^{(i,j)}\otimes X^{(i,j+1)}\), and every pair of column-adjacent sites \((i,j),(i+1,j)\) on the grid contributes a \(Z\)-type gauge generator \(Z^{(i,j)}\otimes Z^{(i+1,j)}\). Thus, the gauge group is the direct product of
\begin{equation}
\begin{aligned}
&H_X \coloneqq \\ \Bigg\langle
&\left(\begin{array}{cccc}
    1 & 1 & 0 & 0 \\
    0 & 0 & 0 & 0 \\
    0 & 0 & 0 & 0 \\
    0 & 0 & 0 & 0 \\
\end{array}\right),
\left(\begin{array}{cccc}
    0 & 1 & 1 & 0 \\
    0 & 0 & 0 & 0 \\
    0 & 0 & 0 & 0 \\
    0 & 0 & 0 & 0 \\
\end{array}\right),
\left(\begin{array}{cccc}
    0 & 0 & 1 & 1 \\
    0 & 0 & 0 & 0 \\
    0 & 0 & 0 & 0 \\
    0 & 0 & 0 & 0 \\
\end{array}\right), \\ &\ldots, \\
&\left(\begin{array}{cccc}
    0 & 0 & 0 & 0 \\
    0 & 0 & 0 & 0 \\
    0 & 0 & 0 & 0 \\
    1 & 1 & 0 & 0 \\
\end{array}\right),
\left(\begin{array}{cccc}
    0 & 0 & 0 & 0 \\
    0 & 0 & 0 & 0 \\
    0 & 0 & 0 & 0 \\
    0 & 1 & 1 & 0 \\
\end{array}\right),
\left(\begin{array}{cccc}
    0 & 0 & 0 & 0 \\
    0 & 0 & 0 & 0 \\
    0 & 0 & 0 & 0 \\
    0 & 0 & 1 & 1 \\
\end{array}\right)
\Bigg\rangle
\end{aligned}
\end{equation}
and
\begin{equation}\label{eq:BaconShorCZ}
\begin{aligned}
&H_Z \coloneqq \\ \Bigg\langle
&\left(\begin{array}{cccc}
    1 & 0 & 0 & 0 \\
    1 & 0 & 0 & 0 \\
    0 & 0 & 0 & 0 \\
    0 & 0 & 0 & 0 \\
\end{array}\right),
\left(\begin{array}{cccc}
    0 & 0 & 0 & 0 \\
    1 & 0 & 0 & 0 \\
    1 & 0 & 0 & 0 \\
    0 & 0 & 0 & 0 \\
\end{array}\right),
\left(\begin{array}{cccc}
    0 & 0 & 0 & 0 \\
    0 & 0 & 0 & 0 \\
    1 & 0 & 0 & 0 \\
    1 & 0 & 0 & 0 \\
\end{array}\right), \\ &\ldots, \\
&\left(\begin{array}{cccc}
    0 & 0 & 0 & 1 \\
    0 & 0 & 0 & 1 \\
    0 & 0 & 0 & 0 \\
    0 & 0 & 0 & 0 \\
\end{array}\right),
\left(\begin{array}{cccc}
    0 & 0 & 0 & 0 \\
    0 & 0 & 0 & 1 \\
    0 & 0 & 0 & 1 \\
    0 & 0 & 0 & 0 \\
\end{array}\right),
\left(\begin{array}{cccc}
    0 & 0 & 0 & 0 \\
    0 & 0 & 0 & 0 \\
    0 & 0 & 0 & 1 \\
    0 & 0 & 0 & 1 \\
\end{array}\right)
\Bigg\rangle.
\end{aligned}
\end{equation}
Note that $H_X$ and $H_Z$ respect weight by Proposition \ref{prop:goodcharacterization}. Now, the complement \(H_X^{\theta} \) corresponds to the collection of all \(Z\)-type Pauli operators which commute with all the \(X\)-type Pauli operators in \(H_X\) (and similarly for \(H_Z^{\theta} \)). Explicitly, we have
\begin{equation}H_X^{\theta} =\Bigg\langle \left(
\begin{array}{cccc}
    1 & 1 & 1 & 1 \\
    0 & 0 & 0 & 0 \\
    0 & 0 & 0 & 0 \\
    0 & 0 & 0 & 0 \\
\end{array}
\right),\ldots ,\left(
\begin{array}{cccc}
    0 & 0 & 0 & 0 \\
    0 & 0 & 0 & 0 \\
    0 & 0 & 0 & 0 \\
    1 & 1 & 1 & 1 \\
\end{array}
\right)\Bigg\rangle\end{equation}
and
\begin{equation}H_Z^{\theta} =\Bigg\langle \left(
\begin{array}{cccc}
    1 & 0 & 0 & 0 \\
    1 & 0 & 0 & 0 \\
    1 & 0 & 0 & 0 \\
    1 & 0 & 0 & 0 \\
\end{array}
\right),\ldots ,\left(
\begin{array}{cccc}
    0 & 0 & 0 & 1 \\
    0 & 0 & 0 & 1 \\
    0 & 0 & 0 & 1 \\
    0 & 0 & 0 & 1 \\
\end{array}
\right)\Bigg\rangle .\end{equation}
Thus, we find that
\begin{equation}
H_X+H_Z^{\theta} = \Bigg\langle H_X,
\left(\begin{array}{cccc}
    1 & 0 & 0 & 0 \\
    1 & 0 & 0 & 0 \\
    1 & 0 & 0 & 0 \\
    1 & 0 & 0 & 0 \\
\end{array}\right)\Bigg\rangle,
\end{equation}
\begin{equation}
H_Z+H_X^{\theta} = \Bigg\langle H_Z,\left(
\begin{array}{cccc}
    1 & 1 & 1 & 1 \\
    0 & 0 & 0 & 0 \\
    0 & 0 & 0 & 0 \\
    0 & 0 & 0 & 0 \\
\end{array}\right)\Bigg\rangle,
\end{equation}
\begin{equation}
\begin{aligned}
&H_X \cap H_Z^{\theta} = \\ & \Bigg\langle \left(
\begin{array}{cccc}
    1 & 1 & 0 & 0 \\
    1 & 1 & 0 & 0 \\
    1 & 1 & 0 & 0 \\
    1 & 1 & 0 & 0 \\
\end{array}\right),
\left(\begin{array}{cccc}
    0 & 1 & 1 & 0 \\
    0 & 1 & 1 & 0 \\
    0 & 1 & 1 & 0 \\
    0 & 1 & 1 & 0 \\
\end{array}\right),
\left(\begin{array}{cccc}
    0 & 0 & 1 & 1 \\
    0 & 0 & 1 & 1 \\
    0 & 0 & 1 & 1 \\
    0 & 0 & 1 & 1 \\
\end{array}\right)
\Bigg\rangle,
\end{aligned}\end{equation}
and
\begin{equation}
\begin{aligned}
&H_Z \cap H_X^{\theta} = \\&\Bigg\langle \left(
\begin{array}{cccc}
    1 & 1 & 1 & 1 \\
    1 & 1 & 1 & 1 \\
    0 & 0 & 0 & 0 \\
    0 & 0 & 0 & 0 \\
\end{array}\right),
\left(\begin{array}{cccc}
    0 & 0 & 0 & 0 \\
    1 & 1 & 1 & 1 \\
    1 & 1 & 1 & 1 \\
    0 & 0 & 0 & 0 \\
\end{array}\right),
\left(\begin{array}{cccc}
    0 & 0 & 0 & 0 \\
    0 & 0 & 0 & 0 \\
    1 & 1 & 1 & 1 \\
    1 & 1 & 1 & 1 \\
\end{array}\right)
\Bigg\rangle.
\end{aligned}\end{equation}
Moreover, we have
\begin{equation}\label{eq:BaconShorBasis}
\begin{aligned}
G/H_X=\Bigg\langle
&\left(\begin{array}{cccc}
    1 & 0 & 0 & 0 \\
    0 & 0 & 0 & 0 \\
    0 & 0 & 0 & 0 \\
    0 & 0 & 0 & 0 \\
\end{array}
\right)+H_X,
\left(\begin{array}{cccc}
    0 & 0 & 0 & 0 \\
    1 & 0 & 0 & 0 \\
    0 & 0 & 0 & 0 \\
    0 & 0 & 0 & 0 \\
\end{array}
\right)+H_X,\\&
\left(\begin{array}{cccc}
    0 & 0 & 0 & 0 \\
    0 & 0 & 0 & 0 \\
    1 & 0 & 0 & 0 \\
    0 & 0 & 0 & 0 \\
\end{array}
\right)+H_X,
\left(\begin{array}{cccc}
    0 & 0 & 0 & 0 \\
    0 & 0 & 0 & 0 \\
    0 & 0 & 0 & 0 \\
    1 & 0 & 0 & 0 \\
\end{array}
\right)+H_X
\Bigg\rangle
\end{aligned}
\end{equation}
and
\begin{equation}\label{eq:BaconShorBasis2}
\begin{aligned}
G/{\parenth{H_X+H_Z^{\theta}}} = \Bigg\langle &\left(\begin{array}{cccc}
    1 & 0 & 0 & 0 \\
    0 & 0 & 0 & 0 \\
    0 & 0 & 0 & 0 \\
    0 & 0 & 0 & 0 \\
\end{array}
\right)+\parenth{H_X+H_Z^\theta},\\
&\left(\begin{array}{cccc}
    0 & 0 & 0 & 0 \\
    1 & 0 & 0 & 0 \\
    0 & 0 & 0 & 0 \\
    0 & 0 & 0 & 0 \\
\end{array}
\right)+\parenth{H_X+H_Z^\theta},\\
&\left(\begin{array}{cccc}
    0 & 0 & 0 & 0 \\
    0 & 0 & 0 & 0 \\
    1 & 0 & 0 & 0 \\
    0 & 0 & 0 & 0 \\
\end{array}
\right)+\parenth{H_X+H_Z^\theta}
\Bigg\rangle .
\end{aligned}
\end{equation}

Now, in the bases of Eqs.~(\ref{eq:BaconShorBasis}) and (\ref{eq:BaconShorBasis2}), the parity check \(\Par_X\) is represented by the ${(l-1) \times l}$ matrix
\begin{equation}\label{eq:ParXMatrixBS}
\Par_X=
\left(\begin{array}{cccc}
    1 & 0 & 0 & 1 \\
    0 & 1 & 0 & 1 \\
    0 & 0 & 1 & 1 \\
\end{array}\right).
\end{equation}
Abstractly, the kernel of \(\Par_X\) is
\begin{equation}{\parenth{H_X+H_Z^{\theta}}}/H_X=\Bigg\langle \left(
\begin{array}{cccc}
    1 & 0 & 0 & 0 \\
    1 & 0 & 0 & 0 \\
    1 & 0 & 0 & 0 \\
    1 & 0 & 0 & 0 \\
\end{array}
\right)+H_X\Bigg\rangle ,\end{equation}
which in the basis of Eq.~(\ref{eq:BaconShorBasis}) is
\begin{equation}\ker \Par_X=\Bigg\langle \left(
\begin{array}{c}
 1 \\
 1\\
 1\\
 1\\
\end{array}
\right)\Bigg\rangle .\end{equation}
Thus, for the Bacon--Shor code, $\Par_X$ is the parity check of the \(l\)-fold repetition code, which coincides with the $l$-fold repetition code described in the original work \cite{B06}. To correct an \(X\)-type error \(a\in G\), we first determine its syndrome \(\Syn_X(a)\) by measuring the $Z$-type gauge generators in Eq.~(\ref{eq:BaconShorCZ}), as described in Section \ref{sec:CSSRecoverySyndrome}. We then express \(\Syn_X(a) = a+\parenth{H_X+H_Z^{\theta}}\) in the basis of Eq.~(\ref{eq:BaconShorBasis2}). Next, by viewing $\Par_X$ as the matrix in Eq.~(\ref{eq:ParXMatrixBS}), we compute \(\left(a+H_X\right) + \ker \Par_X\) in the basis of Eq.~(\ref{eq:BaconShorBasis}). Now, suppose that the original \(X\)-type error \(a\) has odd weight in less than \(\frac{l}{2}\) rows. Then \(a+H_X\) has weight less than \(\frac{l}{2}\) in the basis of Eq.~(\ref{eq:BaconShorBasis}). Thus, since the distance of \(\Par_X\) is \(l\), we can determine \(a+H_X\) in the basis of Eq.~(\ref{eq:BaconShorBasis}). This allows us to correct the $X$-type error $a$ up to gauge terms.
\end{example}

\section{Parameterizing the codespace of a subsystem CSS code}\label{app:parameterize}

In this appendix, we parameterize the codespace of a subsystem CSS code $H = H_X \times H_Z$ in terms of the underlying classical codes $H_X$ and $H_Z$. The main result is as follows.

\begin{proposition}
Let $H = H_X \times H_Z$ be a subsystem CSS code. Write
\begin{align}
    {\parenth{H_X + H_Z^\theta}} / H_X &= {\set{l_i + H_X}}_i \text{ and} \\
    H_X / {\parenth{H_X \cap H_Z^\theta}} &= {\set{g_j + H_X \cap H_Z^\theta}}_j,
\end{align}
where all elements on the right-hand side are distinct. For any $a \in G$, write
\begin{equation}
    \ket{a + H_X \cap H_Z^\theta} = \sum_{s \in H_X \cap H_Z^\theta} \ket{a + {s}}.
\end{equation}
Then the codespace of $H$ is
\begin{equation}
    \dsL \otimes \dsG = \spn {\set{\ket{l_i + g_j + H_X \cap H_Z^\theta}}_{i,j}}.
\end{equation}
\end{proposition}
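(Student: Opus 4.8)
The plan is to characterize the codespace $\dsL \otimes \dsG$ as the common $+1$-eigenspace of the stabilizer group and then exhibit the proposed states as a basis for it. By Proposition \ref{prop:CSSStructure}, the stabilizer group $H \cap H^\om$ of $H = H_X \times H_Z$ equals $(H_X \cap H_Z^\theta) \times (H_Z \cap H_X^\theta)$, so up to phases it is generated by the pure Paulis $X^{s_X}$ with $s_X \in H_X \cap H_Z^\theta$ together with $Z^{s_Z}$ with $s_Z \in H_Z \cap H_X^\theta$; these commute and carry trivial phase, and by Eq.~(\ref{eq:StabilizerSpaceDecomposition}) the codespace is exactly their simultaneous fixed space. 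I would first establish that each proposed state lies in the codespace, and then match dimensions.

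For the containment, fix $a \in H_X + H_Z^\theta$ and consider $\ket{a + H_X \cap H_Z^\theta}$. Each $X$-type stabilizer $X^{s_X}$ with $s_X \in H_X \cap H_Z^\theta$ merely permutes the summands, since $s + s_X$ runs over $H_X \cap H_Z^\theta$ as $s$ does; hence the state is $X$-invariant regardless of $a$. For a $Z$-type stabilizer $Z^{s_Z}$ with $s_Z \in H_Z \cap H_X^\theta$, I would compute $Z^{s_Z}\ket{a+s} = e^{\frac{2\pi i}{p}\theta(s_Z,a+s)}\ket{a+s}$ and note that $\theta(s_Z,s) = 0$ because $s_Z \in H_X^\theta$ and $s \in H_X$; the surviving phase $e^{\frac{2\pi i}{p}\theta(s_Z,a)}$ is uniform, so the action is scalar. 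This scalar is trivial for every such $s_Z$ precisely when $a \in (H_Z \cap H_X^\theta)^\theta = H_Z^\theta + H_X$, which holds by hypothesis. Thus $\ket{a + H_X \cap H_Z^\theta}$ is stabilized whenever $a \in H_X + H_Z^\theta$, and in particular each $a = l_i + g_j \in (H_X + H_Z^\theta) + H_X = H_X + H_Z^\theta$ yields a codeword.

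Next I would count. By the third isomorphism theorem applied to the tower $H_X \cap H_Z^\theta \leq H_X \leq H_X + H_Z^\theta$, the assignment $(i,j) \mapsto l_i + g_j + (H_X \cap H_Z^\theta)$ is a bijection onto the cosets of $H_X \cap H_Z^\theta$ in $H_X + H_Z^\theta$: distinct pairs give distinct cosets because the $l_i$ are inequivalent modulo $H_X$ and the $g_j$ are inequivalent modulo $H_X \cap H_Z^\theta$. States $\ket{a + H_X \cap H_Z^\theta}$ attached to distinct cosets are supported on disjoint collections of basis kets $\ket{g}$, hence are orthogonal and linearly independent. By Proposition \ref{prop:CSSStructure} there are $p^k$ choices of $l_i$ and $p^r$ choices of $g_j$, giving $p^{k+r}$ independent codewords.

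Finally, I would match this against the codespace dimension. The stabilizer group $H \cap H^\om$ has $\dsF_p$-dimension $n-k-r$ by Proposition \ref{prop:VSTowerParameters}, so it contributes $n-k-r$ independent commuting generators, cutting $\dim \dsH = p^n$ down to $p^{n-(n-k-r)} = p^{k+r}$. Since the span of the proposed states is a $p^{k+r}$-dimensional subspace of the $p^{k+r}$-dimensional codespace, the two coincide, which is the claim. The only steps requiring genuine care are the complement identity $(H_Z \cap H_X^\theta)^\theta = H_X + H_Z^\theta$, which pins down exactly which $a$ produce codewords, and confirming that the two dimension counts agree; everything else is bookkeeping with cosets.
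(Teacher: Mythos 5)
Your proposal is correct and follows essentially the same route as the paper's proof: verify that each state $\ket{l_i + g_j + H_X \cap H_Z^\theta}$ is fixed by the stabilizer group $(H_X \cap H_Z^\theta)\times(H_Z\cap H_X^\theta)$, then match the count $p^{k}p^{r}$ of distinct (hence orthogonal, since disjointly supported) states against the codespace dimension. The only difference is that you spell out the stabilizer-invariance computation and the identity $(H_Z\cap H_X^\theta)^\theta = H_X + H_Z^\theta$ explicitly, which the paper leaves as ``one verifies.''
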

\begin{proof}
Say that $H$ has parameters $[[n,k,r,d]]$. From the standard theory of stabilizer codes, the codespace $\dsL \otimes \dsG$ has dimension $p^k p^r$ \cite{Scherer2019}. Moreover, one verifies that every state of the form
\begin{equation}
    \ket{l_i + g_j + H_X \cap H_Z^\theta}
\end{equation}
is fixed by every stabilizer of $H$ and so belongs to the codespace $\dsL \otimes \dsG$. Thus, we must show that the set
\begin{equation}
    \set{\ket{l_i + g_j + H_X \cap H_Z^\theta}}_{i,j}
\end{equation}
contains exactly $p^k p^r$ distinct elements. To see this, note that by Proposition \ref{prop:CSSStructure}, the index $i$ ranges over an index set of size $p^k$, and the index $j$ ranges over an index set of size $p^r$. Moreover, suppose that
\begin{equation}
    \ket{l_i + g_j + H_X \cap H_Z^\theta} = \ket{l_{i'} + g_{j'} + H_X \cap H_Z^\theta}.
\end{equation}
Then
\begin{equation}
    l_i + g_j + H_X \cap H_Z^\theta = l_{i'} + g_{j'} + H_X \cap H_Z^\theta,
\end{equation}
which implies
\begin{equation}
    l_i + H_X = l_{i'} + H_X.
\end{equation}
Since the $l_i + H_X $ are distinct by assumption, we must have $l_i = l_i'$, so $i = i'$. This implies that
\begin{equation}
    g_j + H_X \cap H_Z^\theta = g_{j'} + H_X \cap H_Z^\theta,
\end{equation}
so we conclude that $g_j = g_{j'}$, so $j = j'$.
\end{proof}

\bibliographystyle{plainnat}
\bibliography{main}

\end{document}